\documentclass{article}

\PassOptionsToPackage{numbers,compress}{natbib}




\usepackage[preprint]{neurips_2021}
\usepackage[utf8]{inputenc} 
\usepackage[T1]{fontenc}    
\usepackage{hyperref}       
\usepackage{url}            
\usepackage{booktabs}       
\usepackage{amsfonts}       
\usepackage{nicefrac}       
\usepackage{microtype}      
\usepackage{wrapfig}
\pdfoutput=1

\usepackage{url}
\usepackage{amsmath}
 
\usepackage{amsthm}
\usepackage{algorithm}
\usepackage{algorithmicx}
\usepackage[noend]{algpseudocode}
\usepackage{verbatim}
\usepackage{graphicx}
\usepackage[space]{grffile}
\usepackage{tikz}
\usepackage{subfigure}
\usetikzlibrary{arrows}
\newtheorem{theorem}{Theorem}

\newtheorem{lemma}{Lemma}

\theoremstyle{definition}

\newtheorem*{deff*}{Definition}
\newtheorem{claim}{Claim}

\graphicspath{{figures/}}
\usepackage{amsmath}
\usepackage{mathrsfs}
\usepackage{amssymb}
\usepackage{booktabs}
\usepackage{changepage}
\usepackage{xspace}
\usepackage{hyperref}
\usepackage{thmtools}
\usepackage{thm-restate}

\usepackage{tikz}
\usetikzlibrary{arrows}

\usepackage{xcolor}
\hypersetup{
    colorlinks,
    linkcolor={red!80!black},
    citecolor={blue!60!black},
    urlcolor={blue!80!black}
}


\newcommand{\ie}{\textit{i.e.}\xspace}
\newcommand{\eg}{\textit{e.g.}\xspace}

\newcommand{\ex}[1]{ \mathbb{E} \left[ #1 \right] }

\newcommand{\prob}[1]{ Pr \left( #1 \right) }

\newcommand{\Lamb}[0]{\Lambda}
\newcommand{\lamb}[0]{\lambda}
\newcommand{\epsi}[0]{ \varepsilon }
\newcommand{\reals}[0]{ \mathbb{R}^+ }

\newcommand{\ff}[1]{ f \left( #1 \right) }

\newcommand{\fracc}[2]{ #1 / #2 }


\newcommand{\threshold}{\textsc{ThresholdSeq}\xspace}
\newcommand{\ts}{\textsc{TS}\xspace}
\newcommand{\fastlinear}{\textsc{LinearSeq}\xspace}
\newcommand{\bsm}{\textsc{BinarySearchMaximization}\xspace}
\newcommand{\exm}{\textsc{ExhaustiveMaximization}\xspace}
\newcommand{\fastlinearshort}{\textsc{LS}\xspace}
\newcommand{\ls}{\textsc{LS}\xspace}
\newcommand{\linearseq}{\textsc{LinearSeq}\xspace}
\newcommand{\low}{\textsc{LowAdapLinearSeq}\xspace}
\newcommand{\boost}{\textsc{ParallelGreedyBoost}\xspace}
\newcommand{\boostshort}{\textsc{PGB}\xspace}
\newcommand{\flsabr}{\textsc{LS+PGB}\xspace}

\newcommand{\plg}{\texttt{ParallelLazyGreedy}\xspace}
\newcommand{\oh}[1]{O\left( #1 \right)}

\newcommand\numberthis{\addtocounter{equation}{1}\tag{\theequation}}

\newcommand{\fast}{\textsc{FAST}\xspace}

\newcommand{\opt}{\textsc{OPT}\xspace}

\newcommand{\sm}{\textsc{SM}\xspace}
\newcommand{\marge}[2]{\Delta \left( #1 \, | \, #2 \right) }

\newcommand{\rev}{black}


\allowdisplaybreaks[4]

\usepackage[utf8]{inputenc} 
\usepackage[T1]{fontenc}    
\usepackage{hyperref}       
\usepackage{url}            
\usepackage{booktabs}       
\usepackage{amsfonts}       
\usepackage{nicefrac}       
\usepackage{microtype}      
\setlength{\textfloatsep}{4mm}

\title{Best of Both Worlds: Practical and Theoretically Optimal Submodular Maximization in Parallel}

%

\author{%
  Yixin Chen \\ 
  Department of Computer Science \& Engineering\\
  Texas A\&M University\\
  College Station, Texas \\
  \texttt{chen777@tamu.edu} \\
  \And
  Tonmoy Dey \\ 
  Department of Computer Science\\
  Florida State University\\
  Tallahassee, Florida \\
  \texttt{td18d@my.fsu.edu} \\
  \And 
  Alan Kuhnle \\
  Department of Computer Science \& Engineering\\
  Texas A\&M University\\
  College Station, Texas \\
  \texttt{kuhnle@tamu.edu} \\
}

\begin{document}

\maketitle

\begin{abstract}
    For the problem of maximizing
    a monotone, submodular function
    with respect to a cardinality constraint $k$ on a ground set of size $n$,
    we provide an algorithm
    that achieves the state-of-the-art in both
    its empirical performance and its theoretical properties,
    in terms of adaptive complexity,
    query complexity, and approximation ratio;
    that is, it obtains, with high probability,
    query complexity of $\oh{n}$ in expectation,
    adaptivity of $\oh{\log(n)}$, 
    and approximation ratio of nearly $1-1/e$.
    The main algorithm is assembled from two
    components which may be of independent
    interest.
    The first component of our algorithm, \textsc{LinearSeq},
    is useful as a preprocessing algorithm to improve the query
    complexity of many algorithms.
    Moreover, a
    variant of \textsc{LinearSeq} is shown to have adaptive complexity
    of $O( \log (n / k) )$ which is smaller than that of any previous algorithm
    in the literature.
    The second component is a parallelizable thresholding procedure
    \threshold for adding elements
    with gain above a constant threshold.
    Finally, we demonstrate that our main algorithm empirically
    outperforms, in terms of runtime, adaptive rounds, total queries, and objective values, 
    the previous state-of-the-art algorithm \textsc{FAST} 
    in a comprehensive evaluation with six submodular objective functions.
\end{abstract}

\color{\rev}
\textbf{Version v3.}
This version fixes two issues in the previous version. 
Firstly, the prefix selection step in both \linearseq (Alg.~\ref{alg:fastlinear}) 
and \threshold (Alg.~\ref{alg:threshold}) might result in a significant loss in the objective value. Specifically, when the size of the prefix selected is small ($< 1/\epsi$), 
adding another bad block can ruin the objective value. 
To address this, we no longer include a bad block if the prefix is small. 
This is a minor change to the algorithm.

Secondly, the analysis of \linearseq incorrectly uses Wald’s equation 
to bound the summation of dependent random variables, 
specifically in the proof of Inequality~\ref{ineq:linear-fail-1}, 
as Wald’s equation does not apply. 
In this version, we bound the probabilities of these events in a different way.

\color{black}

\section{Introduction}
\label{sec:intro}
The cardinality-constrained optimization of a monotone, submodular function
$f:2^{\mathcal N} \to \reals$,
defined on subsets of a ground set $\mathcal N$ of size $n$,
is a general problem formulation that is ubiquitous
in wide-ranging applications, \eg 
video or image summarization
\citep{Mirzasoleiman2018}, 
network monitoring \citep{Leskovec2007},
information gathering \citep{Krause2007},
and MAP Inference for Determinantal
Point Processes \citep{Gillenwater2012},
among many others.
The function $f:2^{\mathcal N} \to \reals$ is submodular iff
for all $S \subseteq T \subseteq \mathcal N$, $x \not \in T$, 
$\marge{x}{T} \le \marge{x}{S}$\footnote{$\marge{x}{S}$ denotes the \textit{marginal gain} of $x$ to $S$: $f(S \cup \{x\}) - f(S)$.}; and
the function $f$ is monotone if $f( S ) \le f(T)$ for all $S \subseteq T$.
In this paper, we study the following submodular maximization problem (SM)
\begin{equation*}
  \text{maximize} f(S), \text{ subject to } |S| \le k, \tag{SM}
\end{equation*}
where $f$ is a monotone, submodular function;
\sm is an NP-hard problem. There has been extensive
effort into the design of approximation algorithms
for \sm over the course of
more than 45 years, \eg \citep{Nemhauser1978a,Conforti1984,Calinescu2007,Horel2016,Kuhnle2020a}. 
For \sm, the optimal ratio
has been shown to be $1 - 1/e \approx 0.63$ \citep{Nemhauser1978a}.

As instance sizes have grown very large,
there has been much effort into the design
of efficient, parallelizable algorithms
for \sm.
Since queries to the objective function
can be very expensive,
the overall efficiency of
an algorithm for \sm is typically measured
by the \emph{query complexity}, or number of
calls made to the objective function $f$ \citep{Badanidiyuru2014,Breuer2019}.
The degree of parallelizability can be measured by the
\emph{adaptive complexity} of an algorithm, which is the minimum
number of rounds into which the queries to $f$ may be 
organized, such that within each round, the queries
are independent and hence may be arbitrariliy parallelized.
Observe that the lower the adaptive complexity, the
more parallelizable an algorithm is.
To obtain a constant approximation factor,
a lower bound of $\Omega(n)$ has been shown on the query
complexity \citep{Kuhnle2020a} and a lower bound
of $\Omega( \log(n) / \log \log (n) )$ has been shown
on the adaptive complexity \citep{Balkanski2018}.

Several algorithms have
been developed recently that are nearly optimal in terms
of query and adaptive complexities
\citep{Ene,Chekuri2018a,Fahrbach2018,Balkanski};
that is, these algorithms achieve $\oh{ \log n }$
adaptivity and $\oh{n \text{polylog}(n)}$ query 
complexity (see Table \ref{table:cmp}).
However, these algorithms use
sampling techniques that result in very large
constant factors that make these algorithms
impractical. This fact is discussed
in detail in \citet{Breuer2019}; as an illustration,
to obtain ratio $1 - 1/e - 0.1$ with 95\%
confidence, all of these algorithms require more than $10^6$ queries
of sets of size $k / \log(n)$ in every adaptive round \citep{Breuer2019};
moreover, even if these algorithms are run as heuristics using a single
sample, other inefficiencies preclude these algorithms of
running even on moderately sized instances \citep{Breuer2019}.
For this reason, the \fast algorithm of \citet{Breuer2019} has been
recently proposed, which uses an entirely different sampling technique
called adaptive sequencing. Adaptive sequencing
 was originally introduced in
\citet{Balkanski2018c}, but the original version has quadratic query
complexity in the size of the ground set and hence is still impractical
on large instances.
To speed it up, the \fast algorithm
sacrifices theoretical guarantees
to yield an algorithm that parallelizes well and
is faster than all previous algorithms
for \sm in an extensive experimental evaluation.
The theoretical sacrifices of \fast include:
the adaptivity of \fast is $\Omega( \log(n)\log^2( \log n ))$, which is higher than the state-of-the-art,
and more significantly,
the algorithm obtains no approximation ratio for $k < 850$\footnote{The approximation ratio $1 - 1/e - 4\epsi$ of \fast
holds with probability
$1 - \delta$ for $k \ge \theta( \epsi, \delta, k ) = \fracc{2\log( 2 \delta^{-1} \log( \frac{1}{\epsi}\log (k) ) )}{\epsi^2 (1 - 5 \epsi) }$.};
since many applications require small choices for $k$, this limits the practical utility of \fast. A natural question is thus: 
\emph{is it possible to design an algorithm that is both practical and theoretically optimal in
  terms of adaptivity, ratio, and total queries?}

\subsection{Contributions}
\begin{table*}[t] \caption{Theoretical comparison to algorithms that achieve
    nearly optimal adaptivity, ratio, and query complexity: 
    the algorithm of \citet{Ene}, 
    \textsc{randomized-parallel-greedy} (\textsc{RPG}) of \citet{Chekuri2018a}, 
    \textsc{Binary-Search-Maximization} (\textsc{BSM}) 
    and \textsc{Subsample-Maximization} (\textsc{SM}) of \citet{Fahrbach2018},
    \fast of \citet{Breuer2019}.
    The symbol $\dagger$ indicates the result holds with constant probability or in expectation;
    the symbol $\ddagger$ indicates the result does not hold on all instances of \sm;
    while no symbol indicates the result holds with probability greater than $1 - O(1/n)$. 
    Observe that our algorithm \flsabr dominates in at least one category 
    when compared head-to-head with any one of the previous algorithms.
    }\label{table:cmp}
\begin{center} 
  \resizebox{\textwidth}{!}{
\begin{tabular}{lllllll} \toprule
Reference  &  Ratio & Adaptivity & Queries \\ 
  \midrule
  \citet{Ene} & $1 - 1/e - \epsi$ & $\oh{ \frac{1}{\epsi^2} \log (n) }$ & $\oh{n \text{poly}(\log n, 1/ \epsi ) }$\\
  \citet{Chekuri2018a} (\textsc{RPG}) & $1 - 1/e - \epsi$ & $\oh{ \frac{1}{\epsi^2}\log(n) } \, \dagger$  & $\oh{ \frac{n}{\epsi^4} \log(n) } \, \dagger$   \\
  \citet{Fahrbach2018} (\textsc{BSM})  & $1 - 1/e - \epsi \, \dagger$  & $\oh{ \frac{1}{\epsi^2} \log \left( n \right)} $ & $\oh{ \frac{n}{\epsi^3} \log \log k }\, \dagger$ \\
  \citet{Fahrbach2018} (\textsc{SM}) & $1 - 1/e - \epsi \, \dagger$ & $O \left( \frac{1}{\epsi^2} \log \left( n \right) \right)$ & $\oh{ \frac{n}{\epsi^3} \log(1/\epsi) } \, \dagger$ \\
  \citet{Breuer2019} (\fast) & $1 - 1/e - \epsi  \, \ddagger$ & $O \left(\frac{1}{\epsi^2} \log(n)\log^2\left(\frac{\log(k)}{\epsi}\right) \right)$ & $\oh{ \frac{n}{\epsi^2} \log\left(\frac{\log(k)}{\epsi}\right) } $ \\
  \midrule
  \flsabr [Theorem \ref{theorem:boost}] & $1 - 1/e - \epsi$ & $O \left( \frac{1}{\epsi^2} \log \left( n / \epsi \right) \right)$ & $\oh{ \frac{n}{\epsi^2} } \, \dagger$ \\
  \bottomrule
\end{tabular}}
\setcounter{footnote}{0}
\end{center}
\end{table*}
\begin{table*}[t] \caption{Empirical comparison of our algorithm with \fast, 
  with each algorithm using 75 CPU threads on a broad range of $k$ values and six applications; 
  details provided in Section \ref{sec:exp}. Parameter settings are favorable to \fast, which
    is run without theoretical guarantees 
    while \flsabr enforces ratio $\approx 0.53$ with high probability.
    For each application, we report the arithmetic mean of each metric over all instances.
    Observe that \flsabr outperforms \fast in runtime on all applications.
    } \label{table:cmp-exp}
\begin{center} 
\begin{tabular}{lllllll} \toprule
    & \multicolumn{2}{c}{Runtime (s)} & \multicolumn{2}{c}{Objective Value} & \multicolumn{2}{c}{Queries}  \\ 
  Application       & \fast & \flsabr & \fast & \flsabr & \fast & \flsabr \\
  \midrule
  TrafficMonitor & $3.7 \times 10^{-1}$ & $ \textbf{2.1} \times \textbf{10}^\textbf{-1}$ & $4.7 \times 10^8$ & $\textbf{5.0} \times \textbf{10}^\textbf{8}$ & $3.5 \times 10^3$ & $\textbf{2.4} \times \textbf{10}^\textbf{3}$  \\
  InfluenceMax & $4.4 \times 10^0$ & $\textbf{2.3} \times \textbf{10}^\textbf{0}$ & $1.1 \times 10^3$ & $1.1 \times 10^3$ & $7.7 \times 10^4$ & $\textbf{4.0} \times \textbf{10}^\textbf{4}$ \\ 
  TwitterSumm & $1.6 \times 10^1$ & $\textbf{3.5} \times \textbf{10}^\textbf{0}$ & $3.8 \times 10^5$ & $3.8 \times 10^5$ & $1.5 \times 10^5$ & $\textbf{6.2} \times \textbf{10}^\textbf{4}$ \\  
  RevenueMax & $3.9 \times 10^2$ & $\textbf{5.4} \times \textbf{10}^\textbf{1}$ & $1.4 \times 10^4$ & $1.4 \times 10^4$ & $7.6 \times 10^4$ & $\textbf{2.7} \times \textbf{10}^\textbf{4}$ \\ 
  MaxCover (BA) & $3.7 \times 10^2$ & $\textbf{7.6} \times \textbf{10}^\textbf{1}$ & $6.0 \times 10^4$ & $\textbf{6.3} \times \textbf{10}^\textbf{4}$ & $5.8 \times 10^5$ & $\textbf{1.8} \times \textbf{10}^\textbf{5}$ \\
  ImageSumm & $1.6 \times 10^1$ & $\textbf{8.1} \times \textbf{10}^\textbf{-1}$ & $9.1 \times 10^3$ & $9.1 \times 10^3$ & $1.3 \times 10^5$ & $\textbf{4.8} \times \textbf{10}^\textbf{4}$ \\ \bottomrule
\end{tabular}
\setcounter{footnote}{0}
\end{center}
\end{table*}
In this paper, we provide three main contributions.
The first contribution is the
algorithm \fastlinear (\fastlinearshort, Section \ref{section:fastlinear})
that achieves with probability $1 - 1/n$ a constant factor $(4 + O(\epsi))^{-1}$
in expected linear query complexity and with $O( \log n )$
adaptive rounds (Theorem \ref{theorem:fastlinear}).
Although the ratio of $\approx 0.25$ is smaller than the optimal $1 - 1/e \approx 0.63$, this algorithm can be used to improve the 
query complexity of
many extant algorithms, as we decribe in the related work section below.
  Interestingly, \linearseq can be modified to have
  adaptivity $\oh{ \log( n / k ) }$ at a small cost to its
  ratio as discussed in
  Appendix \ref{apx:ls-low-ada}.
  This version of \linearseq is a constant-factor algorithm for \sm with
  smaller adaptivity than any previous algorithm in the literature,
  especially for values of $k$ that are large relative to $n$.

  Our second contribution is an improved
  parallelizable thresholding procedure \threshold (\ts, Section \ref{section:boost}) for a commonly recurring task in submodular optimization:
  namely, add all elements that have
  a gain of a specified threshold $\tau$ to the solution.
  This subproblem arises not only in \sm, but also
  \eg in submodular cover \citep{Fahrbach2018} and
  non-monotone submodular maximization 
  \citep{DBLP:conf/nips/BalkanskiBS18,Fahrbach2018a,Ene2020,Kuhnle2020b}.
  Our \ts accomplishes this task
  with probability $1 - 1/n$
  in $O( \log n)$ adaptive rounds and
  expected $O(n)$ query complexity
  (Theorem \ref{theorem:threshold}),
  while previous procedures
  for this task
  only add elements with an expected gain of $\tau$ and use
  expensive sampling techniques \citep{Fahrbach2018};
  have $\Omega( \log^2 n )$ adaptivity \citep{Kazemi2019}; or have 
  $\Omega( kn )$ query complexity \citep{Balkanski2018c}.
  
  Finally, we present in Section \ref{section:boost} the
  parallelized greedy algorithm \boost (\boostshort), which is
  used in conjunction with \linearseq and \threshold to
  yield the final algorithm \flsabr, which answers the above question
  affirmatively:
  \flsabr
  obtains nearly the
  optimal $1 - 1/e$ ratio with probability $1 - 2/n$ in $O( \log n )$ adaptive rounds and $O(n)$
  queries in expectation; moreover, \flsabr is faster than \fast
  in an extensive empirical evaluation (see Table \ref{table:cmp-exp}).
  In addition, \flsabr improves theoretically on the previous algorithms
  in query complexity while obtaining nearly optimal adaptivity (see Table \ref{table:cmp}).
  %

  \subsection{Additional Related Work}
  \textbf{Adaptive Sequencing.}
  The main inefficiency of the adaptive sequencing method of \citet{Balkanski2018c} (which causes
  the quadratic query complexity) is an explicit check that a constant
  fraction of elements will be filtered from the ground set.
  In this work, we adopt a similar sampling technique
  to adaptive sequencing, except that we design the algorithm to filter a
  constant fraction
  of elements with only
    constant probability.
  This method allows us to reduce the quadratic
  query complexity of adaptive sequencing to linear query complexity
  while only increasing the adaptive complexity by a small constant
  factor. In contrast, \fast of \citet{Breuer2019} speeds up adaptive sequencing by increasing the adaptive complexity
  of the algorithm through adaptive binary search procedures,
  which, in addition to the increasing the adaptivity by logarithmic
  factors, place restrictions
  on the $k$ values for which the ratio can hold.
  This improved adaptive sequencing technique is the core
  of our \threshold procedure, which has the additional benefit
  of being relatively simple
  to analyze.

  \textbf{Algorithms with Linear Query Complexity.}
  Our \linearseq algorithm also uses
  the improved adaptive sequencing technique,
  but in addition this algorithm integrates
  ideas from the $\Omega (n)$-adaptive linear-time streaming algorithm
  of \citet{Kuhnle2020a} to achieve a constant-factor algorithm
  with low adaptivity in expected linear time. Integration of
  the improved adaptive sequencing with the ideas of \citet{Kuhnle2020a}
  is non-trivial, and ultimately this integration enables the
  theoretical improvement in query complexity over previous algorithms
  with sublinear adaptivity that obtain a constant ratio with high probability (see Table \ref{table:cmp}).
In \citet{Fahrbach2018}, 
a linear-time procedure \textsc{SubsamplePreprocessing} is described; 
this procedure is to the best of our knowledge the only 
algorithm in the literature that obtains a constant ratio
with sublinear adaptive rounds and linear query complexity
and hence is comparable to \linearseq.
However, \textsc{SubsamplePreprocessing} uses entirely different
ideas from our \linearseq
and has much weaker theoretical guarantees: for input $0< \delta < 1$,
it obtains ratio $\frac{\delta^2}{2 \times 10^6}$
with probability $1 - \delta$
in $O(\log (n) / \delta)$ adaptive rounds and $O(n)$ queries in expectation --
the small ratio renders \textsc{SubsamplePreprocessing} impractical;
also, its ratio holds only with constant probability.
By contrast, with $\epsi = 0.1$, our \linearseq obtains ratio $\approx 0.196$
with probability $1 - 1/n$ in $O(\log(n))$ adaptive rounds and $O(n)$
queries in expectation.

\textbf{Using \ls for Preprocessing: Guesses of \opt.}
Many algorithms for \sm, including \fast and all of the
algorithms listed in Table \ref{table:cmp} except for \textsc{SM}
and our algorithm, use a strategy of guessing logarithmically
many values of \opt.
Our \linearseq algorithm
reduces the interval containing \opt from size $k$ to a small constant size in expected
linear time. Thus, \linearseq could be used for  preprocessing prior to running \fast or one of the other algorithms in Table \ref{table:cmp}, which
would improve their query complexity without compromising their adaptive complexity or ratio; this illustrates the general utility of \linearseq.
For example, with this change,
the theoretical adaptivity of \fast improves, although
it remains worse than \flsabr: the adaptive complexity of \fast becomes
$\oh{\frac{1}{\epsi^2} \log(n) \log \left( \frac{1}{\epsi} \log(k)\right)}$
in contrast to the $O \left( \frac{1}{\epsi^2} \log \left( n / \epsi \right) \right)$ of \flsabr. Although \textsc{SubsamplePreprocessing} may be
used for the same purpose, its ratio only holds with constant probability which would then limit the probability of success of any following algorithm.

\textbf{Relationship of \threshold to Existing Methods.}
The first procedure in the literature to perform the same task
is the \textsc{ThresholdSampling} procedure of \citet{Fahrbach2018};
however, \textsc{ThresholdSampling} only ensures that the \textit{expected} marginal
gain of each element added is at least $\tau$ and has large constants
in its runtime that make it impractical \citep{Breuer2019}.
In contrast, \threshold ensures that added elements contribute
a gain of at least $\tau$ \textit{with high probability} and is highly
efficient empirically.
A second procedure in the literature to perform the same task is
the \textsc{Adaptive-Sequencing}
method of \citet{Balkanski2018c}, which similarly to
\threshold uses random permutations
of the ground set; however,
\textsc{Adaptive-Sequencing} focuses on explicitly
ensuring a constant fraction of elements will be filtered in the next
round, which is expensive to check: the query complexity of
\textsc{Adaptive-Sequencing} is $O(kn)$. In contrast, our \threshold
algorithm ensures this property with a constant probability, which is sufficient to ensure the adaptivity with the high probability of $1 - 1/n$
in $O(n)$ expected queries.
Finally, a third related procedure in the literature is
\textsc{ThresholdSampling} of \citet{Kazemi2019}, which also
uses random permutations to sample elements. 
However, this algorithm has the higher adaptivity
of $\oh{ \log (n) \log( k ) }$,
in contrast to the $\oh{ \log (n) }$ of \threshold.

\textbf{MapReduce Framework.}
Another line of work studying parallelizable algorithms for \sm
has focused on the MapReduce framework \citep{Dean2008}
in a distributed setting, \eg
\citep{Barbosa2015,Barbosa2016,Epasto2017,Mirrokni2015}.
These algorithms divide the dataset over a large number
of machines and are intended for a setting in which
the data does not fit on a single machine.
None of these algorithms has sublinear adaptivity and hence
all have potentially large numbers of
sequential function queries on each machine. 
In this work, our empirical evaluation is
on a single machine with a large number of CPU cores; 
we do not evaluate our algorithms in a distributed setting.



\textbf{Organization.} The constant-factor algorithm \linearseq
is described and analyzed in Section \ref{section:fastlinear};
the details of the
analysis are presented in Appendix \ref{apdix:fastlinear}.
The variant of \linearseq with lower adaptivity is described in
Appendix \ref{apx:ls-low-ada}.
The algorithms \threshold and \boost
are discussed at a high level in Section \ref{section:boost},
with detailed descriptions of these algorithms and theoretical
analysis presented in Appendices \ref{apdix:threshold} and
\ref{apdix:boost}. Our empirical
evaluation is summarized in Section \ref{sec:exp}
with more results and discussion
in Appendix \ref{apx:exp}.

\section{A Parallelizable Algorithm with Linear Query Complexity: \linearseq} \label{section:fastlinear}
\begin{algorithm}[t]
  \caption{The algorithm that obtains ratio $(4 + \oh{ \epsi )}^{-1}$ in
  $\oh{\log(n) / \epsi^3}$ adaptive rounds and expected $\oh{ n / \epsi^3 }$ queries.}
  \label{alg:fastlinear}
  \begin{algorithmic}[1]
  \Procedure{\fastlinear}{$f, \mathcal N, k, \epsi$}

  \State \textbf{Input:} evaluation oracle $f:2^{\mathcal N} 
  \to \reals$, constraint $k$, error $\epsi$

  \State $a = \arg \max_{u \in \mathcal{N}} f(\{u\})$

  \State Initialize $A \gets \{a\}$ , $V \gets \mathcal N$, 
  $\ell = \lceil 4(1+1/(\beta \epsi))\log(n) \rceil$, 
  \textcolor{\rev}{$\beta = \epsi/(24\log(8/(1-e^{-\epsi/2})))$}\label{line:beta}

  \For{ $j \gets 1$ to $\ell$ } \label{line:fastOuterForStart}

        \State Update $V \gets \{ x \in V : \marge{x}{A} \ge 
        f(A) / k \}$ and filter out the rest \label{line:fastFilterV} 

        \State \textbf{if} $|V| = 0$ \textbf{then break} \label{line:fastIf1}

      \State $V = \{v_1, v_2, \ldots, v_{|V|}\} \gets $\textbf{random-permutation}$(V)$ \label{line:fastPermutation}
      \color{\rev}
      \State $\Lamb \gets\left[\left\lceil \frac{1}{\epsi} \right\rceil\right] \cup$
      \color{black}
      $ \{\lfloor(1+\epsi)^u \rfloor: 
      1 \leq \lfloor(1+\epsi)^u \rfloor \leq k, 
      u \in \mathbb{N}\} $
      \Statex $\qquad \qquad \qquad \cup 
      \{\lfloor k+u\epsi k \rfloor: \lfloor k+u\epsi k \rfloor \leq 
      |V|, u \in \mathbb{N}\} \cup \{|V|\}$

      \State $B[\lamb_i] = \textbf{false}$, for $\lamb_i \in \Lambda$

      \For{$\lamb_i \in \Lamb$ in parallel }\label{line:fastInnerForStart}
            %
            %
            \State $T_{\lamb_{i-1}} \gets \{v_1, v_2, \ldots, v_{\lamb_{i-1}}\}$
            ; $T_{\lamb_i} \gets \{v_1, v_2, \ldots, v_{\lamb_i}\}$ ;
            $T_{\lamb_i}' \gets T_{\lamb_i} \backslash T_{\lamb_{i-1}}$

            \State \textbf{if} $\marge{T_{\lamb_i}'}{A \cup T_{\lamb_{i-1}}}/|T_{\lamb_i}'| \geq 
          (1-\epsi) f(A \cup T_{\lamb_{i-1}})/k$ \textbf{then} $B[\lamb_i] \gets \textbf{true}$           \label{line:fastIf2}

      \EndFor \label{line:fastinnerForEnd}
      \color{\rev}
      \State $\lamb^* \gets \max \left\{\lamb_i \in \Lamb: 
      \left(\lamb_i \le \left\lceil \frac{1}{\epsi} \right \rceil \text{ and } B[1]\text{ to }B[\lamb_i]\text{ are all true }\right)\right.$
      \Statex \hspace*{8em} or $\left(\left\lceil \frac{1}{\epsi} \right \rceil<\lamb_i \leq k \text{ and } B[\lamb_i] = \textbf{false} \text{ and } B[1] \text{ to } B[\lamb_{i-1}] 
                  \text{ are all } \textbf{true}\right)$
      \Statex \hspace*{8em}or $\left(\lamb_i > k \text{ and } B[\lamb_i] = \textbf{false} \text{ and } \exists m \geq 1 \text{ s.t. } 
            |\bigcup_{u=m}^{i-1} T_{\lamb_u}'| \geq k \right.$
            and 
            $\left.\left.B[\lamb_m] \text{ to } B[\lamb_{i-1}] \text{ are all } \textbf{true}\right) \right\}$
      \label{line:rule}
      \color{black}
      \State $A \gets A \cup T_{\lamb^*}$ 

    \EndFor

    \State \textbf{if} { $|V| > 0$ } \textbf{then return} \textit{failure}
    \State \textbf{return} $A' \gets$ last $k$ elements added to $A$

    \EndProcedure
\end{algorithmic}
\end{algorithm}
In this section, we describe the  algorithm \linearseq for \sm
(Alg. \ref{alg:fastlinear}) that 
obtains ratio $(4 + \oh{\epsi})^{-1}$ in 
$\oh{\frac{1}{\epsi^3} \log (n) }$ adaptive rounds
and expected $\oh{ \frac{n}{\epsi^3 } }$ queries. 
If $\epsi \le 0.21$, the 
ratio of \linearseq is lower-bounded by \textcolor{\rev}{$(4 + 25\epsi)^{-1} \ge 0.108$},
which shows that a relatively large constant ratio is obtained
even at large values of $\epsi$.
An initial run of this algorithm is required
for our main algorithm \flsabr. 

\textbf{Description of \ls.} The work of \ls is done within iterations of
a sequential outer \textbf{for} loop (Line \ref{line:fastOuterForStart}); this loop
iterates at most $\oh{\log(n)}$ times, and each
iteration requires two adaptive rounds; thus, the adaptive complexity
of the algorithm is $\oh{\log(n)}$. 
Each iteration adds more elements to the set $A$, which is initially empty. 
Within each iteration, there
are four high-level steps: 1) filter elements from $V$ that have gain less than $f(A) / k$ (Line \ref{line:fastFilterV}); 2) randomly permute $V$ (Line \ref{line:fastPermutation}); 3) compute in parallel the marginal gain of adding blocks of the sequence of remaining elements in $V$ to $A$ (\textbf{for} loop on Line \ref{line:fastInnerForStart}); 4) select a prefix of the sequence $V = \left( v_1,v_2,\ldots,v_{|V|} \right)$ to add to $A$ (Line \ref{line:rule}). The selection of the prefix to add is carefully chosen to approximately
satisfy, on average, Condition \ref{ls:cond} for elements
added; and also to
ensure that, with constant probability,
a constant fraction of elements of $V$ are filtered on the
next iteration. 

The following theorem states the theoretical results for \linearseq.
The remainder of this section proves this theorem, with intuition
and discussion of the proof.
The omitted proofs for all lemmata are provided
in Appendix \ref{apdix:fastlinear}.
\begin{theorem} \label{theorem:fastlinear}
Let $(f,k)$ be an instance of \sm.
For any constant $0<\epsi< 1/2$,
the algorithm $\fastlinear$
has adaptive complexity $\oh{\log(n)/\epsi^3}$ and
outputs $A' \subseteq \mathcal N$ with $|A'| \le k$ such that the following properties hold:
1) The algorithm succeeds with probability at least $1 - 1/n$.
2) There are $O\left( (1/(\epsi k)+1)n /\epsi^3\right)$ 
oracle queries in expectation.
3) If the algorithm succeeds, \textcolor{\rev}{$\left[ 4+\frac{2(5-4\epsi)}{(1-2\epsi)^2}\cdot\epsi \right]f(A') \ge f(O)$}, where $O$ is an optimal
solution to the instance $(f,k)$.
\end{theorem}
\textbf{Overview.} The goal of this section is
to produce a constant factor, parallelizable algorithm
with linear query complexity. As a starting point,
consider an algorithm\footnote{This algorithm is a simplified version of the streaming algorithm of \citet{Kuhnle2020a}.} that takes one pass through
the ground set, adding each element $e$ to candidate set $A$
iff
\begin{equation}
  \marge{e}{A} \ge f(A) / k. \label{ls:cond}
\end{equation}
Condition \ref{ls:cond} ensures two properties:
1) the last $k$ elements in $A$ contain a constant fraction of
the value $f(A)$; and 2) $f(A)$ is within a constant fraction
of \opt. By these two properties, the last $k$ elements of $A$
are a constant factor approximation to \sm with exactly one
query of the objective function per element of the ground set.
For completeness, we give
a pseudocode (Alg. \ref{alg:adaptive-linear})
and proof in Appendix \ref{sec:adaptive-linear}.
However, each query depends on
all of the previous ones and thus there are $n$
adaptive rounds.
Therefore, the challenge is to approximately simulate Alg. 
\ref{alg:adaptive-linear} in
a lowly adaptive (highly parallelizable) manner, which is what \linearseq accomplishes.
\subsection{Approximately Satisfying Condition \ref{ls:cond}} \label{sec:linearseq-add}
\textbf{Discarding Elements.} In one adaptive round during each iteration $j$ of
the outer \textbf{for} loop, 
all elements with gain to $A$ of 
less than $\ff{A}/k$ are discarded from $V$ (Line \ref{line:fastFilterV}).
Since the size of $A$ increases as the algorithm runs,
by submodularity,
the gain of these elements can only decrease and hence these elements
cannot satisfy Condition \ref{ls:cond}
and can be safely discarded from consideration. The process of filtering
thereby ensures the following lemma at termination.
\begin{restatable}{lemma}{fastSubOne}
\label{lemma:fastSubOne}
At successful termination of \linearseq, 
$f(O) \leq 2f(A)$, where $O \subseteq \mathcal{N}$ is an optimal 
solution of size $k$.
\end{restatable}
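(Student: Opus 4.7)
The plan is to exploit the filtering step on Line \ref{line:fastFilterV} together with submodularity. Successful termination means the algorithm does not return \emph{failure}, hence $|V|=0$ at the end of the outer \textbf{for} loop. Since $V$ is only modified by (i) the filtering on Line \ref{line:fastFilterV} and (ii) absorbing prefix elements into $A$, the condition $|V|=0$ implies that every element $x \in \mathcal N \setminus A$ was discarded on Line \ref{line:fastFilterV} at some iteration $j_x$. Let $A_{j_x}$ denote the value of $A$ at that iteration, so by the filtering rule $\marge{x}{A_{j_x}} < f(A_{j_x})/k$.

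Next, I would transfer this bound to the final set $A$. Because $A$ only grows over the course of the algorithm, $A_{j_x} \subseteq A$, so by submodularity $\marge{x}{A} \le \marge{x}{A_{j_x}}$, and by monotonicity $f(A_{j_x}) \le f(A)$. Combining these gives the clean statement that for every $x \in \mathcal N \setminus A$,
\[
\marge{x}{A} \;\le\; \frac{f(A)}{k}.
\]

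Finally, I would close with the standard submodular decomposition. Let $O = \{o_1,\ldots,o_k\}$ be an optimal size-$k$ solution. By monotonicity and submodularity,
\[
f(O) \;\le\; f(O \cup A) \;\le\; f(A) + \sum_{o \in O \setminus A} \marge{o}{A}.
\]
Applying the per-element bound above to each $o \in O \setminus A$ and using $|O \setminus A| \le k$ yields $\sum_{o \in O \setminus A} \marge{o}{A} \le k \cdot f(A)/k = f(A)$, and hence $f(O) \le 2 f(A)$, as required.

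There is no real obstacle here; the only mildly delicate point is being careful that the filtering threshold $f(A_{j_x})/k$ is taken with respect to the \emph{current} $A$ at the moment of filtering, not the final one. The monotonicity--submodularity sandwich above handles this cleanly and lets the bound propagate to the final $A$ without loss.
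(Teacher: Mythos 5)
Your proof is correct and follows essentially the same argument as the paper's: both identify the iteration at which each $o \in O \setminus A$ was filtered, use the filtering condition $\marge{o}{A_{j(o)}} < f(A_{j(o)})/k$, and propagate it to the final $A$ via the submodularity--monotonicity sandwich before summing over $O \setminus A$. The only cosmetic difference is that you first derive the uniform bound $\marge{x}{A} \le f(A)/k$ and then sum, whereas the paper sums the iteration-dependent bounds $f(A_{j(o)})/k$ and applies monotonicity at the end.
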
  

\textbf{Addition of Elements.} Next, we describe the details of how elements are added to the set $A$. 
The random permutation of remaining elements on Line \ref{line:fastPermutation}
constructs a sequence $\left( v_1,v_2,\ldots,v_{|V|} \right)$ such that each element
is uniformly randomly sampled from the remaining elements. 
By testing the marginal gains along the sequence
in parallel, it is possible to determine a good prefix of the sequence $(v_i)$ to add to $A$
to ensure the following: 1) Condition \ref{ls:cond}
is approximately satisfied; and 2)
We will discard a constant fraction of $V$ in the next iteration with constant probability.
Condition \ref{ls:cond} is important for the approximation ratio and discarding a constant fraction
of $V$ is important for the adaptivity and query complexity. Below, we discuss how to choose 
the prefix such that both are achieved. 
To speed up the  algorithm, we do not test the marginal gain at each point in
the sequence $(v_i)$, but rather test blocks of elements at once as determined by
the index set $\Lambda$ defined in the pseudocode.

\textbf{Prefix Selection.} Say a block is \textit{bad} if this block does not
satisfy the condition checked on Line \ref{line:fastIf2} (which is an approximate, average form of Condition \ref{ls:cond}); otherwise,
the block is \textit{good}. At the end of an iteration, 
we select the largest block index $\lamb^*$, 
where \textcolor{\rev}{$\lamb^*\le \left\lceil \frac{1}{\epsi}\right\rceil$ and all the blocks are good up to and including $\lamb^*$; or 
$\left\lceil \frac{1}{\epsi}\right\rceil < \lamb^* \le k$,
this block is bad, and
all the previous blocks are good blocks;
or this block is bad and the previous consecutive 
blocks which together have at least $k$ elements are all good.}
Then, we add the prefix $T_{\lamb^*} = (v_1, v_2, \ldots, v_{\lamb^*})$ into $A$.
Now, the relevance of Condition \ref{ls:cond} for the approximation
ratio is that it implies $f(A) \ge 2f(A \setminus A')$,
where $A'$ are the last $k$ elements added to $A$. Lemma
\ref{lemma:fastSubTwo} shows that
the conditions required on the marginal gains of blocks added
imply an approximate form of this fact is satisfied by \linearseq.
Indeed,
the proof of Lemma \ref{lemma:fastSubTwo} informs
the choice $\Lambda$ of blocks evaluated and the computation
of $\lambda^*$.
\begin{restatable}{lemma}{fastSubTwo}
\label{lemma:fastSubTwo}
Suppose \linearseq terminates successfully. Then
$f(A) \geq \textcolor{\rev}{\left(1+\frac{(1-2\epsi)^2}{1+\epsi}\right)} f(A\backslash A')$.
\end{restatable}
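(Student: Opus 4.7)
The plan is to exploit the prefix-selection rule on Line~\ref{line:rule}: in each iteration the chosen prefix $T_{\lambda^*}$ is preceded (in the $\lambda^*>k$ branch) by a run of consecutive ``good'' blocks of total size at least $k$. Combined with the good-block condition of Line~\ref{line:fastIf2}, which gives each such block an average marginal gain of at least $(1-\epsi)/k$ times the running value of $f$, one obtains geometric growth of $f$ across the last $k$ added elements, which produces the factor-of-$(\approx 2)$ gap between $f(A\setminus A')$ and $f(A)$.

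First, if $|A|\le k$ at termination then $A=A'$, so $A\setminus A'=\emptyset$ and the claim is vacuous; assume $|A|>k$. I would focus on the main subcase in which the last iteration $J$ adds a prefix $T^*=T_{\lambda_{i^*}}$ of size at least $k$, so that $A'\subseteq T^*$ and $A\setminus A' = s_{J-1}\cup T_{\lambda_{i^*}-k}$, writing $s_{J-1}$ for $A$ at the start of iteration $J$. The remaining subcase $|T^*|<k$ is handled by iterating the same argument backwards through earlier iterations until at least $k$ cumulative elements have been collected, stitching the per-iteration bounds together using monotonicity of $f$.

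The selection rule supplies an index $m$ with $B[\lambda_m],\ldots,B[\lambda_{i^*-1}]$ all true and $\lambda_{i^*-1}-\lambda_{m-1}\ge k$; chosen maximal it also satisfies $\lambda_{i^*-1}-\lambda_m<k$, so all good blocks fall between $\lambda_{m-1}$ (which sits inside $A\setminus A'$ since $\lambda_{m-1}\le\lambda_{i^*-1}-k<\lambda_{i^*}-k$) and $\lambda_{i^*-1}$ (which almost reaches the end of $A$, since the bad block $\lambda_{i^*}$ has size at most $\epsi k$). Writing $g_u=f(s_{J-1}\cup T_{\lambda_u})$, the good-block condition reads
\[
g_u \;\ge\; g_{u-1}\Bigl(1+\tfrac{(1-\epsi)(\lambda_u-\lambda_{u-1})}{k}\Bigr), \qquad m\le u\le i^*-1.
\]
Telescoping, together with $\prod(1+y_u)\ge 1+\sum y_u$ and $\sum_u y_u\ge 1-\epsi$, yields $g_{i^*-1}\ge (2-\epsi)\,g_{m-1}$. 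To convert this bound on $g_{i^*-1}/g_{m-1}$ into one on $f(A)/f(A\setminus A')$, I would split the telescoped product at the index $u_0$ with $\lambda_{u_0-1}<\lambda_{i^*}-k\le\lambda_{u_0}$ (so that the boundary of $A\setminus A'$ lies inside block $u_0$) and invoke submodularity on that single straddling block to partition its gain between $A\setminus A'$ and $A'$. The spacing of $\Lambda$ guarantees $\lambda_{u_0}-\lambda_{u_0-1}\le\epsi k$, and carrying out this bookkeeping produces the exact correction factors $(1+\epsi)^{-1}$ and $(1-\epsi+\epsi^2)$ that appear in the lemma.

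The main obstacle is this endpoint-matching step: $A\setminus A'$ does not align with a block boundary, so one must cleanly extract the portion of $g_{i^*-1}-g_{m-1}$ attributable to elements in $A'$ versus those in $A\setminus A'$. The design of the index set $\Lambda$ (arithmetic step $\epsi k$ in the $\lambda>k$ regime, geometric ratio $1+\epsi$ in the $\lambda\le k$ regime) is precisely what allows this boundary discrepancy to be charged to the $(1+\epsi)^{-1}$ and $(1-\epsi+\epsi^2)$ factors, rather than degrading the bound to the looser $2-\epsi$ obtained from a cruder endpoint match.
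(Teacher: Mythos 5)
Your proposal is correct and follows essentially the same route as the paper: both arguments reduce to an additive, block-by-block accounting in which the good-block condition of Line~\ref{line:fastIf2} plus monotonicity lower-bounds each good block's marginal gain by $(1-\epsi)\,(\text{block size})\,f(A\setminus A')/k$, the selection rule guarantees at least $k$ elements in consecutive good blocks, and the spacing of $\Lambda$ caps the boundary losses at $O(\epsi k)$ elements (the paper's Claim~\ref{claim:LSblockOne}, with your backward iteration through earlier rounds corresponding to its treatment of the iterations $j>c$ and the straddling iteration $c$). The only cosmetic difference is your intermediate multiplicative telescoping $g_{i^*-1}\ge(2-\epsi)g_{m-1}$, which cannot be used directly since monotonicity only gives $g_{m-1}\le f(A\setminus A')$; your own endpoint-matching fix at $u_0$ abandons it in favor of exactly the additive summation the paper uses.
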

\begin{proof}
\begin{figure}[t]
\centering
\includegraphics[width=0.7\textwidth]{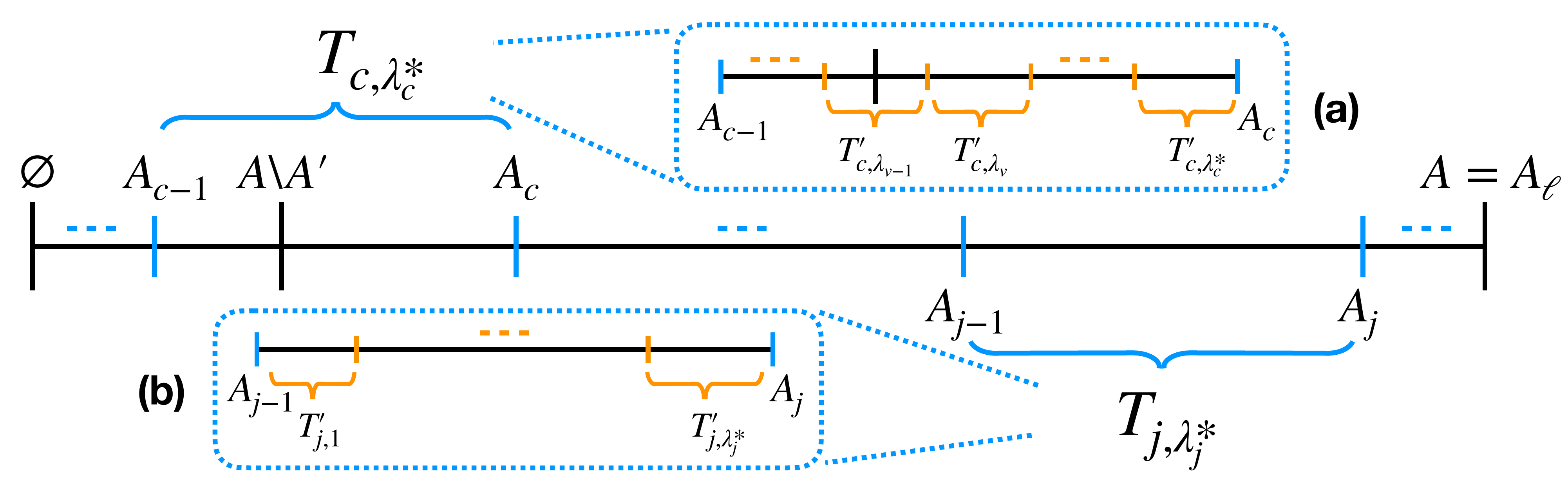}
\caption{This figure depicts the way elements are added to $A$, with
  $A_i$ denoting the value of $A$ after iteration $i$ 
  of the outer \textbf{for} loop; the prefix added at iteration
  $i$ is denoted by $T_{i,\lambda_i^*}$.
  The orange lines mark the blocks comprising each prefix. Also,
  the set $A'$ of the last $k$ elements added to $A$ is shown. Parts
  \textbf{(a)}
  and \textbf{(b)} depict two separate cases in the proof of Claim \ref{claim:LSblockOne}.}
\label{fig:block}
\end{figure}
If $|A| \le k$, the lemma is immediate, so assume
$|A| > k$. For iteration $j$ \color{\rev} of the outer for loop on Line~\ref{line:fastOuterForStart}, 
\color{black}
let $T_{j,\lamb_j^*}$
denote the set added to $A$ during iteration $j$;
and let
$T_{j,\lamb_j^*}=\emptyset$ if the algorithm terminates before
iteration $j$. Let $A_j$ denote the value of $A$ after iteration $j$.
Define $c = \max \{c \in \mathbb{N}: A' \subseteq
(\cup_{j=c}^{\ell} T_{j,\lamb_j^*} ) \}$. 
Then, $|T_{c,\lamb_c^*}| > 0$; and for any 
$j > c$, $|T_{j,\lamb_j^*}| < k$.
It holds that $(\cup_{j=c+1}^{\ell} T_{j,\lamb_j^*}) 
\subset A' \subseteq (\cup_{j=c}^{\ell} T_{j,\lamb_j^*})$.
Figure \ref{fig:block} shows how $A$ is composed of these
sets $T_{j,\lamb_j^*}$ and how each set is composed of blocks.
The following claim is proven in Appendix \ref{apx:claim1}.
\begin{restatable}{claim}{LSblockOne}
  \label{claim:LSblockOne}
  \color{\rev}
  Let $T_{j,\lamb_j^*}$ denote the set added to $A$ during iteration $j$ of the outer for loop on Line~\ref{line:fastOuterForStart},
  $A_j$ denote the value of $A$ after iteration $j$.
  Define $c = \max \{c \in \mathbb{N}: A' \subseteq
  (\cup_{j=c}^{\ell} T_{j,\lamb_j^*} ) \}$. 
  \color{black}
  It holds that
  $\marge{T_{c,\lamb_c^*}}{A\backslash A'}\ge 
(1-\epsi)\max\{0,|T_{c,\lamb_c^*}\cap A'|-2\epsi k \}\cdot f(A\backslash A')/k.$ For $j>c$, it holds that
$\marge{T_{j,\lamb_j^*}}{A_{j-1}}\ge 
\textcolor{\rev}{\frac{1-2\epsi}{1+\epsi}}|T_{j,\lamb_j^*}|\cdot f(A\backslash A')/k.$
\end{restatable}
From Claim \ref{claim:LSblockOne},
\begin{equation}
  f(A)-f(A \backslash A') 
= \marge{T_{c,\lamb_c^*}}{A\backslash A'} + 
\sum_{j=c+1}^{\ell} \marge{T_{j,\lamb_j^*}}{A_{j-1}} 
\geq \textcolor{\rev}{\frac{(1-2\epsi)^2}{1+\epsi}}\cdot f(A\backslash A') \label{ineq:claim1},
\end{equation}
where Inequality \ref{ineq:claim1} is proven
in Appendix \ref{apx:ineq-claim1}.
\end{proof}




In the remainder of this subsection, 
we will show that a $(\beta \epsi)$-fraction of $V$ is discarded at each iteration $j$ of the outer \textbf{for} loop with
probability at least $1/2$, 
where $\beta$ is a constant in terms of $\epsi$ as defined
on Line \ref{line:beta}
in the pseudocode. The remainder of the proofs in this
section are implicitly conditioned on the behavior of the
algorithm prior to iteration $j$.
The next lemma describes the behavior of the
number of elements that will be filtered at iteration $j + 1$.
Observe that the set $S_i$
defined in the next lemma
is the set of elements that would be filtered at the
next iteration if prefix $T_i$ is added to $A$.
\begin{restatable}{lemma}{FLSnumBad}
  \label{lemma:FLSnumBad}
  Let $S_i = \left\{x \in V: \marge{x}{A \cup T_i} <f(A\cup T_i)/k\right\}.$
It holds that $|S_0|= 0$, $|S_{|V|}| = |V|$, 
and $|S_i| \leq |S_{i+1}|$.
\end{restatable}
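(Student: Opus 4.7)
The plan is to verify the three assertions separately, with the monotonicity $|S_i| \le |S_{i+1}|$ being the main content (but still routine given submodularity and monotonicity of $f$).

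First, for $|S_0| = 0$: since $T_0 = \emptyset$, $S_0 = \{x \in V : \marge{x}{A} < f(A)/k\}$. But on Line \ref{line:fastFilterV}, immediately before $V$ was permuted, every element with $\marge{x}{A} < f(A)/k$ was filtered out of $V$. Hence no element of the current $V$ lies in $S_0$.

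Next, for $|S_{|V|}| = |V|$: we have $T_{|V|} = V$, so for any $x \in V$, $x \in A \cup T_{|V|}$ and therefore $\marge{x}{A \cup T_{|V|}} = 0$. Since the first iteration ensured $f(A) > 0$ (on Line setting $a$ to be the best singleton and placing it in $A$; we may further assume $f$ is nontrivially positive so $f(A \cup T_{|V|}) > 0$), we get $\marge{x}{A \cup T_{|V|}} = 0 < f(A \cup T_{|V|})/k$, so every $x \in V$ belongs to $S_{|V|}$.

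Finally, for the monotonicity $|S_i| \le |S_{i+1}|$, I would prove the stronger inclusion $S_i \subseteq S_{i+1}$. Fix $x \in S_i$, so $\marge{x}{A \cup T_i} < f(A \cup T_i)/k$. Since $T_{i+1} = T_i \cup \{v_{i+1}\} \supseteq T_i$, submodularity of $f$ gives
\[
\marge{x}{A \cup T_{i+1}} \le \marge{x}{A \cup T_i},
\]
while monotonicity of $f$ gives $f(A \cup T_{i+1}) \ge f(A \cup T_i)$, hence $f(A \cup T_{i+1})/k \ge f(A \cup T_i)/k$. Chaining,
\[
\marge{x}{A \cup T_{i+1}} \le \marge{x}{A \cup T_i} < f(A \cup T_i)/k \le f(A \cup T_{i+1})/k,
\]
so $x \in S_{i+1}$. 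This gives $S_i \subseteq S_{i+1}$ and a fortiori the desired cardinality inequality.

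I do not anticipate any real obstacle here: the lemma is essentially a structural observation packaging submodularity (marginals are nonincreasing in the base set) together with monotonicity of $f$ (so the threshold $f(A \cup T_i)/k$ is nondecreasing in $i$), with the boundary cases following directly from the filtering step and from $T_{|V|} = V$.
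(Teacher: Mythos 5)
Your proof is correct and follows essentially the same route as the paper's: the boundary case $|S_0|=0$ from the filtering on Line \ref{line:fastFilterV}, the case $|S_{|V|}|=|V|$ from $\marge{x}{A\cup V}=0$, and the inclusion $S_i\subseteq S_{i+1}$ via the identical chain of inequalities combining submodularity with monotonicity of $f$. Your explicit remark that $f(A\cup T_{|V|})>0$ is needed for the strict inequality is a small point the paper leaves implicit, but it changes nothing.
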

By Lemma \ref{lemma:FLSnumBad}, we know the number of elements in 
$S_i$ increases from 0 to $|V|$ with $i$. 
Therefore, there exists a $t$ such that $t = 
\min \{i \in \mathbb{N}: |S_i| \geq \beta \epsi |V|\}$. 
If $\lamb^* \geq t$, $|S_{\lamb^*}| \geq \beta \epsi|V|$,
and we will successfully filter out more than $(\beta \epsi)$-fraction 
of $V$ at the next iteration. In this case, we say that the iteration
$j$ \emph{succeeds}.
Otherwise, if $\lamb^* < t$, the iteration may fail. The
remainder of the proof bounds the probability
that $\lamb^* < t$, which is an upper bound on
the probability that iteration $j$ fails.
Let 
$\lamb_{t} = \max\{\lamb \in \Lambda: \lamb \textcolor{\rev}{\le} t \}$, and let 
$\lamb_{t}' = \max ( \{\lamb' \in \Lambda:
\sum_{\lamb \in \Lambda, \lamb' \leq \lamb \leq \lamb_{t}}
|T'_{\lamb}| \geq k\} \cup \{ 1 \} )$.
If $\lamb^* < t$, there must be at least one index
$\lamb$ between $\lamb_t'$ and $\lamb_t$ such that the
block $T'_\lamb$ is bad.
Let $B_1 = \{ \lamb \in \Lambda : \lamb \le k \text{ and } \lamb \textcolor{\rev}{\le} \lamb_t \}$, 
$B_2 = \{ \lamb \in \Lambda : | \Lambda \cap [ \lamb, \lamb_t ] | \le \lceil 1/ \epsi \rceil \}$.
\color{\rev}
$B_1\cup B_2$ includes all indices between $\lamb_t'$ and $\lamb_t$.
Then, the failure probability of iteration $j$ can be bounded as below,
\color{black}
\begin{equation}
  \prob{\text{iteration $j$ fails}  }
  \leq \prob{\exists \lambda \in B_1 \cup B_2 \text{ with }B[\lamb] = \textbf{false}}  \le 1/2, \label{ineq:linear-fail-1}
\end{equation}
where the proof of Inequality \ref{ineq:linear-fail-1} is in Appendix 
\ref{apx:ls-fail}.

\subsection{Proof of Theorem \ref{theorem:fastlinear}}
\label{sec:linearseq-proof}
From Section \ref{sec:linearseq-add}, the probability 
at any iteration 
of the outer \textbf{for} loop of successful
filtering of an $(\beta \epsi)$-fraction of $V$ is at least $1/2$. 
We can model the success of the iterations as a sequence of dependent Bernoulli
random variables, with success probability that depends on the results of previous
trials but is always at least $1/2$.

\textbf{Success Probability of \linearseq.} 
If there are at least $m = \lceil \log_{1-\beta\epsi}(1/n) \rceil$ successful iterations,
the algorithm \linearseq will succeed.
The number of successful iterations $X_\ell$ up to
and including the $\ell$-th iteration is a sum of
dependent Bernoulli random variables.
With some work (Lemma \ref{lemma:indep} in Appendix \ref{apx:prob}),
the Chernoff bounds can be applied to ensure the algorithm
succeeds with probability at least $1 - 1/n$, as shown
in Appendix \ref{apx:ls-success}.

\textbf{Adaptivity and Query Complexity.}
Oracle queries are made on Lines
\ref{line:fastFilterV} and \ref{line:fastIf2} of \linearseq.
The filtering on Line \ref{line:fastFilterV} is in one adaptive round,
and the inner \textbf{for} loop 
is also in one adaptive round. 
Thus, the adaptivity is
proportional to the number of iterations of the outer
\textbf{for} loop, $\oh{\ell} = \oh{\log(n)/\epsi^3}.$
For the query complexity,
let $Y_i$ be the number of iterations between the $(i-1)$-th
and $i$-th successful iterations of the outer \textbf{for} loop.
By Lemma \ref{lemma:indep} in Appendix \ref{apx:prob}, 
$\ex{Y_i} \le 2$. From here, we show in Appendix \ref{apx:ls-query}
that there are at most $\oh{n / \epsi^3}$ queries in expectation.

\textbf{Approximation Ratio.} Suppose \linearseq terminates
successfully.
We have the approximation ratio as follows:
\color{\rev}
$$f(A') \overset{(a)}{\geq} f(A) - f(A\backslash A') \overset{(b)}{\geq} f(A) - \frac{1+\epsi}{1+\epsi+(1-2\epsi)^2} f(A) \overset{(c)}{\geq} \frac{1}{4+\frac{2(5-4\epsi)}{(1-2\epsi)^2}\cdot\epsi} f(O),$$
\color{black}
where Inequality (a) is from submodularity of $f$, Inequality (b) is
from Lemma \ref{lemma:fastSubTwo}, and Inequality (c) is from
Lemma \ref{lemma:fastSubOne}.

\section{Improving to Nearly the Optimal Ratio} \label{section:boost}
In this section, we describe how to obtain the nearly optimal
ratio in nearly optimal query and adaptive complexities (Section
\ref{sec:flsabr}).
First, in Section \ref{section:threshold},
we describe \threshold, a parallelizable procedure to add all
elements with gain above a constant threshold to the solution.
In Section \ref{sec:flsabr}, we describe \boost and
finally the main algorithm \flsabr.
Because of space constraints, the algorithms are described
in the main text at
a high level only, with detailed descriptions and proofs
deferred to Appendices \ref{apdix:threshold} and \ref{apdix:boost}.
\subsection{The \threshold Procedure}
\label{section:threshold}
In this section, we discuss the algorithm
\threshold,
which adds all elements with
gain above an input threshold $\tau$ up to accuracy $\epsi$
in $O(\log n)$ adaptive rounds and $O(n)$ queries in expectation. 
Pseudocode is given in Alg. \ref{alg:threshold} in Appendix
\ref{apdix:threshold}.

\textbf{Overview.}
The goal of this algorithm is, given an input threshold $\tau$
and size constraint $k$,
to produce a set of size at most $k$
such that the average gain of elements added
is at least $\tau$.
As discussed in Section \ref{sec:intro}, this task
is an important subroutine of many algorithms for submodular
optimization (including our final algorithm),
although by itself it does not produce any
approximation ratio for \sm.
The overall strategy of our parallelizable algorithm
\threshold is analagous to that of \linearseq, although
\threshold is considerably simpler to analyze.
The following theorem summarizes 
the theoretical guarantees
of \threshold and
the proofs are
in Appendix \ref{apdix:threshold}.
\begin{theorem} \label{theorem:threshold}
Suppose \threshold is run with input $(f,k, \epsi, \delta, \tau )$.
Then, the algorithm 
has adaptive complexity $O(\log (n/\delta)/\epsi)$ and
outputs $A \subseteq \mathcal N$ with $|A| \le k$ such that the following properties hold:
1) The algorithm succeeds with probability at least $1 - \delta/n$.
2) There are $O(n/\epsi)$ oracle queries in expectation.
\color{\rev}
3) It holds that $f(A)/|A| \geq (1-2\epsi)\tau / (1+\epsi)$.
\color{black}
4) If $|A| < k$, then $\marge{x}{A} < \tau$ for all $x \in \mathcal{N}$.
\end{theorem}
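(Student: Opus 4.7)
The plan is to mirror the analysis of \linearseq developed in Section \ref{section:fastlinear}, exploiting the fact that \threshold operates with an absolute threshold $\tau$ rather than the relative threshold $f(A)/k$, which decouples the iterations and should make the argument cleaner. Based on the overview, I expect \threshold to run an outer loop that (i) filters from a candidate set $V$ every element whose marginal gain to the current solution $A$ has dropped below $\tau$, (ii) draws a uniformly random permutation of the survivors, (iii) evaluates in parallel the average marginal gain of geometrically spaced prefixes of this permutation, and (iv) appends the longest prefix whose average marginal gain is at least $(1-\epsi)\tau/(1+\epsi)$, truncating if this would push $|A|$ past $k$.

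First I would establish the two deterministic guarantees. Property 3 follows directly from the prefix-selection rule: every block appended to $A$ satisfies the average-gain condition, so by telescoping over blocks the cumulative ratio $f(A)/|A| \ge (1-\epsi)\tau/(1+\epsi)$ is preserved. Property 4 follows from the termination rule: if the algorithm exits successfully with $|A|<k$, then the filtering step must have emptied $V$, which by submodularity means every element of $\mathcal{N}$ has marginal gain to $A$ strictly less than $\tau$. The size bound $|A|\le k$ is enforced by the truncation in step (iv).

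Next I would handle the probabilistic claims (Properties 1 and 2). The central lemma, in direct analogy with Lemmas \ref{lemma:FLSnumBad}--\ref{lemma:oneBadBlock}, is that at each iteration of the outer loop, a constant fraction of $V$ gets filtered in the following round with probability at least $1/2$. The reasoning is the same as for \linearseq: define $t$ as the smallest prefix length whose addition would already doom a $\beta\epsi$-fraction of $V$ to be filtered next round; then the chosen index $\lambda^*$ can lie below $t$ only if some block strictly before $\lambda_t$ is declared bad, and under the random permutation the number of "bad" elements falling into any such block is stochastically dominated by a sum of i.i.d.\ Bernoulli$(\beta\epsi)$ trials. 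A Chernoff tail bound on each block and a union bound over the $O(1/\epsi)$ candidate blocks close to $\lambda_t$ yields failure probability $\le 1/2$. With per-iteration success probability at least $1/2$, the number of successful iterations in $\ell = O(\log(n/\delta)/\epsi)$ outer rounds is a sum of dependent Bernoullis which, via the decoupling trick used for Lemma \ref{lemma:indep}, is stochastically dominated below by an i.i.d.\ sum; a Chernoff bound then guarantees $|V|=0$ within $\ell$ rounds with probability at least $1-\delta/n$, establishing Property 1. The adaptivity is $O(\ell) = O(\log(n/\delta)/\epsi)$, and the expected query complexity follows from the geometric decay of $|V|$: each surviving element costs $O(1/\epsi)$ queries per iteration it survives, and the expected number of iterations it survives is $O(1)$, so the total is $O(n/\epsi)$ queries in expectation.

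The main obstacle, as with \linearseq, is the coupling argument that converts the block-level "bad" events into i.i.d.\ Bernoulli tail events under the random permutation. The silver lining is that \threshold is free of the complication that burdened \linearseq, where the threshold $f(A)/k$ itself shifted between blocks; here $\tau$ is fixed throughout the run, so no analog of the two-case split in Claim \ref{claim:LSblockOne} is needed, and a single clean union bound over the relevant blocks suffices, which is precisely why the adaptivity avoids an extra $\log$ factor relative to \linearseq.
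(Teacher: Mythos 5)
Your overall skeleton---filter, permute, parallel prefix tests, domination of the dependent per-iteration successes by i.i.d.\ Bernoulli trials, and a Chernoff bound over the $\ell$ outer iterations---matches the paper, and your treatments of Properties 3 and 4 are essentially the paper's (the one wrinkle being that the algorithm actually tests prefixes at level $(1-\epsi)\tau$ and adds the \emph{first index beyond} the largest passing prefix, recovering the $1/(1+\epsi)$ loss in Property 3 by monotonicity). The genuine gap is in the per-iteration failure bound, and it is quantitative: you import the \linearseq machinery wholesale---a $\beta\epsi$ filtering fraction, per-block badness events, a Chernoff tail on each block, and a union bound over the $O(1/\epsi)$ blocks near $\lambda_t$. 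That route forces the filtered fraction down to $\beta\epsi$ with $\beta = \Theta(\epsi/\log(1/\epsi))$ so that the union bound closes, which in turn forces $\ell = \Theta(\log(n/\delta)/(\beta\epsi))$ outer iterations; you would end up proving adaptivity on the order of $\log(n/\delta)/\epsi^3$ and queries on the order of $n/\epsi^3$, not the $O(\log(n/\delta)/\epsi)$ and $O(n/\epsi)$ claimed by the theorem.

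The paper avoids this entirely, and this is the structural point you miss. Because Line \ref{line:threIf2} tests the average gain of the whole prefix $T_{\lambda_i}$ rather than the incremental block $T'_{\lambda_i}$, and because $\lambda^*$ is the smallest index exceeding every passing index, the failure event $\{\lambda^* < i'\}$ (with $i' = \min\{s,t\}$ and $t$ the first index at which an $\epsi/2$-fraction of $V$ becomes filterable) implies the failure of the \emph{single} prefix $T_{\lambda_{i'}}$, where $\lambda_{i'} = \max\{\lambda\in\Lambda : \lambda < i'\}$. No union bound is needed: the number of bad elements in that one prefix is dominated by a sum of $\lambda_{i'}$ i.i.d.\ Bernoulli trials with success probability $\epsi/2$ (Lemma \ref{lemma:indep2}), and plain Markov gives $\prob{\text{bad count} > \epsi\lambda_{i'}} \le 1/2$. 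This single-prefix-plus-Markov argument is what permits filtering an $\epsi/2$-fraction (rather than a $\beta\epsi$-fraction) per successful round and is exactly the source of the improved $1/\epsi$ dependence. Relatedly, your proposed selection rule (``append the longest prefix whose average gain is at least $(1-\epsi)\tau/(1+\epsi)$'') would break the filtering argument: the longest passing prefix can sit strictly below $t$, so adding it need not doom any constant fraction of $V$; overshooting by one index of $\Lambda$ is what guarantees $\lambda^* \ge i'$ whenever $T_{\lambda_{i'}}$ passes.
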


\subsection{The \boost Procedure and the Main Algorithm} \label{sec:flsabr}
\begin{wrapfigure}{r}{0.63\textwidth}
	\begin{minipage}{0.63\textwidth}
		\vspace{-2em}
	  \begin{algorithm}[H]
		\caption{The \boost procedure.}
		\label{alg:boost}
		\begin{algorithmic}[1]
	  \State \textbf{Input:} evaluation oracle $f:2^{\mathcal N} \to \reals$, constraint $k$, constant $\alpha$, value $\Gamma$ such that 
	  $\Gamma \leq f(O) \leq \Gamma/\alpha$, accuracy parameter $\epsi$
	  \State Initialize $\tau$ $\gets$ $\Gamma/(\alpha k)$, 
	  $\delta$ $\gets$ $1/(\log_{1-\epsi}(\alpha/3) + 1)$, 
	  $A \gets \emptyset$
	  \While{$\tau \geq \Gamma/(3 k)$} \label{line:boostWhileStart}
		  \State $\tau \gets \tau (1-\epsi)$
		  \State $S \gets \threshold(f_A, \mathcal{N}, k-|A|, \delta, \epsi/3, \tau)$\label{line:boostQuery}
		  \State $A \gets A \cup S$ \label{line:boostUpdateA}
		  \If{$|A| = k$}
		  \State \textbf{return} $A$
		  \EndIf
		  \EndWhile\label{line:boostWhileEnd}
	  \State \textbf{return} \textit{A}
  \end{algorithmic}
  \end{algorithm}
  \vspace{-2em}
  \end{minipage}
  \end{wrapfigure}
In this section, we describe the greedy algorithm
\boost (\boostshort, Alg. \ref{alg:boost}) that uses multiple calls to
\threshold with descending thresholds. Next, our state-of-the-art
algorithm \flsabr is specified.

\textbf{Description of \boost.}
This procedure
takes as input the results from running an $\alpha$-approximation
algorithm on the instance $(f,k)$ of \sm;
thus, \boost is not meant to be used as a standalone algorithm.
Namely, \boost takes as input $\Gamma$, the solution
value of an $\alpha$-approximation algorithm for \sm; this solution
value $\Gamma$ is then boosted to ensure the ratio $1 - 1/e - \epsi$
on the instance.
The values of $\Gamma$ and $\alpha$ are used to produce an
initial threshold value $\tau$ for \threshold. Then,
the threshold value is iteratively decreased by a factor of $(1 - \epsi)$ and the call to \threshold is iteratively repeated
to build up a solution, until a minimum value for the threshold
of $\Gamma /(3k)$ is reached.
Therefore, \threshold is called at most $\oh{ \log( 1 / \alpha ) / \epsi }$
times. We remark that $\alpha$ is not required to be a constant
approximation ratio.

\begin{theorem} \label{theorem:boost}
Let $(f,k)$ be an instance of \sm. 
Suppose an $\alpha$-
approximation algorithm
for \sm
is used to obtain
$\Gamma$,
where the approximation ratio $\alpha$ holds with probability $1 - p_\alpha$.
For any constant $\epsi >0$, the algorithm \boost has 
adaptive complexity 
$\oh{ \frac{ \log{ \alpha^{-1} } }{\epsi^2} \log \left( \frac{ n \log \left( \alpha^{-1} \right) }{\epsi } \right) } $
and
outputs $A \in \mathcal{N}$ with $|A| \leq k$ such that the following 
properties hold:
1) The algorithm succeeds with probability at least $1 - 1/n - p_\alpha$.
2) If the algorithm succeeds, there are $\oh{ n \log \left( \alpha^{-1} \right) / \epsi^2}$ oracle queries in expectation.
3) If the algorithm succeeds, $f(A) \ge (1-1/e-\epsi) f(O)$, where $O$ is an optimal solution to the instance $(f,k)$.
\end{theorem}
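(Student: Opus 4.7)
My plan is to condition throughout on the event that every invocation of \threshold succeeds, and then handle the four claimed guarantees in turn.

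First I would bound the number of iterations of the \textbf{while} loop: since $\tau$ is initialized at $\Gamma/(\alpha k)$, scaled by $(1-\epsi)$ per iteration, and the loop exits once $\tau < \Gamma/(3k)$, there are at most $J = O(\log(\alpha^{-1})/\epsi)$ iterations. The algorithm invokes \threshold with accuracy $\epsi/3$ and failure parameter $\delta = 1/J$, so Theorem \ref{theorem:threshold} yields $O(\log(n/\delta)/\epsi) = O(\log(nJ)/\epsi)$ adaptive rounds and $O(n/\epsi)$ expected queries per call. Summing over the $J$ calls yields the claimed adaptivity and expected query bounds. A union bound of $J \cdot \delta/n = 1/n$ over the calls, combined with the $p_\alpha$ failure probability of the external $\alpha$-approximation used to produce $\Gamma$, gives total failure probability at most $1/n + p_\alpha$.

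The substantive part is the approximation ratio. Let $A_j$ denote the solution after iteration $j$, let $s_j = |A_j| - |A_{j-1}|$, and let $\tau_j = (\Gamma/(\alpha k))(1-\epsi)^j$ be the threshold used in iteration $j$. By Property 3 of Theorem \ref{theorem:threshold},
\[
f(A_j) - f(A_{j-1}) \ge s_j \cdot \frac{1-\epsi/3}{1+\epsi/3}\,\tau_j.
\]
By Property 4 and submodularity, whenever $|A_{j-1}| < k$ we have $\marge{o}{A_{j-1}} < \tau_{j-1}$ for every $o \in O$, hence $f(O) - f(A_{j-1}) < k\tau_{j-1}$ for $j \ge 2$. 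For the boundary case $j=1$ the same inequality follows from the input hypothesis $f(O) \le \Gamma/\alpha = k\tau_0$. Combining the two displays produces the standard greedy recursion
\[
f(O) - f(A_j) \le \bigl(f(O) - f(A_{j-1})\bigr)\left(1 - \gamma\,s_j/k\right), \quad \gamma = \frac{(1-\epsi)(1-\epsi/3)}{1+\epsi/3}.
\]

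Finally I would split on whether $|A_J| = k$ at termination. If yes, then $\sum_j s_j = k$ and telescoping the recursion gives $f(O) - f(A_J) \le f(O)\,e^{-\gamma}$, so $f(A_J) \ge (1 - e^{-\gamma})f(O) \ge (1 - 1/e - O(\epsi))f(O)$, which becomes the advertised $1 - 1/e - \epsi$ after rescaling $\epsi$ by a constant. In the opposite case $|A_J| < k$, Property 4 for the final call gives $\marge{o}{A_J} < \tau_J$ for every $o$, hence $f(O) \le f(A_J) + k\tau_J \le f(A_J) + \Gamma/3 \le f(A_J) + f(O)/3$, so $f(A_J) \ge (2/3)f(O) \ge (1-1/e)f(O)$. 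The main obstacle I anticipate is the $j=1$ base of the recursion, since there is no prior threshold from which to inherit a bound on $\marge{o}{A_0}$; this is precisely where the input promise $\Gamma \ge \alpha f(O)$ must be used to supply the initial upper bound $f(O) \le k\tau_0$.
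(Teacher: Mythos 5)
Your proposal is correct and follows essentially the same route as the paper's proof: the same iteration count $O(\log(\alpha^{-1})/\epsi)$ with $\delta$ set to its reciprocal, the same union bound over \threshold calls, and the same two-case ratio analysis ($|A|<k$ handled via Property 4 with the final threshold below $\Gamma/(3k)$, and $|A|=k$ handled via the greedy recursion with the identical constant $\gamma=\frac{(1-\epsi)(1-\epsi/3)}{1+\epsi/3}$ followed by telescoping). Two small remarks: the paper avoids your "rescale $\epsi$" step by proving directly that $e^{-\gamma}\le 1/e+\epsi$ (Lemma \ref{lemma:boostExp}), so the constants as chosen already yield $1-1/e-\epsi$; and your explicit handling of the $j=1$ base case via $f(O)\le\Gamma/\alpha=k\tau_0$ is a point the paper's appendix leaves implicit, so it is a welcome addition rather than a deviation.
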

\begin{proof}
	\textbf{Success Probability.}
	For the \textbf{while} loop in Line \ref{line:boostWhileStart}-\ref{line:boostWhileEnd}, 
there are no more than 
$\lceil \log_{1-\epsi}(\alpha/3)\rceil$ iterations. 
If \threshold completes successfully at every iteration,
Algorithm \ref{alg:boost} also succeeds.
The probability that this occurs is lower bounded in Appendix \ref{apdx:boost-suc}.
For the remainder of the proof of Theorem \ref{theorem:boost},
we assume that
every call to \threshold succeeds.

\textbf{Adaptive and Query Complexity.}
There are at most $\lceil \log_{1-\epsi}(\alpha/3)\rceil$ iterations of the \textbf{while} loop. 
Since $\log(x) \leq x-1$, 
$\lceil \log_{1-\epsi}(\alpha/3)\rceil = \lceil \frac{\log(\alpha/3)}{\log(1-\epsi)}\rceil$, and $\epsi < 1 - 1/e$,
it holds that 
$ \lceil \log_{1-\epsi}(\alpha/3)\rceil \le \lceil \frac{\log(3/\alpha)}{\epsi}\rceil. $
And for each iteration, 
queries to the oracle happen only on Line \ref{line:boostQuery},
the call to \threshold.
Since the adaptive and query complexity of \threshold is $\oh{\log(n/ \delta )/\epsi}$
and $\oh{n/\epsi}$,
the adaptive and query complexities for Algorithm \ref{alg:boost} are
$\oh{ \frac{ \log{ \alpha^{-1} } }{\epsi^2} \log \left( \frac{ n \log \left( \alpha^{-1} \right) }{\epsi } \right) },
\oh{ \frac{ \log{ \alpha^{-1} } }{\epsi^2} n}, $ respectively.

\textbf{Approximation Ratio.}
Let $A_j$ be the set $A$ we get after Line \ref{line:boostUpdateA}, 
and let $S_j$ be the set returned by \threshold in iteration $j$
of the \textbf{while} loop.
Let $\ell$ be the number of iterations of the \textbf{while} loop.

First, in the case that $|A| < k$ at termination, \threshold returns
$0\leq|S_\ell| < k-|A_{\ell-1}|$ at the last iteration.
From Theorem \ref{theorem:threshold}, for any $o \in O$, 
$\marge{o}{A} < \tau < \Gamma/(3k)$. By submodularity and monotonicity,
$
f(O) - f(A)  \leq f(O\cup A) - f(A) 
\leq \sum_{o \in O\backslash A}\marge{o}{A}
\leq \sum_{o \in O\backslash A} \Gamma/(3k) 
\leq f(O)/3,
$ and the ratio holds.

Second, consider the case that $|A| = k$.
Suppose in iteration $j+1$, 
\threshold returns a nonempty set $S_{j+1}$. 
Then, in the previous iteration $j$,
\threshold returns a set $S_j$ that $0 \leq |S_j|<k-|A_{j-1}|$.
From Theorem \ref{theorem:threshold}, 
\color{\rev}
\begin{equation} \label{ineq:boost1}
	f(O)-f(A_{j+1}) \le \left(1-\frac{(1-2\epsi/3)(1-\epsi)}{(1+\epsi/3)k} |A_{j+1}\backslash A_j|\right)(f(O)-f(A_j)).
\end{equation}
\color{black}
The above inequality also holds when $A_{j+1}=A_j$. 
Therefore, it holds that
\color{\rev}
\begin{equation}\label{ineq:boost2}
	f(O)-f(A) 
\le e^{-\frac{(1-2\epsi/3)(1-\epsi)}{1+\epsi/3}} f(O)
\le (1/e+\epsi)f(O).
\end{equation}
\color{black}
The detailed proof of 
Inequality \ref{ineq:boost1} and \ref{ineq:boost2}
can be found in Appendix \ref{apdx:boost-ratio}.
\end{proof}
\textbf{Main Algorithm: \flsabr.}
To obtain the main algorithm of this paper (and its nearly optimal
theoretical guarantees), we use \boost with the solution value $\Gamma$ and ratio $\alpha$
given by \linearseq. Because this choice requires an initial run of \linearseq,
we denote this algorithm by \flsabr.
Thus, 
\flsabr integrates \linearseq and \boost to get nearly the optimal $1-1/e$ ratio
with query complexity of $\oh{n}$ and adaptivity of $\oh{\log(n)}$.


\section{Empirical Evaluation} \label{sec:exp}
\begin{figure}[t]
  \subfigure[]{
    \includegraphics[width=0.23\textwidth, height=0.11\textheight]{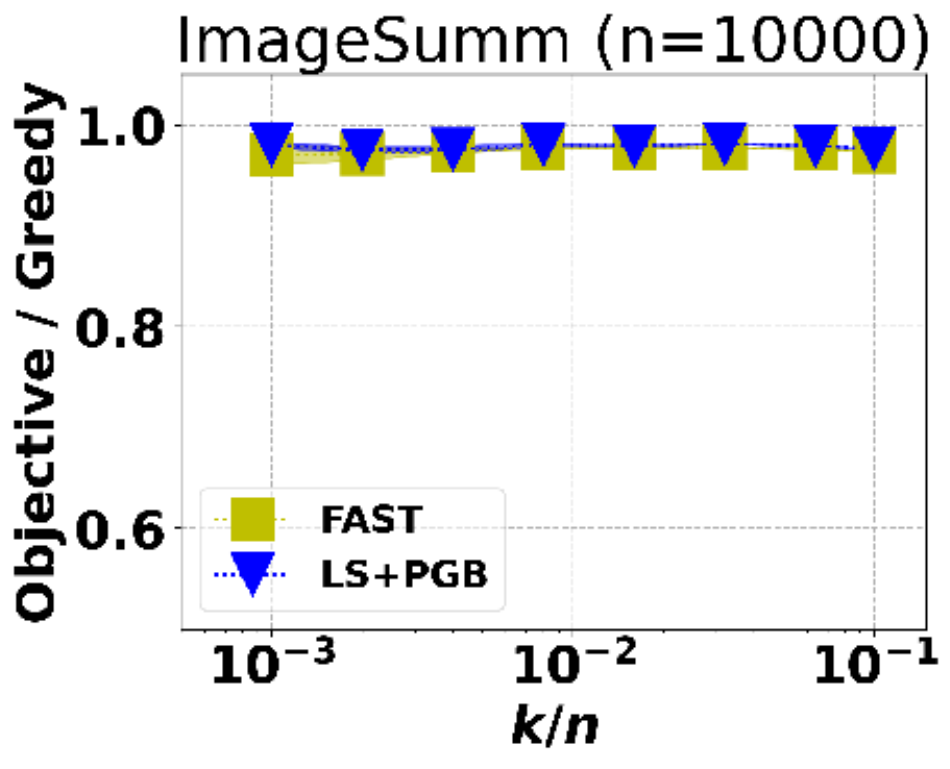}\label{fig:objC}
  }
  \subfigure[]{
    \includegraphics[width=0.23\textwidth, height=0.11\textheight]{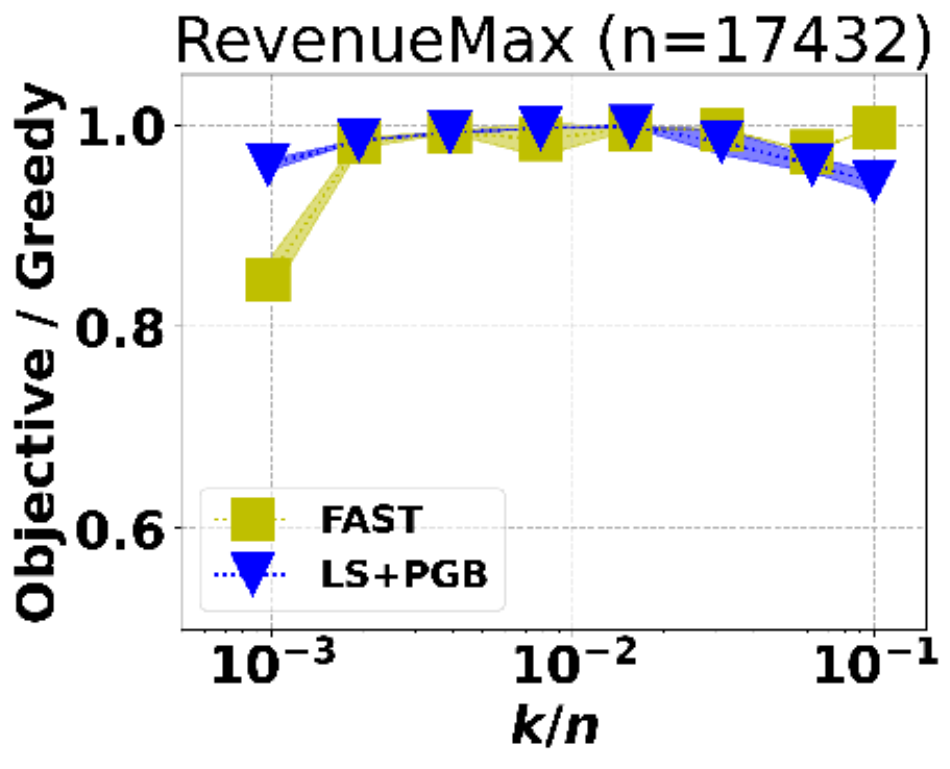} \label{fig:objD}
  }
  \subfigure[]{
    \includegraphics[width=0.23\textwidth, height=0.11\textheight]{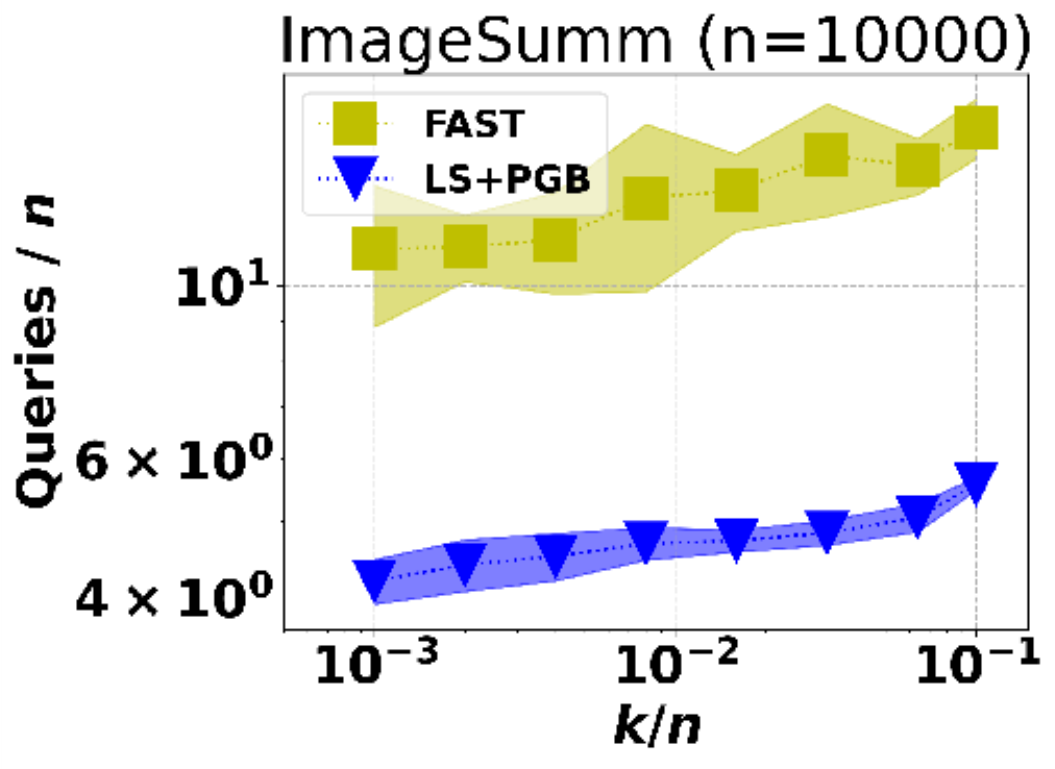}\label{fig:qryC}
  }
  \subfigure[]{
    \includegraphics[width=0.23\textwidth, height=0.11\textheight]{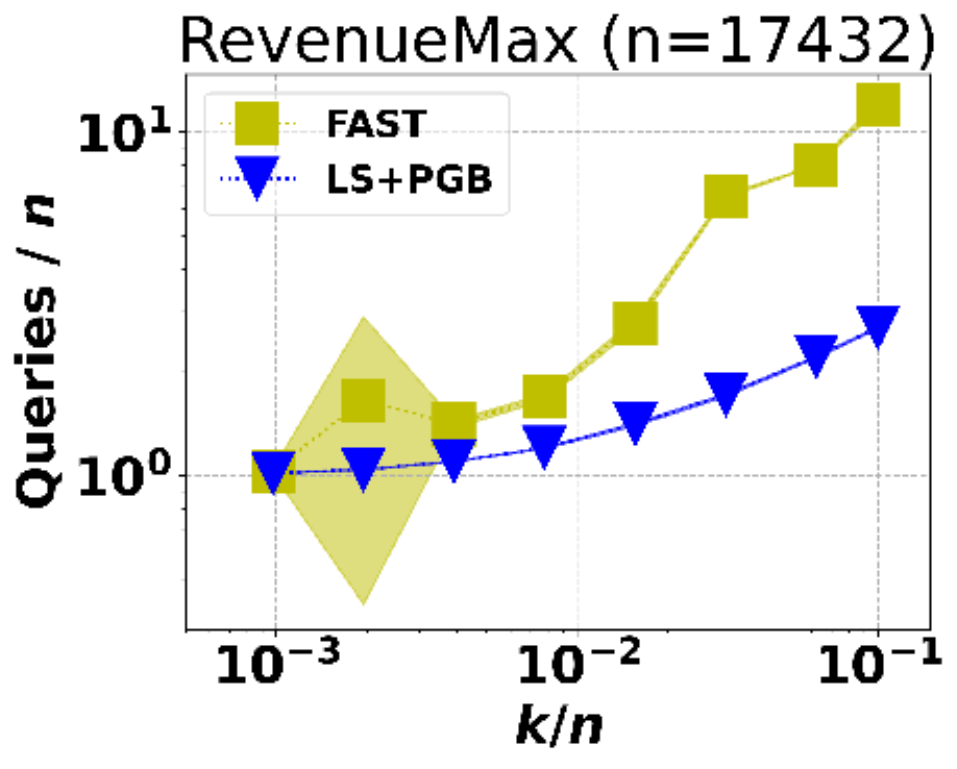} \label{fig:qryD}
  }
  
  \subfigure[]{
    \includegraphics[width=0.23\textwidth, height=0.11\textheight]{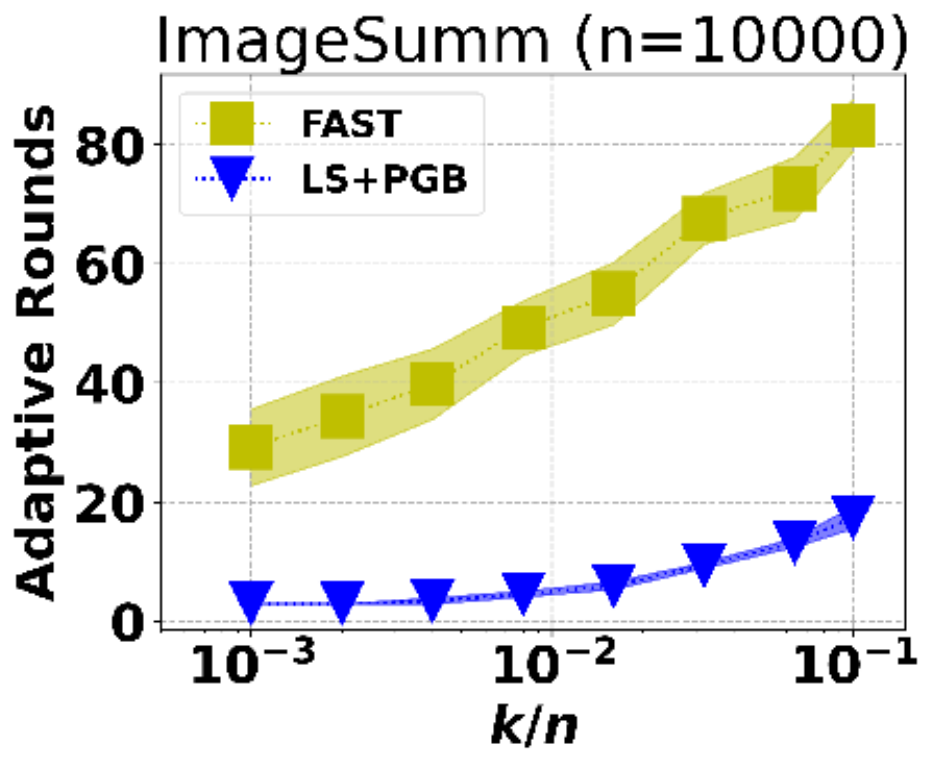}\label{fig:adaC}
  }
  \subfigure[]{
    \includegraphics[width=0.23\textwidth, height=0.11\textheight]{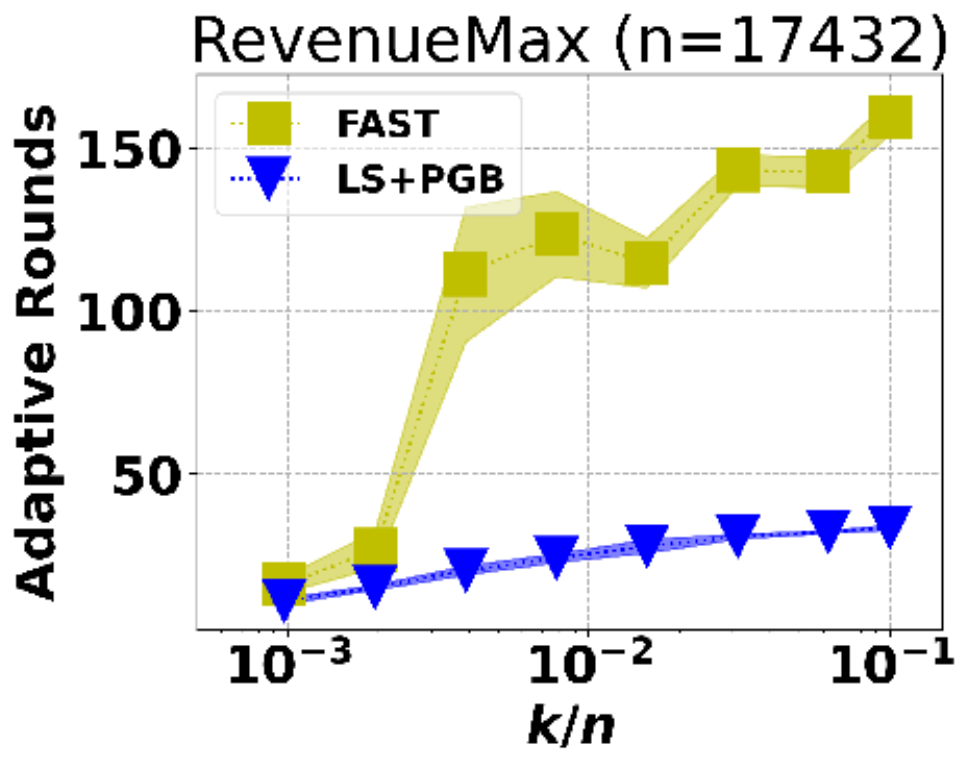}\label{fig:adaD}
  }
  \subfigure[]{
    \includegraphics[width=0.23\textwidth, height=0.11\textheight]{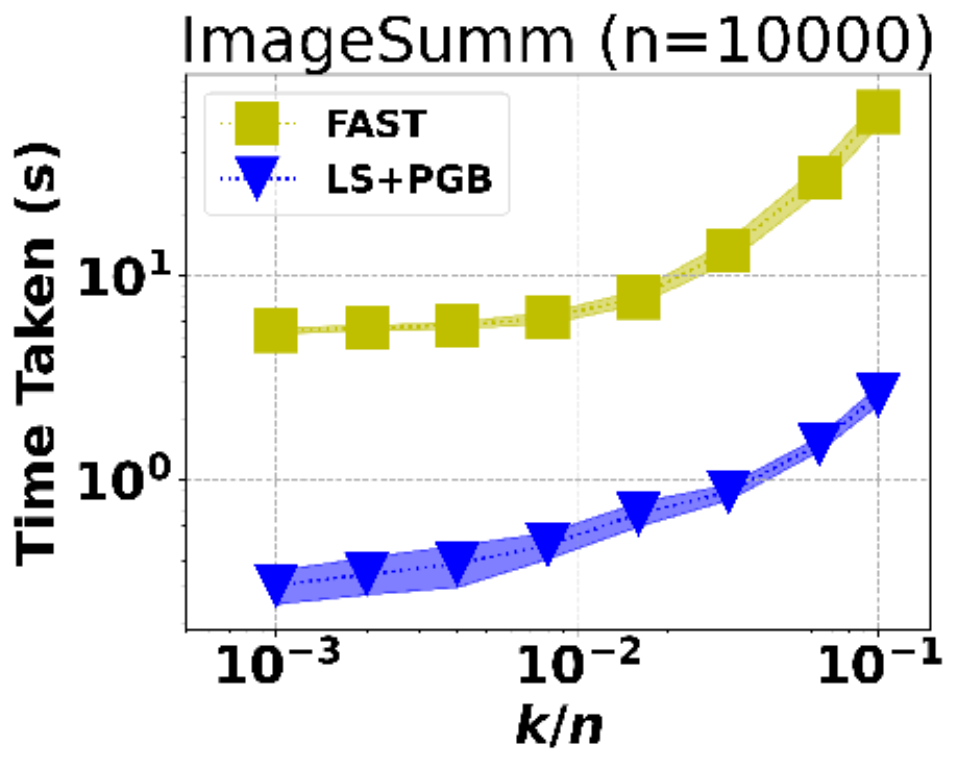}\label{fig:timeC}
  }
  \subfigure[]{
    \includegraphics[width=0.23\textwidth, height=0.11\textheight]{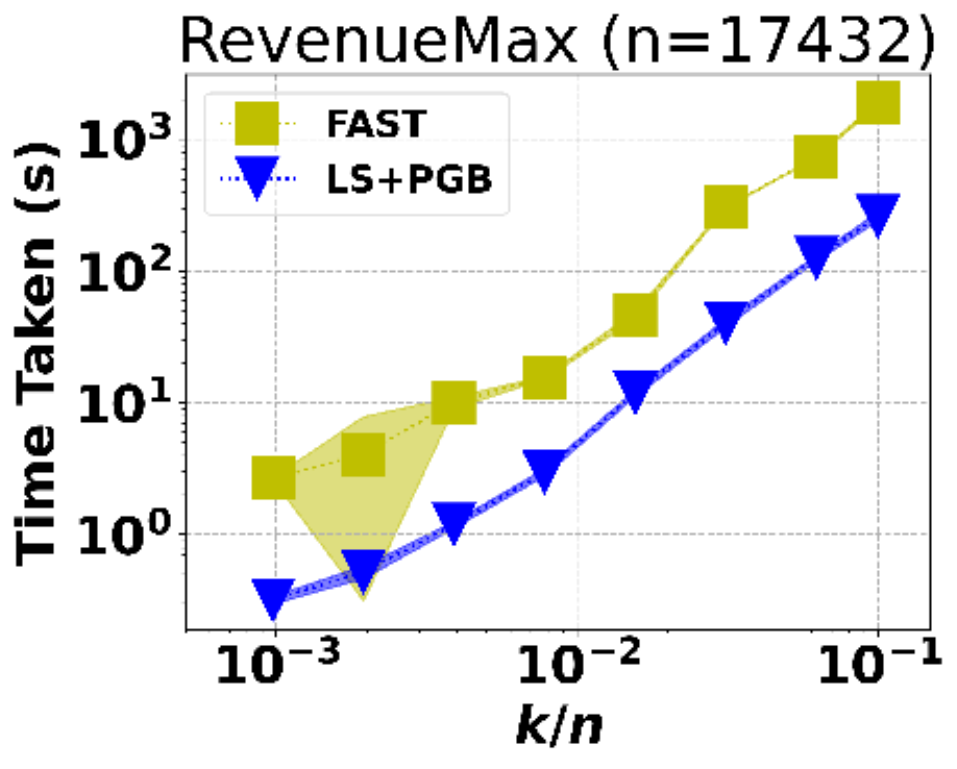}\label{fig:timeD}
  }
  \caption{Evaluation of adaptive algorithms on ImageSumm and RevenueMax in terms of objective value normalized by the standard greedy value (Figure \ref{fig:objC} -  \ref{fig:objD}), total number of queries (Figure \ref{fig:qryC} -  \ref{fig:qryD}), total adaptive rounds (Figure \ref{fig:adaC} -  \ref{fig:adaD}) and the time required by each algorithm (Figure \ref{fig:timeC} -  \ref{fig:timeD})} \label{fig:evalMain}
\end{figure} 
In this section, we demonstrate that the empirical performance
of \flsabr outperforms that of \fast
for the metrics of total time, total queries, adaptive rounds,
and objective value
across six applications of
\sm 
: maximum cover on random graphs (MaxCover),
twitter feed summarization (TweetSumm), image summarization (ImageSumm),
influence maximization (Influence), revenue maximization (RevMax),
and Traffic Speeding Sensor Placement (Traffic). 
See
Appendix \ref{subsec:obj} for the definition of the objectives.
The sizes $n$ of the ground sets range from $n=1885$ to $100000$.


\textbf{Implementation and Environment.}
We evaluate the same implementation
of \fast used in \citet{Breuer2019}.
Our implementation of \flsabr
is parallelized using
the Message Passing Interface (MPI) within the same Python
codebase as \fast (see the 
Supplementary Material for source code).
Practical optimizations to \linearseq
are made, which do not compromise the theoretical
guarantees, which are discussed in Appendix \ref{apx:practical-opt}.
The hardware of the system consists of 40 Intel(R) Xeon(R) Gold 5218R CPU @ 2.10GHz cores (with 80 threads available), of which up to 75 threads are made available to the algorithms for the experiments.
On each instance,
the algorithms are repeated independently for five repetitions,
and the mean and standard deviation of the objective value, total queries, adaptive rounds and parallel (wall clock) runtime to the submodular function is plotted.

\textbf{Parameters.}
The parameters $\epsi, \delta$ of \fast
are set to enforce the nominal ratio of
$1 - 1/e - 0.1 \approx 0.53$ with probability $0.95$;
these are the same parameter
settings for \fast as in the \citet{Breuer2019} evaluation.
The $\epsi$ parameter of \flsabr is set to enforce the same
ratio with probability $1 - 2/n$.
With these parameters, \fast ensures
its ratio only if $k \ge \theta( \epsi, \delta, k ) = \fracc{2\log( 2 \delta^{-1} \log( \frac{1}{\epsi}\log (k) ) )}{\epsi^2 (1 - 5 \epsi) } \ge 7103$.
Since $k < 7103$ on many of our instances,
\fast is evaluated in these instances
as a theoretically motivated heuristic. In contrast,
the ratio of \flsabr holds on all instances evaluated.
We use exponentially increasing $k$ values from $n/1000$ to $n/10$
for each application to explore the behavior of
each algorithm across a broad range of instance sizes.

\textbf{Overview of Results.}
Figure \ref{fig:evalMain} illustrates the comparison with \fast across the ImageSumm and RevenueMax application;
results on other applications are shown in Appendix \ref{apx:exp}.  
\textbf{Runtime:} \flsabr is faster than \fast by more than $1\%$ on $80\%$ 
of instances
evaluated; and is faster by an order of magnitude on $14\%$ of instances.
\textbf{Objective value:} \flsabr achieves higher objective 
by more than $1\%$ on $50\%$ of instances,
whereas \fast achieves higher objective 
by more than $1\%$ on $8\%$ of instances.
\textbf{Adaptive rounds:} \flsabr achieves more than $1\%$ fewer adaptive rounds
on $75\%$ of instances,
while \fast achieves more than $1\%$ fewer adaptive rounds on $22\%$ of instances. 
\textbf{Total queries:} \flsabr uses more than 1\% fewer
queries on $84\%$ of scenarios with \fast using
more than 1\% fewer
queries on $9\%$ of scenarios. 
In summary, \flsabr frequently gives substantial improvement in objective value, queries, adaptive rounds,
and parallel runtime. 
Comparison of the arithmetic means of the metrics over all instances is given 
in Table \ref{table:cmp-exp}.
Finally, \fast and \flsabr show very
similar linear speedup with the number of processors employed:
as shown in Fig. \ref{fig:parallel}.

\section{Concluding Remarks} \label{sec:concl}
In this work, we have introduced the
algorithm \flsabr, which is highly
parallelizable and achieves state-of-the-art
empirical performance over any previous algorithm
for \sm; also, \flsabr is nearly optimal theoretically
in terms of query complexity, adaptivity, and approximation
ratio. An integral component of \flsabr is
our preprocessing algorithm \linearseq,
which reduces the interval
containing \opt to a small constant size
in expected linear time and low adaptivity,
which may be independently useful. Another
component of \flsabr is the \threshold
procedure, which adds all elements with
gain above a threshold in a parallelizable
manner and improves existing algorithms
in the literature for the same task.

\clearpage

\section*{Acknowledgements}
The work of Yixin Chen, Tonmoy Dey, and Alan Kuhnle was partially 
supported by Florida State University.
The authors have received no third-party funding in direct support of this work. 
The authors have no additional revenues from other sources related to this work.
\bibliographystyle{plainnatfixed}
\bibliography{mend,bibG1_G5}

\clearpage
\appendix
\section{Probability Lemma and Concentration Bounds} \label{apx:prob}
In this section, we state the Chernoff bound and prove a useful
lemma (Lemma \ref{lemma:indep})
for working with a sequence of dependent Bernoulli trials.
Lemma \ref{lemma:indep} is applied in the analysis of both
\threshold and \linearseq.
\begin{lemma}\label{lemma:chernoff}
    (Chernoff bounds \cite{mitzenmacher2017probability}). 
    Suppose $X_1$, ... , $X_n$ are independent binary random variables such that 
    $\prob{X_i = 1} = p_i$. Let $\mu = \sum_{i=1}^n p_i$, and 
    $X = \sum_{i=1}^n X_i$. Then for any $\delta \geq 0$, we have
    \begin{align}
        \prob{X \ge (1+\delta)\mu} \le e^{-\frac{\delta^2 \mu}{2+\delta}}.
    \end{align}
    Moreover, for any $0 \leq \delta \leq 1$, we have
    \begin{align}
        \prob{X \le (1-\delta)\mu} \le e^{-\frac{\delta^2 \mu}{2}}.
    \end{align}
\end{lemma}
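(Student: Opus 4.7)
The plan is to prove both inequalities via the standard Chernoff--Cram\'er moment generating function (MGF) method, then simplify the resulting bound using elementary analytic inequalities on $\log(1+\delta)$ and $\log(1-\delta)$. Since the $X_i$ are independent Bernoulli random variables, the MGF of $X_i$ equals $E[e^{tX_i}] = 1 + p_i(e^t-1)$, and applying $1+x \le e^x$ gives the crucial per-coordinate bound $E[e^{tX_i}] \le e^{p_i(e^t-1)}$ valid for every real $t$. Multiplying across coordinates, independence yields $E[e^{tX}] \le e^{\mu(e^t-1)}$, which is the master inequality underlying both tail bounds.

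For the upper tail, I would fix $t>0$ and invoke Markov's inequality on the nonnegative random variable $e^{tX}$ to obtain
\[
\Pr\bigl(X \ge (1+\delta)\mu\bigr) \;=\; \Pr\bigl(e^{tX} \ge e^{t(1+\delta)\mu}\bigr) \;\le\; \exp\bigl(\mu(e^t-1) - t(1+\delta)\mu\bigr).
\]
Optimizing in $t$ gives the choice $t = \log(1+\delta)$, producing the classical Chernoff bound $\bigl(e^{\delta}/(1+\delta)^{1+\delta}\bigr)^\mu$. To reach the stated form $e^{-\delta^2\mu/(2+\delta)}$, I would show the analytic inequality $\delta - (1+\delta)\log(1+\delta) \le -\delta^2/(2+\delta)$ for all $\delta \ge 0$. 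This last step is typically verified by defining $g(\delta) = (1+\delta)\log(1+\delta) - \delta - \delta^2/(2+\delta)$, checking $g(0)=0$, and differentiating to show $g'(\delta) \ge 0$; the derivative simplifies cleanly to something manifestly nonnegative.

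For the lower tail, the approach is symmetric but uses $t<0$: set $t = -s$ with $s>0$, apply Markov to $e^{-sX} \ge e^{-s(1-\delta)\mu}$, and combine with $E[e^{-sX}] \le e^{\mu(e^{-s}-1)}$ to obtain
\[
\Pr\bigl(X \le (1-\delta)\mu\bigr) \;\le\; \exp\bigl(\mu(e^{-s}-1) + s(1-\delta)\mu\bigr).
\]
Choosing $s = -\log(1-\delta)$ (valid since $0 \le \delta \le 1$, with the boundary case $\delta=1$ handled by a limit) gives the bound $\bigl(e^{-\delta}/(1-\delta)^{1-\delta}\bigr)^\mu$. To arrive at the stated exponent $-\delta^2\mu/2$, I would establish $-\delta -(1-\delta)\log(1-\delta) \le -\delta^2/2$ on $[0,1]$, which reduces to showing $(1-\delta)\log(1-\delta) \ge -\delta + \delta^2/2$; expanding $\log(1-\delta) = -\sum_{k\ge 1}\delta^k/k$ in its Taylor series and collecting terms yields an explicit nonnegative series, giving the inequality.

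The only nontrivial obstacle is the pair of scalar inequalities used to convert the tight Chernoff expressions into the simplified exponential forms. Neither is deep, but the upper-tail bound $\delta - (1+\delta)\log(1+\delta) \le -\delta^2/(2+\delta)$ is slightly more delicate than the lower-tail counterpart and requires an attentive derivative computation rather than a naive Taylor truncation. Once these two scalar facts are in hand, the rest of the argument is a clean MGF--Markov calculation and independence; both tail bounds follow immediately from the master inequality $E[e^{tX}] \le e^{\mu(e^t-1)}$.
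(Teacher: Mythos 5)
The paper does not prove this lemma at all: it is quoted directly from the cited textbook of Mitzenmacher and Upfal, so there is no in-paper argument to compare against. Your proposal is the standard and correct derivation (Markov's inequality applied to $e^{tX}$, the master bound $\ex{e^{tX}} \le e^{\mu(e^t-1)}$, optimal choice of $t$, and the two scalar inequalities $\delta-(1+\delta)\log(1+\delta)\le -\delta^2/(2+\delta)$ and $(1-\delta)\log(1-\delta)\ge -\delta+\delta^2/2$, the first of which follows from $\log(1+\delta)\ge 2\delta/(2+\delta)$ and the second from the Taylor expansion exactly as you describe), which is essentially the proof in the cited reference.
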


\begin{lemma} \label{lemma:indep}
  Suppose there is a sequence of $n$ Bernoulli trials:
  $X_1, X_2, \ldots, X_n,$
  where the success probability of $X_i$
  depends on the results of
  the preceding trials $X_1, \ldots, X_{i-1}$.
  Suppose it holds that $$\prob{X_i = 1 | X_1 = x_1, X_2 = x_2, \ldots, X_{i-1} = x_{i-1} } \ge \eta,$$ where $\eta > 0$ is a constant and $x_1,\ldots,x_{i-1}$ are arbitrary.

  Then, if $Y_1,\ldots, Y_n$ are independent Bernoulli trials, each with probability $\eta$ of
  success, then $$\prob {\sum_{i = 1}^n X_i \le b } \le \prob{\sum_{i = 1}^n Y_i \le b }, $$
  where $b$ is an arbitrary integer.

  Moreover, let $A$ be the first occurrence of success in sequence $X_i$.
  Then, $$\ex{A} \le 1/\eta.$$
\end{lemma}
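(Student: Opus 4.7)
The plan is to prove both claims by a single coupling construction that realizes the $X_i$ and $Y_i$ on the same probability space in such a way that $X_i \ge Y_i$ pointwise. Once this coupling is in place, both statements become immediate.

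For the coupling, I would introduce an i.i.d.\ sequence $U_1, U_2, \ldots$ of uniform $[0,1]$ random variables on an auxiliary probability space and set $Y_i = \mathbf{1}[U_i \le \eta]$, which makes the $Y_i$ independent Bernoulli$(\eta)$ as required. To define the $X_i$ on the same space, I would proceed recursively: having already determined $X_1, \ldots, X_{i-1}$ from $U_1, \ldots, U_{i-1}$, let $p_i = \prob{X_i = 1 \mid X_1, \ldots, X_{i-1}}$ (a function of the observed history, hence measurable with respect to $\sigma(U_1, \ldots, U_{i-1})$), and set $X_i = \mathbf{1}[U_i \le p_i]$. By construction $X_i$ has the correct conditional distribution, and since the hypothesis $p_i \ge \eta$ holds for every history, we obtain $X_i \ge Y_i$ pointwise.

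From the pointwise inequality the first claim follows immediately: $\sum_{i=1}^n X_i \ge \sum_{i=1}^n Y_i$, so the event $\{\sum X_i \le b\}$ is contained in $\{\sum Y_i \le b\}$, giving $\prob{\sum X_i \le b} \le \prob{\sum Y_i \le b}$. For the second claim, let $A$ be the first index with $X_A = 1$ and let $B$ be the first index with $Y_B = 1$; since $Y_i = 1$ forces $X_i = 1$, we have $A \le B$ pointwise. As $B$ is geometric with parameter $\eta$ and satisfies $\ex{B} = 1/\eta$, we conclude $\ex{A} \le \ex{B} = 1/\eta$.

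The only subtle point is the measurability of $p_i$ as a function of $U_1, \ldots, U_{i-1}$: one has to check that the conditional success probability, viewed as a deterministic function of the history of outcomes $(x_1, \ldots, x_{i-1})$, pulls back through the recursion to a measurable function of $(U_1, \ldots, U_{i-1})$. This is routine by induction, since each $X_j$ for $j < i$ is a measurable function of $(U_1, \ldots, U_j)$. With this technicality dispatched, the remainder of the argument is the standard stochastic-dominance-by-coupling template, and no further calculation is required.
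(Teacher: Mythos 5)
Your proof is correct, but it takes a genuinely different route from the paper. The paper proves the stochastic dominance by a hybrid (swapping) argument: it defines the interpolating sums $Z_j = \sum_{i=1}^j X_i + \sum_{i=j+1}^n Y_i$ and shows $\prob{Z_j \le b} \le \prob{Z_{j-1} \le b}$ for each $j$ by decomposing the event with the law of total probability and using the bound $\prob{X_j = 0 \mid \text{history}} \le 1 - \eta = \prob{Y_j = 0}$; it then establishes $\ex{A} \le 1/\eta$ separately via the tail-sum formula $\ex{A} = 1 + \sum_{a \ge 1}\prob{X_1 = \cdots = X_a = 0}$ together with the first part. You instead build a single monotone coupling $X_i = \mathbf{1}[U_i \le p_i]$, $Y_i = \mathbf{1}[U_i \le \eta]$ on a common space of i.i.d.\ uniforms, obtaining $X_i \ge Y_i$ pointwise; both conclusions then follow from event inclusion and from $A \le B$ with $B$ geometric. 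Your approach is arguably cleaner: it proves the stronger pointwise domination in one stroke, avoids the delicate conditional-probability bookkeeping of the swapping argument (which must be redone in the paper's Lemma \ref{lemma:indep2} for a variant with a stopping time), and dispatches both claims simultaneously. What the paper's argument buys is that it works entirely on the original probability space with no auxiliary construction, and its interpolation template is reused almost verbatim elsewhere in the appendix. The one point you should make explicit is that the coupled $(X_1,\ldots,X_n)$ has the prescribed joint law --- this follows from the tower property, since $p_i$ is $\sigma(X_1,\ldots,X_{i-1})$-measurable and $U_i$ is independent of $\sigma(U_1,\ldots,U_{i-1})$ --- and that the $\ex{A}$ claim implicitly requires the trials to continue until a success occurs (the paper's own tail-sum computation makes the same implicit assumption), which your coupling accommodates by simply extending the uniform sequence.
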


\begin{proof}
Let $Z_j = \sum_{i=1}^j X_i + \sum_{i=j+1}^n Y_i$, where 
$Z_0 = \sum_{i=1}^n Y_i$ and $Z_n = \sum_{i=1}^n X_i$.
Our goal is to prove that $\prob{Z_n\le b} \le \prob{Z_0\le b}$.
This lemma holds, if for any $j = 1,\ldots,n$, 
$\prob{Z_j\le b} \le \prob{Z_{j-1}\le b}$.
With Bayes Theorem and Total Probability Theorem,
\begin{align*}
\prob{Z_j\le b}&=\prob{X_j=0,Z_j-X_j\le b-1}+
\prob{X_j=1,Z_j-X_j\le b-1}\\
&\quad+\prob{X_j=0,Z_j-X_j= b}\\
&=\prob{Z_j-X_j\le b-1}+ \sum_{Z_j-X_j=b} 
\prob{X_j=0|X_1,\ldots,X_{j-1},Y_{j+1},\ldots,Y_n}\\
&\quad\cdot
\prob{X_1,\ldots,X_{j-1},Y_{j+1},\ldots,Y_n}\\
&\le \prob{Z_{j-1}-Y_j\le b-1} \\
&\quad+ \sum_{Z_{j-1}-Y_j=b} 
\prob{Y_j=0}\cdot \prob{X_1,\ldots,X_{j-1},Y_{j+1},\ldots,Y_n}\numberthis \label{ineq:dep-to-indep}\\
&= \prob{Z_{j-1}-Y_j\le b-1} +\prob{Y_j=0,Z_{j-1}-Y_j=b}\\
&= \prob{Z_{j-1}\le b},
\end{align*}
where inequality \ref{ineq:dep-to-indep} follows from 
$\prob{X_j=0|X_1 = x_1, X_2 = x_2, \ldots, X_{j-1} = x_{j-1} } \leq 
1-\eta=\prob{Y_j=0}$ and $Z_j-X_j=Z_{j-1}-Y_j$.

Following the first inequality, we can prove the second one as follows:
\begin{align*}
\ex{A}&= \sum_{a\ge1} a \prob{A=a}\\
&=\sum_{a\ge1} a \prob{X_1=\ldots=X_{a-1}=0, X_a=1}\\
&=\sum_{a\ge1} a\left(\prob{X_1=\ldots=X_{a-1}=0}-\prob{X_1=\ldots=X_a=0}\right)\\
&=1+\sum_{a\ge1} \prob{X_1=\ldots=X_a=0}\\
&=1+\sum_{a\ge1} \prob{\sum_{i=1}^a X_a\le 0}\\
&\le1+\sum_{a\ge1} \prob{\sum_{i=1}^a Y_a\le 0}\\
&=1+\sum_{a\ge1}\prob{Y_1=\ldots=Y_a=0}\\
&=1+\sum_{a\ge1}(1-\eta)^a\\
&=1/\eta.
\end{align*}
\end{proof}
\begin{lemma} \label{lemma:indep2}
  Suppose there is a sequence of $n+1$ Bernoulli trials:
  $X_1, X_2, \ldots,X_{n+1},$
  where the success probability of $X_i$
  depends on the results of
  the preceding trials $X_1, \ldots, X_{i-1}$,
  and it decreases from 1 to 0.
  Let $t$ be a random variable based on the $n+1$ Bernoulli trials.
  Suppose it holds that 
  $$\prob{X_i = 1 | X_1 = x_1, X_2 = x_2, \ldots, X_{i-1} = x_{i-1}, i\le t } \ge \eta,$$ 
  where $x_1,\ldots,x_{i-1}$ are arbitrary and $0 < \eta < 1$ is a constant.
  Then, if $Y_1,\ldots, Y_{n+1}$ are independent Bernoulli trials, each with probability $\eta$ of
  success, then 
  $$\prob {\sum_{i = 1}^t X_i \le bt } \le \prob{\sum_{i = 1}^t Y_i \le bt }, $$
  where $b$ is an arbitrary integer.
\end{lemma}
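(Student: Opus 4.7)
The plan is to prove Lemma \ref{lemma:indep2} by a monotone coupling argument, parallel to (but easier to carry out than) the hybrid-step calculation used for Lemma \ref{lemma:indep}. The key observation that drives everything is the hypothesis itself: on the event $\{i \le t\}$, the conditional success probability of $X_i$ is at least $\eta$, which is exactly the (unconditional) success probability of $Y_i$. That is precisely the ingredient needed to dominate the $Y$-process by the $X$-process pointwise up to the random time $t$, from which the stated inequality between the sums truncated at $t$ follows immediately.

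Concretely, first I would introduce iid uniform $[0,1]$ variables $U_1, \ldots, U_{n+1}$ on a common probability space and set $Y_i := \mathbf{1}[U_i \le \eta]$, so that the $Y_i$ are independent Bernoulli$(\eta)$ as required. Then I would construct the $X_i$ inductively. Having built $X_1, \ldots, X_{i-1}$, the conditional success probability $p_i := \prob{X_i = 1 \mid X_1, \ldots, X_{i-1}}$ is a function of the history; on the event $\{i \le t\}$ it satisfies $p_i \ge \eta$ by hypothesis. Set $X_i := \mathbf{1}[U_i \le p_i]$ on $\{i \le t\}$, and on $\{i > t\}$ draw $X_i$ from its native conditional distribution independently of $U_i$. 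By construction the joint law of $(X_1, \ldots, X_{n+1})$ matches the one given in the statement, and for every index $i \le t$ the inclusion $\{U_i \le \eta\} \subseteq \{U_i \le p_i\}$ yields $X_i \ge Y_i$ pointwise. Summing over $i = 1, \ldots, t$ produces $\sum_{i=1}^t X_i \ge \sum_{i=1}^t Y_i$ on every sample point, whence $\{\sum_{i=1}^t X_i \le bt\} \subseteq \{\sum_{i=1}^t Y_i \le bt\}$ and the claimed inequality on probabilities follows.

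The main obstacle I anticipate is a measurability subtlety around the random time $t$. The statement only says that $t$ is "a random variable based on the $n+1$ Bernoulli trials"; I would read this as $t$ being measurable with respect to the $X$-sequence, in which case the event $\{i \le t\}$ used to switch the definition of $X_i$ at step $i$ is well defined from the history (and from later $X_j$'s that are drawn from their native conditionals, independently of the $U_j$'s). If one prefers to avoid introducing any case split at all, the cleanest alternative is to couple $X_i$ and $Y_i$ through the shared uniform $U_i$ for every $i$ — the construction still reproduces both marginals — and then invoke the pointwise inequality $X_i \ge Y_i$ only on $\{i \le t\}$, which is all the proof uses. Once the coupling is set up correctly, the remainder of the argument is the one-line pointwise-to-stochastic-dominance step described above, with no further calculation required.
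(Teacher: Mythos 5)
Your construction is a genuinely different route from the paper's: the paper proves this by a hybrid (interpolation) argument, introducing a \emph{fresh} sequence $Y_1,\ldots,Y_{n+1}$ independent of the $X$-process, setting $Z_j=\sum_{i=1}^{j}X_i 1_{\{i\le t\}}+\sum_{i=j+1}^{n+1}Y_i 1_{\{i\le t\}}$, and showing $\prob{Z_j\le bt}\le\prob{Z_{j-1}\le bt}$ one coordinate at a time via $\prob{X_j=0\mid \text{history},\,j\le t}\le 1-\eta=\prob{Y_j=0}$. Your pointwise-domination step itself is sound under the reading you adopt (that $\{i\le t\}$ is determined by $X_1,\ldots,X_{i-1}$, i.e., $t$ is a stopping time); the paper needs the same reading, and it holds in the applications, where $t$ is essentially the first index at which the conditional success probability crosses $\eta$.

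The genuine gap is in the final identification of the right-hand side. Because $Y_i=\mathbf{1}[U_i\le\eta]$ and $X_i=\mathbf{1}[U_i\le p_i]$ share the uniform $U_i$, your $Y$-sequence is correlated with the $X$-sequence and hence with the random time $t$. What your argument delivers is $\prob{\sum_{i=1}^{t}X_i\le bt}\le\prob{\sum_{i=1}^{t}\tilde Y_i\le bt}$ for this \emph{coupled} $\tilde Y$, which is not the same quantity as the lemma's right-hand side with $Y$ independent of $t$ --- and it is the independent version that is consumed downstream, where Lemmas \ref{lemma:oneBadBlock} and \ref{lemma:threshold-prob} bound the right-hand side by conditioning on the random index and applying Markov's inequality to $\sum_{i=1}^{\lambda}Y_i$ for each fixed $\lambda$. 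The two right-hand sides can genuinely differ, and domination alone does not bridge them: with $n+1=2$, $X_1,X_2$ i.i.d.\ Bernoulli$(1/2)$, $\eta=1/2$, $t=2$ if $X_1=1$ and $t=1$ otherwise, and $b=0$, one computes $\prob{\sum_{i=1}^{t}X_i\le 0}=\prob{\sum_{i=1}^{t}\tilde Y_i\le 0}=1/2$ while the independent right-hand side equals $3/8$. (That configuration violates the lemma's side condition that the success probabilities decrease from $1$ to $0$, so it is not a counterexample to the lemma itself, but it shows your chain of inequalities cannot be closed without using that extra structure or the independence explicitly.) To repair the argument you must either run the comparison against a genuinely independent $Y$-sequence --- which is precisely what the paper's hybrid step, in the style of Lemma \ref{lemma:indep}, is designed to do --- or restate the lemma for the coupled $\tilde Y$ and redo the downstream Markov/Chernoff bounds using only that $\tilde Y_i$ is Bernoulli$(\eta)$ conditionally on $X_1,\ldots,X_{i-1}$ and that $\{i\le t\}$ is measurable with respect to that history.
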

\begin{proof}
  Let $Z_j=\sum_{i=1}^j X_i \cdot 1_{\{i\le t\}} + 
  \sum_{i=j+1}^{n+1} Y_i \cdot 1_{\{i\le t\}}$,
  where $$Z_0=\sum_{i=1}^{n+1} Y_i \cdot 1_{\{i\le t\}} = 
  \sum_{i = 1}^{t} Y_i,$$
  and $$Z_{n+1}=\sum_{i=1}^{n+1} X_i \cdot 1_{\{i\le t\}} = 
  \sum_{i = 1}^{t} X_i.$$
  If for any $1\le j \le n+1$, 
  \begin{equation} \label{ineq:seq}
    \prob{Z_j \le bt} \le 
  \prob{Z_{j-1} \le bt},
  \end{equation}
  then,
  $$\prob{Z_{n+1} \le bt} \le
  \prob{Z_0 \le bt}.$$

  We prove Inequality \ref{ineq:seq} as follows,
  \begin{align*}
    &\prob{Z_j \le bt}\\
    &= \prob{X_j\cdot 1_{\{j\le t\}}=0, Z_j-X_j\cdot 1_{\{j\le t\}} \le bt-1}\\
    & \quad +\prob{X_j\cdot 1_{\{j\le t\}}=1, Z_j-X_j\cdot 1_{\{j\le t\}} \le bt-1}\\
    & \quad +\prob{X_j\cdot 1_{\{j\le t\}}=0, Z_j-X_j\cdot 1_{\{j\le t\}} = bt}\\
    &= \prob{Z_j-X_j\cdot 1_{\{j\le t\}} \le bt-1}\\
    &\quad + \prob{1_{\{j\le t\}}=0,Z_j-X_j\cdot 1_{\{j\le t\}} = bt}\\
    &\quad + \prob{X_j=0,1_{\{j\le t\}}=1,Z_j-X_j\cdot 1_{\{j\le t\}} = bt}\\
    &= \prob{Z_j-X_j\cdot 1_{\{j\le t\}} \le bt-1}\\
    &\quad + \prob{1_{\{j\le t\}}=0,Z_j-X_j\cdot 1_{\{j\le t\}} = bt}\\
    &\quad + \sum_{Z_j-X_j\cdot 1_{\{j\le t\}} = bt,j\le t}
    \prob{X_j=0 | X_1,\cdots, X_{j-1},Y_{j+1},\cdots, Y_{n+1}, j\le t} \\
    &\quad \cdot \prob{X_1,\cdots, X_{j-1},Y_{j+1},\cdots, Y_{n+1}, j\le t}\\
    &\le \prob{Z_{j-1}-Y_j\cdot 1_{\{j\le t\}} \le bt-1}\\
    &\quad + \prob{1_{\{j\le t\}}=0,Z_{j-1}-Y_j\cdot 1_{\{j\le t\}} = bt}\\
    &\quad + \sum_{Z_{j-1}-Y_j\cdot 1_{\{j\le t\}} = bt,j\le t}
    \prob{Y_j=0}\cdot \prob{X_1,\cdots, X_{j-1},Y_{j+1},\cdots, Y_{n+1}, j\le t}\numberthis \label{ineq:dep-to-indep2}\\
    &= \prob{Z_{j-1}-Y_j\cdot 1_{\{j\le t\}} \le bt-1}\\
    &\quad + \prob{1_{\{j\le t\}}=0,Z_{j-1}-Y_j\cdot 1_{\{j\le t\}} = bt}\\
    &\quad + \prob{Y_j=0,1_{\{j \le t\}}=1,Z_{j-1}-Y_j\cdot 1_{\{j\le t\}} = bt}\\
    &= \prob{Z_{j-1}-Y_j\cdot 1_{\{j\le t\}} \le bt-1}\\
    &\quad + \prob{Y_j\cdot 1_{\{j\le t\}}=0,Z_{j-1}-Y_j\cdot 1_{\{j\le t\}} = bt}\\
    &= \prob{Z_{j-1}\le bt},
  \end{align*}
  where Inequality \ref{ineq:dep-to-indep2} follows from 
  $\prob{X_i = 1 | X_1 = x_1, X_2 = x_2, \ldots, X_{i-1} = x_{i-1}, i\le t } \ge \eta$
  and $Z_j-X_j\cdot 1_{\{j\le t\}} = Z_{j-1}-Y_j\cdot 1_{\{j\le t\}}$.
\end{proof}
\section{Highly Adaptive $0.25$-Approximation} \label{sec:adaptive-linear}
In this section, we show that Alg. \ref{alg:adaptive-linear} achieves
approximation ratio of $1/4$ in $n$ adaptive queries to the objective function $f$. Alg. \ref{alg:adaptive-linear} influences the design of \linearseq,
which uses similar ideas to obtain its constant ratio. See the discussion
in Section \ref{section:fastlinear}.

\textbf{Description of Alg. \ref{alg:adaptive-linear}.} This algorithm operates
in one \textbf{for} loop through the ground set. Each element $u$ is added to
a set $A$ iff. $\marge{u}{A} \ge f(A) / k$. The solution returned is the set
$A'$ of the last $k$ elements added to $A$.
\begin{algorithm}[t]
	\caption{Highly Adaptive Linear-Time Algorithm}
	\label{alg:adaptive-linear}
	\begin{algorithmic}[1]
	\State \textbf{Input:} evaluation oracle $f:2^{\mathcal N} \to \reals$, constraint $k$

	\State Initialize $A \gets \emptyset$ 
	\For{ $u \in \mathcal N$ } 
        \If{$\marge{u}{A} \ge \ff{A}/k$}
        \State $A \gets A \cup \{ u \}$
        \EndIf
        \EndFor
	\State \textbf{return} $A' \gets \{ \text{last } k \text{ elements added to } A \}$
\end{algorithmic}
\end{algorithm}
\begin{theorem} 
  \label{thm:adaptive-linear} Let $O$ be an optimal solution to \sm on instance $(f,k)$.
  Then the set $A'$ produced after running Alg. \ref{alg:adaptive-linear} on this instance
  satisfies $4f(A') \ge f(O)$.
\end{theorem}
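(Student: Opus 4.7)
The plan is to prove the ratio by showing two separate constant-factor inequalities and chaining them together: first that $f(A) \ge f(O)/2$ at termination, and second that $f(A') \ge f(A)/2$. Multiplying these yields the claimed $f(A') \ge f(O)/4$.

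For the first inequality, I would analyze the elements of $O$ that are discarded by the algorithm. Fix any $o \in O \setminus A$ and let $A_o$ denote the value of $A$ at the moment $o$ was tested on the \textbf{if} line. Since $o$ was rejected, $\marge{o}{A_o} < f(A_o)/k$, and by submodularity and monotonicity of $f$, $\marge{o}{A} \le \marge{o}{A_o} < f(A_o)/k \le f(A)/k$. Applying submodularity and monotonicity once more,
\begin{equation*}
f(O) - f(A) \le f(O \cup A) - f(A) \le \sum_{o \in O \setminus A} \marge{o}{A} \le |O \setminus A| \cdot \frac{f(A)}{k} \le f(A),
\end{equation*}
which rearranges to $f(O) \le 2 f(A)$.

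For the second inequality, let $a_1, a_2, \ldots, a_m$ be the sequence of elements added to $A$, and set $A_i = \{a_1, \ldots, a_i\}$. By the addition rule, $f(A_i) = f(A_{i-1}) + \marge{a_i}{A_{i-1}} \ge (1 + 1/k) f(A_{i-1})$. If $m \le k$, then $A' = A$ and the inequality is immediate. Otherwise $A \setminus A' = A_{m-k}$, and iterating the growth bound gives $f(A) \ge (1 + 1/k)^k f(A \setminus A')$. Since $(1+1/k)^k \ge 2$ for every integer $k \ge 1$, we obtain $f(A \setminus A') \le f(A)/2$, and then submodularity yields $f(A') \ge f(A) - f(A \setminus A') \ge f(A)/2$. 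Combining both inequalities gives $4 f(A') \ge 2 f(A) \ge f(O)$, as desired.

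There is no real obstacle here; the argument is essentially a clean two-step decomposition. The only subtlety worth flagging is the base case of the geometric growth bound, namely verifying $(1+1/k)^k \ge 2$ for all $k \ge 1$ (tight at $k=1$), and being careful in the first step to apply submodularity at the time of rejection rather than at termination so that the condition $\marge{o}{A_o} < f(A_o)/k$ can be invoked.
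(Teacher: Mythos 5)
Your proposal is correct, and it follows the same two-lemma decomposition as the paper: first $f(O) \le 2f(A)$ via the rejected elements of $O$, then $f(A) \le 2f(A')$, chained to give the factor $4$. The first step is argued identically to the paper's Lemma \ref{lemm:al-1}. The only genuine difference is in the second step: the paper telescopes additively over the last $k$ insertions, lower-bounding each marginal gain $\marge{a_i}{(A \setminus A') \cup A'_i}$ by $f(A \setminus A')/k$ via monotonicity and summing to get $f(A) - f(A\setminus A') \ge f(A\setminus A')$, whereas you compound the acceptance condition multiplicatively to get $f(A) \ge (1+1/k)^k f(A\setminus A')$ and invoke $(1+1/k)^k \ge 2$. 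Your version is in fact marginally stronger (the bound tends to $f(A)/e$ for large $k$, and is tight at $k=1$), and you correctly flag the base case. The paper's additive form has the advantage that it survives the relaxation to the block-wise, average version of the acceptance condition used in \linearseq (Lemma \ref{lemma:fastSubTwo}), where a clean per-element multiplicative growth factor is no longer available; for the exact algorithm of Alg. \ref{alg:adaptive-linear} both arguments are equally valid.
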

\begin{proof}
  \begin{lemma} \label{lemm:al-1}
    At termination of Alg. \ref{alg:adaptive-linear},
    $f(O) \leq 2f(A)$.
  \end{lemma}
  \begin{proof}
    For each $o \in O\backslash A$, let $j(o)+1$ be the iteration
    of the \textbf{for} loop in which $o$ is processed,
    and $A_{j(o)}$ be $A$ after 
    iteration $j(o)$.  Thus, $\marge{o}{A_{j(o)}}) < f(A_{j(o)})/k$. 
    Then
\begin{align*}
f(O)-f(A) & \le f(O \cup A) - f(A) \numberthis \label{ineq:al-o1}\\
          & \le \sum_{o \in O\backslash A}\marge{o}{A} \numberthis \label{ineq:al-o2}\\
          & \le \sum_{o \in O\backslash A} \marge{o}{A_{j(o)}} \numberthis \label{ineq:al-o3}\\
          & \le \sum_{o \in O\backslash A} f(A_{j(o)})/k \\
          & \le f(A)\numberthis \label{ineq:al-o4},
\end{align*} 
where Inequalities \ref{ineq:al-o2} and \ref{ineq:al-o3} follow from submodularity, 
and Inequalities \ref{ineq:al-o1} and \ref{ineq:al-o4} follow from monotonicity.
\end{proof}
  \begin{lemma} \label{lemm:al-2}
    At termination of Alg. \ref{alg:adaptive-linear},
    $f(A) \leq 2f(A')$.
  \end{lemma}
  \begin{proof}
    If $|A| < k$, then $A' = A$ and there is nothing
    to show. So suppose $|A| \ge k$. Let $A'_i = \{a_1, \ldots, a_{i - 1} \}$.
    \begin{align*}
      f(A) - f(A \setminus A') &= \sum_{i = 1}^{k} \marge{a_i}{A \setminus A' \cup A'_i} \\
                               &\ge \sum_{i=1}^k \ff{ (A \setminus A') \cup A'_i } / k \\
                               &\ge \sum_{i=1}^k \ff{A \setminus A'} / k \\
                               &= \ff{A \setminus A'}.
    \end{align*}
    Therefore, $\ff{A \setminus A'} \le f(A) / 2.$ Now,
    \begin{align*}
      f(A') &\ge f(A) - f(A \setminus A') \numberthis \label{ineq:al-sm} \\
            &\ge f(A) - f(A) / 2 = f(A) / 2,
    \end{align*}
    where Inequality \ref{ineq:al-sm} is by submodularity of $f$.
  \end{proof}
  By Lemma \ref{lemm:al-1} and \ref{lemm:al-2}, we have
  $$f(O) \le 2f(A) \le 4f(A').$$
\end{proof}
\section{Analysis of \linearseq}
\label{apdix:fastlinear}
\subsection{Proof of Lemma \ref{lemma:fastSubOne}} 
\fastSubOne*
\begin{proof}
For each $o \in O\backslash A$, let $j(o)+1$ be the iteration 
where $o$ is filtered out, and $A_{j(o)}$ be $A$ after 
iteration $j(o)$.  Thus, $\marge{o}{A_{j(o)}}) < f(A_{j(o)})/k$. 
Then
\begin{align*}
f(O)-f(A) & \le f(O \cup A) - f(A) \numberthis \label{ineq:ls-o1}\\
          & \le \sum_{o \in O\backslash A}\marge{o}{A} \numberthis \label{ineq:ls-o2}\\
          & \le \sum_{o \in O\backslash A} \marge{o}{A_{j(o)}} \numberthis \label{ineq:ls-o3}\\
          & \le \sum_{o \in O\backslash A} f(A_{j(o)})/k \\
          & \le f(A)\numberthis \label{ineq:ls-o4},
\end{align*} 
where Inequalities \ref{ineq:ls-o2} and \ref{ineq:ls-o3} follow from submodularity, 
and Inequalities \ref{ineq:ls-o1} and \ref{ineq:ls-o4} follow from monotonicity.
\end{proof}
\subsection{Probability \linearseq is Successful} \label{apx:ls-success}
\begin{proof}
Let $Y_\ell$ be the number of successes in $\ell$ independent Bernoulli random
variables with success probability 1/2. Then,
\begin{align*}
\prob{\text{Algorithm \ref{alg:fastlinear} succeeds}}
& \ge \prob{Y_\ell \geq m} \numberthis \label{ineq:ls-prob1}\\
& \geq \prob{Y_\ell \geq \log(n) /(\beta \epsi)} \\
&=1- \prob{Y_\ell \leq \log(n) /(\beta \epsi)} \\
&\ge 1-e^{-\frac{1}{2}\left(\frac{2\beta \epsi +1}{2(\beta \epsi +1)}\right)^2
\cdot 2\left(1+\frac{1}{\beta \epsi}\right) \log(n)} \numberthis \label{ineq:ls-prob2}\\
& = 1-(\frac{1}{n})^{\frac{(2\beta \epsi+1)^2}{4\beta \epsi(\beta \epsi+1)}} \\
& \geq 1-\frac{1}{n},
\end{align*}
where Inequalities \ref{ineq:ls-prob1} and \ref{ineq:ls-prob2} follow from Lemma \ref{lemma:indep} and 
Lemma \ref{lemma:chernoff}, respectively.
\end{proof}
\subsection{Proof of Claim \ref{claim:LSblockOne}} \label{apx:claim1}
\LSblockOne*
\begin{proof}
In this proof, let $T_{j,i} = (v_1,\ldots,v_i)$ be the prefix
of $V$ of length $i$ at iteration $j$. Likewise, define
$T'_{j,i} = T_{j,i} \setminus T_{j,i -1}$.
Say block $T_{j,k}'$ is \textit{bad} if this block does not 
satisfy the condition in Line \ref{line:fastIf2} during
iteration $j$.

First, consider the case that $j>c$.
It holds that $T_{j,\lamb_j^*} \subseteq A'$. 
Thus, $|T_{j,\lamb_j^*}| \le k$.
If $|T_{j,\lamb_j^*}|=0$, the result holds \color{\rev}
immediately since $\marge{T_{j,\lamb_j^*}}{A_{j-1}}=0 = 
\frac{1-2\epsi}{1+\epsi}|T_{j,\lamb_j^*}|\cdot f(A\backslash A')/k$.
If $0<|T_{j,\lamb_j^*}|\le k$, consider two cases of $\lamb_j^*$
regarding its selection on Line~\ref{line:rule} in Alg.~\ref{alg:fastlinear}.

If $\lamb_j^* \le \left\lceil\frac{1}{\epsi} \right\rceil$,
all the blocks up to and including $T_{j,\lamb_j^*}'$ are good.
So,
\begin{align*}
\marge{T_{j,\lamb_j^*}}{A_{j-1}} 
&= \sum_{\lamb_i \le \lamb_j^*} \marge{T_{j,\lamb_i}\setminus T_{j,\lamb_{i-1}}}{A_{j-1} \cup T_{j,\lamb_{i-1}}}\\
&= \sum_{\lamb_i \le \lamb_j^*} \marge{T_{j,\lamb_i}'}{A_{j-1} \cup T_{j,\lamb_{i-1}}}\\
&\ge \sum_{\lamb_i \le \lamb_j^*}(1-\epsi) |T_{j,\lamb_i}'|\cdot f(A_{j-1} \cup T_{j,\lamb_{i-1}})/k\\
&\ge (1-\epsi)|T_{j,\lamb_j^*}|\cdot f(A\backslash A')/k,
\end{align*}
where the second from the last inequality follows from the property of good blocks,
and the last inequality follows from monotonicity.

If $\left\lceil\frac{1}{\epsi} \right\rceil < \lamb_j^* \le k$, 
the last block in $T_{j,\lamb_j^*}$ is bad while all the previous ones are good.
Let $|T_{j,\lamb_j^*}| = \lamb_j^* = \left\lfloor (1+\epsi)^{u+1} \right \rfloor$.
It holds that 
\begin{equation}\label{ineq:ls-prefix}
\frac{|T_{j,\lamb_j^*}\setminus T_{j,\lamb_j^*}'|}{|T_{j,\lamb_j^*}|} 
= \frac{\left\lfloor (1+\epsi)^u \right \rfloor}{\left\lfloor (1+\epsi)^{u+1} \right \rfloor}
\ge \frac{(1+\epsi)^u -1}{ (1+\epsi)^{u+1}}
\ge \frac{1}{1+\epsi} -\epsi,
\end{equation}
where the last inequality follows from $(1+\epsi)^{u+1} \ge \lamb_j^*>\left\lceil \frac{1}{\epsi}\right\rceil \ge \frac{1}{\epsi}$.
\color{black}
Then,
\begin{align*}
\marge{T_{j,\lamb_j^*}}{A_{j-1}} 
&=\color{\rev} \sum_{\lamb_i \le \lamb_j^*} \marge{T_{j,\lamb_i}\setminus T_{j,\lamb_{i-1}}}{A_{j-1} \cup T_{j,\lamb_{i-1}}}\\
&= \sum_{\lamb_i \le \lamb_j^*} \marge{T_{j,\lamb_i}'}{A_{j-1} \cup T_{j,\lamb_{i-1}}}\\
&\ge \sum_{\lamb_i < \lamb_j^*} \marge{T_{j,\lamb_i}'}{A_{j-1} \cup T_{j,\lamb_{i-1}}}\numberthis \label{ineq:ls-marge1}\\
&\ge \sum_{\lamb_i < \lamb_j^*}(1-\epsi) |T_{j,\lamb_i}'|\cdot f(A_{j-1} \cup T_{j,\lamb_{i-1}})/k \numberthis \label{ineq:ls-marge2}\\
&\ge (1-\epsi)|T_{j,\lamb_j^*}\backslash T_{j,\lamb_j^*}'|\cdot f(A\backslash A')/k \numberthis \label{ineq:ls-marge3}\\
&\ge \textcolor{\rev}{\frac{1-2\epsi}{1+\epsi}}|T_{j,\lamb_j^*}|\cdot f(A\backslash A')/k,\numberthis \label{ineq:ls-marge7}
\end{align*}
where Inequalities \ref{ineq:ls-marge1} and \ref{ineq:ls-marge3} follow from monotonicity, 
Inequality \ref{ineq:ls-marge2} follows from the property of good block,
\color{\rev}
and Inequality \ref{ineq:ls-marge7} follows from Inequality~\ref{ineq:ls-prefix}.
Therefore, the claim holds for $j > c$.
\color{black}

Secondly, consider the case $j = c$.
\color{\rev}
In this case, block $T_{c,\lamb_c^*}'$ can be a good or bad block.
Consider blocks before $T_{c,\lamb_c^*}'$.
By the selection of $\lamb_c^*$ on Line~\ref{line:rule},
if $|T_{c,\lamb_c^*}| \le k$, 
all the blocks before $T_{c,\lamb_c^*}'$ are good;
if $|T_{c,\lamb_c^*}| > k$,
several consecutive blocks before $T_{c,\lamb_c^*}'$ are good total with at least $k$ elements.
\color{black}
Let $\lamb_v = \min\{\lamb_v \in \Lamb:
\cup_{\lamb_v \le \lamb_i \le \lamb_c^*} T_{c,\lamb_i}' \subseteq A'\}$. 
\color{\rev}
Since $|T_{c,\lamb_c^*}' \cap A'| \le k$,
all the blocks with indices $\lamb_v \le \lamb_i \le \lamb_c^*$ are good.
\color{black}
For any $\lamb_i\in \Lamb$, it holds that $|T_{c,\lamb_i}'| \le \epsi k$. Then,
\begin{align*}
\marge{T_{c,\lamb_c^*}}{A\backslash A'} 
&=\sum_{\lamb_i \le \lamb_c^*} \marge{T_{c,\lamb_i}'}{A\backslash A'\cup T_{c, \lamb_{i-1}}} \\
&\ge \sum_{\lamb_v \le \lamb_i < \lamb_c^*} \marge{T_{c,\lamb_i}'}{A\backslash A'\cup T_{c, \lamb_{i-1}}} \numberthis \label{ineq:ls-marge4}\\
&\ge (1-\epsi) \sum_{\lamb_v \le \lamb_i < \lamb_c^*} |T_{c,\lamb_i}'|\cdot f(A\backslash A'\cup T_{c, \lamb_{i-1}}) /k \numberthis \label{ineq:ls-marge5}\\
& \ge (1-\epsi)\max\left\{0, \left(|T_{c,\lamb_c^*}\cap A'| - 
|T_{c,\lamb_{v-1}}'| - |T_{c,\lamb_c^*}'|\right)\right\}\cdot f(A\backslash A')  /k \numberthis \label{ineq:ls-marge6}\\
& \ge (1-\epsi)\max\{0,|T_{c,\lamb_c^*}\cap A'|-2\epsi k\}\cdot f(A\backslash A')/k,
\end{align*}
where Inequality \ref{ineq:ls-marge4} follows from monotonicity,
Inequality \ref{ineq:ls-marge5} follows from the property of good block,
Inequality \ref{ineq:ls-marge6} follows from monotonicity, 
$\cup_{\lamb_v \le \lamb_i < \lamb_c^*} T_{c,\lamb_i}' \subseteq \left(T_{c,\lamb_c^*}\cap A'\right)$,
and $$\left(\left(T_{c,\lamb_c^*}\cap A'\right)\backslash \cup_{u=v}^{i-1} T_{c,\lamb_i}' \right)\subseteq 
\left(T_{c,\lamb_{v-1}}' \cup T_{c,\lamb_c^*}'\right).$$
\end{proof}
\subsection{Proof of Inequality \ref{ineq:claim1}} \label{apx:ineq-claim1}
\begin{proof}[Proof of Inequality \ref{ineq:claim1}]
\begin{align*}
&f(A)-f(A \backslash A') 
= \marge{T_{c,\lamb_c^*}}{A\backslash A'} + 
\sum_{j=c+1}^{\ell} \marge{T_{j,\lamb_j^*}}{A_{j-1}} \\
& \geq (1-\epsi)\max\{0,|T_{c,\lamb_c^*}\cap A'|-2\epsi k \}\cdot f(A\backslash A')/k
+ \textcolor{\rev}{\frac{1-2\epsi}{1+\epsi}}\left(k-|T_{c,\lamb_c^*}\cap A'|\right)\cdot f(A\backslash A')/k\\
&\ge \textcolor{\rev}{\frac{(1-2\epsi)^2}{1+\epsi}}\cdot f(A\backslash A')
\end{align*}
\color{\rev}
where the second from the last inequality follows from Claim~\ref{claim:LSblockOne},
and equality in the last inequality holds if and only if $|T_{c,\lamb_c^*}\cap A'| = 2\epsi k$.
\color{black}
\end{proof}
\subsection{Proof of Lemma \ref{lemma:FLSnumBad}}
\FLSnumBad*
\begin{proof}
After Line \ref{line:fastFilterV}, for any $x \in V$, 
$\marge{x}{A} \geq f(A)/k$, $\marge{x}{V \cup A} = 0$. 
Since $T_0=\emptyset$ and 
$T_{|V|}=V$,  
\begin{alignat*}{5}
&|S_0| &=&|\{x \in V: \marge{x}{A} <f(A)/k\}| &=& 0,\\
&|S_{|V|}|& =& |\{x \in V: \marge{x}{A \cup V} <f(A\cup V)/k\}| &=&|V|.
\end{alignat*}
Due to submodularity and monotonicity, for any $x\in S_i$,
\begin{equation*}
\marge{x}{A\cup T_{i+1}} \leq \marge{x}{A\cup T_i} 
< f(A \cup T_i)/k \leq f(A \cup T_{i+1})/k.
\end{equation*}

So, $x \in S_{i+1}$, and $S_i$ is the subset of $S_{i+1}$, which means 
$|S_i| \leq |S_{i+1}|$. 
\end{proof}

\subsection{Proof of Inequality \ref{ineq:linear-fail-1}} \label{apx:ls-fail}
\begin{lemma} \label{lemma:blockProb}
  Let $(Y_i)$ be a sequence of independent and identically distributed Bernoulli trials,
  where the success probability is $\beta\epsi$.
  Then for a constant integer $\alpha$,
  $\prob{\sum_{i=1}^\alpha Y_i > \epsi \alpha}\le 
  \min\{\beta, e^{-\frac{(1-\beta)^2}{1+\beta}\epsi\alpha}\}$.
\end{lemma}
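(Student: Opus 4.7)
The plan is to establish the two halves of the minimum separately. Let $X = \sum_{i=1}^\alpha Y_i$, so $X$ is a sum of $\alpha$ i.i.d.\ Bernoullis with parameter $\beta\epsi$, and hence $\mu := \ex{X} = \beta \epsi \alpha$. We wish to upper bound $\prob{X > \epsi\alpha}$.

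For the first bound $\prob{X > \epsi\alpha} \le \beta$, I would simply apply Markov's inequality. Since $X \ge 0$ and $\epsi\alpha > 0$,
\[
\prob{X > \epsi\alpha} \le \prob{X \ge \epsi\alpha} \le \frac{\ex{X}}{\epsi\alpha} = \frac{\beta\epsi\alpha}{\epsi\alpha} = \beta.
\]

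For the second bound, I would invoke the upper-tail Chernoff bound from Lemma \ref{lemma:chernoff}. The threshold $\epsi\alpha$ is a $(1+\delta)\mu$ with $1+\delta = \epsi\alpha/\mu = 1/\beta$, i.e.\ $\delta = (1-\beta)/\beta \ge 0$. Then Lemma \ref{lemma:chernoff} yields
\[
\prob{X \ge (1+\delta)\mu} \le \exp\!\left(-\frac{\delta^2 \mu}{2+\delta}\right).
\]
The routine simplification $\delta^2/(2+\delta) = \tfrac{(1-\beta)^2/\beta^2}{(1+\beta)/\beta} = \tfrac{(1-\beta)^2}{\beta(1+\beta)}$, multiplied by $\mu=\beta\epsi\alpha$, gives exactly $\tfrac{(1-\beta)^2}{1+\beta}\epsi\alpha$ in the exponent, matching the claimed bound.

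Taking the minimum of the Markov and Chernoff estimates completes the proof. There is no real obstacle here; the only point worth noting is that one must use $\mu = \beta\epsi\alpha$ (not $\epsi\alpha$) when computing both bounds, so that the Markov estimate yields a factor of $\beta$ and the Chernoff ratio $(1+\delta)$ comes out to $1/\beta$ rather than $1/\epsi$.
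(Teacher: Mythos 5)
Your proposal is correct and follows exactly the paper's argument: Markov's inequality with $\mu = \beta\epsi\alpha$ gives the factor $\beta$, and the upper-tail Chernoff bound (Lemma \ref{lemma:chernoff}) with $1+\delta = 1/\beta$ gives the exponential term, whose exponent simplifies precisely as you compute. No differences worth noting.
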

\begin{proof}
  Since $(Y_i)$ is a sequence of i.i.d. Bernoulli trails with success probability $\beta \epsi$,
  it holds that $\ex{\sum_{i=1}^\alpha Y_i}=\beta \epsi \alpha$.
  For small $\alpha$, by Markov's inequality, we can bound the probability as follows,
  $$\prob{\sum_{i=1}^\alpha Y_i > \epsi \alpha}
  \le \frac{\ex{\sum_{i=1}^\alpha Y_i}}{\epsi \alpha}=\beta.$$
  For large $\alpha$, there exists a tighter bound by the application of Lemma \ref{lemma:chernoff},
  $$\prob{\sum_{i=1}^\alpha Y_i > \epsi \alpha} \le e^{-\frac{(1-\beta)^2}{1+\beta}\epsi\alpha}.$$
\end{proof}

\color{\rev}
Here, we restate Inequality \ref{ineq:linear-fail-1} and provide its analysis below.
\begin{claim}\label{claim:linear-fail-1}
At iteration $j$ of outer for loop on Line~\ref{line:fastOuterForStart} in Alg.~\ref{alg:fastlinear},
let $B_1 = \{ \lamb \in \Lambda : \lamb \le k \text{ and } \lamb \textcolor{\rev}{\le} \lamb_t \}$, 
$B_2 = \{ \lamb \in \Lambda : | \Lambda \cap [ \lamb, \lamb_t ] | \le \lceil 1/ \epsi \rceil \}$.
It holds that
\begin{equation*}
  \prob{\text{iteration $j$ fails}  }
  \leq \prob{\exists \lambda \in B_1 \cup B_2 \text{ with }B[\lamb] = \textbf{false}}  \le 1/2.
\end{equation*}
\end{claim}
\textbf{Proof Overview.}
The goal of the proof is to bound the probability of the event 
($\exists \lambda \in B_1 \cup B_2 \text{ with }B[\lamb] = \textbf{false}$),
which is composed of a random number of sub-events.
Directly bounding this probability can be challenging,
so we construct an alternative event that encompasses the original one
but consists of a fixed number of sub-events.
In detail, the sizes of $B_1$ and $B_2$ are random and depend on the random variable $t$.
To address this, we enlarge $B_1$ to include all indices less than $k$
and expand $B_2$ to contain exactly $\left\lceil 1/\epsi \right\rceil$ indices.
Recall that $t = \min\{i \in \mathbb N: |S_i| \ge \beta \epsi |V|\}$.
Thus, for each $\lamb \ge t$, $B[\lamb] = \textbf{false}$ with probability greater than $\beta \epsi$.
To maintain the upper bound for each sub-event beyond $t$,
we construct another sequence of dependent Bernoulli trials
where the success probability for each trial is at most $\beta \epsi$.
This allows us to bound the probability of the original event 
by considering a more tractable event.
\begin{proof}[Proof of Claim~\ref{claim:linear-fail-1} (Inequality \ref{ineq:linear-fail-1})]

Recall that, the random permutation of $V$ can be regarded as 
$|V|$ dependent Bernoulli trials, 
where $\prob{v_i \text{ is bad} | v_1, \ldots, v_{i-1}} = |S_{i-1}|/|V|$.
Here, $S_{i-1}$ is the set of elements that would be filtered at the next iteration
if prefix $T_{i-1}$ is added to the solution.
Thus, $S_{|V|} = V$ and we further define $S_i = V$ when $i > |V|$.

Instead of analyzing this sequence directly, 
we construct another sequence of Bernoulli trials $\{X_i\}_{i=1}^{\infty}$,
where $X_i = 1_{v_i \text{ is bad}}$ when $i \le t$,
and $X_i = 1$ with probability $\beta\epsi$ when $i > t$.
So, the success probability for each $X_i$ equals $\min\{|S_{i-1}|/|V|, \beta\epsi\}$,
where $S_i = V$ when $i > |V|$.
Let $\Lamb' = \left[\left\lceil \frac{1}{\epsi} \right\rceil\right] \cup\{\lfloor(1+\epsi)^u \rfloor: 
      1 \leq \lfloor(1+\epsi)^u \rfloor \leq k, 
      u \in \mathbb{N}\} \cup 
      \{\lfloor k+u\epsi k \rfloor: u \in \mathbb{N}\}$.
Then, it holds that $\Lamb \subseteq \Lamb'$.
For each $\lamb_i\in \Lamb'$,
define $B'[\lamb_i] =\textbf{false}$ if there are more than $\epsi(\lamb_i-\lamb_{i-1})$ successes in $\{X_{\lamb_{i-1}+1}, \ldots, X_{\lamb_i}\}$;
otherwise, define $B'[\lamb_i] =\textbf{true}$.
So, when $\lamb_i \le t$, it holds that $B[\lamb_i] =B'[\lamb_i]$.
Then, we construct two additional index sets $B_1'$ and $B_2'$, which
include $B_1$ and $B_2$ but have fixed sizes.
Let $B'_1 = \{\lamb \in \Lamb': \lamb \le k\}$
and $B'_2 = B_2\cup \{\lamb \in \Lamb': |\Lamb' \cap (\lamb_t, \lamb]| \le \lceil 1/\epsi\rceil -|B_2|\}$.
Then,
\begin{align*}
  \prob{\text{iteration $j$ fails}} &\le \prob{\exists \lambda \in B_1 \cup B_2 \text{ with }B[\lamb] = \textbf{false}}\\
  &= \prob{\exists \lambda \in B_1 \cup B_2 \text{ with }B'[\lamb] = \textbf{false}}\numberthis \label{ineq:ls-Bprime}\\
  &\le \prob{\exists \lambda \in B'_1 \cup B'_2 \text{ with }B'[\lamb] = \textbf{false}} \numberthis \label{ineq:ls-new}\\
  &\le \prob{\exists \lambda \in B'_1 \text{ with } B'[\lamb] = \textbf{false}}+\prob{\exists \lambda \in B'_2 \text{ with } B'[\lamb] = \textbf{false}},\numberthis \label{ineq:apx-lin-1}
\end{align*}
where Inequality~\ref{ineq:ls-Bprime} follows from $\max (B_1\cup B_2) \le t$
and $B[\lamb] =B'[\lamb]$ for each $\lamb \le t$,
Inequality~\ref{ineq:ls-new} follows from $B_1\subseteq B'_1$
and $B_2\subseteq B'_2$.
Let $(Y_j)$ be a sequence of independent and identically distributed Bernoulli trials,
where the success probability is $\beta \epsi$.

To bound the first term of Inequality~\ref{ineq:apx-lin-1},
we have,
\begin{align*}
  &\prob{\exists \lambda \in B'_1 \text{ with } B'[\lamb] = \textbf{false}}
  \le \sum_{\lamb_i\in B'_1} \prob{ \sum_{j=\lamb_{i-1}+1 }^{\lamb_i}X_j > \epsi (\lamb_i-\lamb_{i-1})}\\
  &\le \sum_{\lamb_i\in B'_1} \prob{ \sum_{j=\lamb_{i-1}+1 }^{\lamb_i}Y_j > \epsi (\lamb_i-\lamb_{i-1})}\numberthis \label{ineq:ls-block6}\\
  &\le \sum_{\lamb_i \in \left[\left\lceil \frac{1}{\epsi} \right\rceil\right]\cup \{\lfloor(1+\epsi)^u\rfloor:u\ge 1\}} \prob{\sum_{j=\lamb_{i-1}+1 }^{\lamb_i}Y_j > \epsi (\lamb_i-\lamb_{i-1})} \numberthis \label{ineq:ls-block5}\\
  & \le \sum_{\lamb_i \in \left[\left\lceil \frac{1}{\epsi} \right\rceil\right]\cup \{\lfloor(1+\epsi)^u\rfloor:u\ge 1\}} \min\{\beta, e^{-\frac{(1-\beta)^2}{1+\beta}\epsi (\lamb_i-\lamb_{i-1})}\}\numberthis \label{ineq:ls-block4} \\ 
  & \le \beta \left\lceil \frac{1}{\epsi} \right\rceil+ \sum_{\lamb \ge 1} \min\{\beta, e^{-\epsi \lamb/2}\} \numberthis \label{ineq:ls-block3} \\
  & \le \beta \left( \left\lceil \frac{1}{\epsi} \right\rceil+a\right) + \sum_{\lamb = a+1}^\infty e^{-\epsi \lamb/2} 
\text{, where } a= 
\left \lfloor\frac{2}{\epsi} 
\log\left(\frac{8}{1-e^{-\epsi/2}}\right)\right \rfloor
\le \frac{1}{8\beta}-\left \lfloor \frac{1}{\epsi} \right \rfloor\\
& \le  \frac{1}{8} + \frac{e^{-\epsi(a+1)/2}}{1-e^{-\epsi/2}}\\
& \le \frac{1}{8} + \frac{1}{8} = \frac{1}{4},
\end{align*}
\color{black}
where Inequality~\ref{ineq:ls-block6} follows from Lemma~\ref{lemma:indep2};
\color{\rev}
Inequality \ref{ineq:ls-block5} follows from $B'_1 \subseteq \left[\left\lceil \frac{1}{\epsi} \right\rceil\right] \cup \{\lfloor(1+\epsi)^u\rfloor:u\ge 1\}$;
\color{black}
Inequality \ref{ineq:ls-block4} follows from Lemma \ref{lemma:blockProb};
and Inequality \ref{ineq:ls-block3} follows from $\lamb_i-\lamb_{i-1} < \lamb_i$ and $\beta < 1/5$.

For the second term in Inequality \ref{ineq:apx-lin-1}, 
\color{\rev}
from Lemma~\ref{lemma:indep2},
Lemma \ref{lemma:blockProb},
and the fact that $|B'_2| = \lceil 1 / \epsi \rceil$, we have
$$\prob{\exists \lambda \in B'_2 \text{ with } B'[\lamb] = \textbf{false}}
\le \sum_{\lamb_i \in B'_2} \prob{\sum_{j=\lamb_{i-1}+1 }^{\lamb_i}Y_j > \epsi (\lamb_i-\lamb_{i-1})} 
\le \beta \lceil 1 /\epsi \rceil< \frac{1}{4},$$
where the last inequality follows from
\[1\le \left \lfloor\frac{2}{\epsi} 
\log\left(\frac{8}{1-e^{-\epsi/2}}\right)\right \rfloor
\le \frac{1}{8\beta}-\left \lfloor \frac{1}{\epsi} \right \rfloor.\]
\color{black}

From the above two inequalities, we have that 
$$\prob{\text{iteration $j$ fails}} \le 1/2.$$
\end{proof}
\subsection{Query Complexity of \linearseq} \label{apx:ls-query}
\begin{proof}[Query Complexity]
  Let $V_j$ be the value of $V$ after filtering on Line \ref{line:fastFilterV}
  during iteration $j$,
  and let $Y_i$ be the number of iterations between the $(i-1)$-th success 
  and $i$-th success.
  By Lemma \ref{lemma:indep} in Appendix \ref{apx:prob}, 
$\ex{Y_i} \le 2$.
  Observe that $|V_0| = |\mathcal{N}| = n$.
Then, the expected number of queries is bounded as follows:
\begin{align*}
&\ex{Queries} \leq \sum_{j=1}^\ell \ex{|V_{j-1}|+2|V_j|/(\epsi k)+
        2\log_{1+\epsi} (k)-2/\epsi + 7} \\
       &\le \left(\frac{2}{\epsi k}+1\right)\sum_{i=1}^{\infty} 
       \ex{Y_i(1-\beta \epsi)^i n}+n
       + \left(2\log_{1+\epsi}(k) -\frac{2}{\epsi}+7\right)\ell \\
       & \leq \left(1+\left(\frac{2}{\beta \epsi}-2\right)\left(\frac{2}{\epsi k}+1\right)\right)n + 4\left(1-\frac{1}{\beta \epsi}\right)
       \left(\frac{2(1+\epsi)}{\epsi}\log( k)-\frac{2}{\epsi}+7\right)\log(n).
\end{align*}
Thus, the total queries are $\oh{ (1/(\epsi k)+1)n /\epsi^3 } = \oh{ n / \epsi^3 }$
in expectation.
\end{proof}


\section{Description and Analysis of \threshold}
\label{apdix:threshold}
\textbf{Description.} The algorithm takes as input oracle $f$, 
size constraint $k$, accuracy parameter $\epsi > 0$, and probability
parameter $\delta$ which influences the failure probability of
at most $\delta / n$.
The algorithm works in iterations of a sequential outer \textbf{for}
loop of length at most $\oh{\log n}$; a set $A$ is initially empty,
and elements are added to $A$ in each iteration of the \textbf{for} loop.
Each iteration has four parts: filtering low value elements
from $V$ (Line \ref{line:threFilterV}), randomly permuting $V$ (Line \ref{line:threPermute}),
computing in parallel the marginal gain of adding blocks of elements of $V$ to $A$
(Line \ref{line:innerforStart}), and adding a block slightly larger than the largest block
that had average gain at least $(1 - \epsi)\tau$ (Line \ref{line:updateA}).
\begin{algorithm}[t]
	\caption{A Parallelizable Greedy Algorithm for Fixed Threshold $\tau$}
	\label{alg:threshold}
	\begin{algorithmic}[1]
	\Procedure{\threshold}{$f, \mathcal N, k, \delta, \epsi, \tau$}

	\State \textbf{Input:} evaluation oracle $f:2^{\mathcal N} \to \reals$, constraint $k$, revision $\delta$, error $\epsi$, threshold $\tau$

	\State Initialize $A \gets \emptyset$ , $V \gets \mathcal N$, 
	$\ell = \lceil 4(1+2/\epsi)\log (n/\delta) \rceil$ 

	\For{ $j \gets 1$ to $\ell$ } \label{line:threForStart}

		\State Update $V \gets \{ x \in V : \marge{x}{A} \ge \tau \}$ 
		and filter out the rest \label{line:threFilterV} 

		\If{ $|V| = 0$ } 
			
			\State \textbf{return} $A$
		
		\EndIf
		
		\State $V \gets $ \textbf{random-permutation}$(V)$. \label{line:threPermute}

		\State $s \gets \min \{k-|A|, |V|\}$
		\color{\rev}
		\State $\Lambda \gets \left[\min\{s, \left\lceil \frac{1}{\epsi} \right\rceil\}\right] \cup \{ \lfloor (1+\epsi)^u \rfloor: 
		1\leq \lfloor (1+\epsi)^u \rfloor \leq s
		, u \in \mathbb{N}\} \cup \{s\}$ \label{line:lambda}
		\color{black}
		\State $B \gets \emptyset$

		\For{$\lamb_i \in \Lambda$ in parallel} \label{line:innerforStart}

			\State $T_{\lamb_i} \gets \{v_1, v_2, \ldots, v_{\lamb_i}\}$ 

			\If{$ \marge{T_{\lamb_i}}{A}/|T_{\lamb_i}| \geq  (1-\epsi) \tau $} \label{line:threIf2}

				\State $B \gets B \cup \{\lamb_i\}$
			
			\EndIf

		\EndFor\label{line:innerforEnd}
		\color{\rev}
		\State \textbf{if} $\max B \le \min\{s, \left\lceil \frac{1}{\epsi} \right\rceil\}$ \textbf{then} $\lamb^* \gets \max B$
		\State \textbf{else} $\lamb^* \gets \min\{\lamb_i \in \Lambda: \lamb_i >b, \forall b \in B \}$
		\color{black}
		\State  $A \gets A \cup T_{\lamb^*}$  \label{line:updateA}
		\If{ $|A| = k$ }

			\State \textbf{return} $A$ 

		\EndIf

	\EndFor \label{line:threForEnd}

	\State \textbf{return} \textit{failure}

	\EndProcedure
\end{algorithmic}
\end{algorithm}
\subsection{Proof of Theorem \ref{theorem:threshold}}
\subsubsection{Success Probability}
The algorithm \threshold will successfully terminate once $V$ is empty or $|A| = k$. 
If it fails to terminate, the outer for loop runs $\ell$ iterations;
it also holds that $|V|>0$ and $|A|<k$ at termination.

Fix an iteration $j$ of the outer \textbf{for} loop;
for this part, we will condition on the behavior
of the algorithm prior to iteration $j$.
\begin{restatable}{lemma}{numBad} 
	\label{lemma:numBad}
At an iteration $j$, let 
$S_i = \left\{x \in V: \marge{x}{T_i\cup A} <\tau\right\}$ 
after Line \ref{line:threFilterV}. 
It holds that $|S_0|= 0$, $|S_{|V|}| = |V|$, 
and $|S_i| \leq |S_{i+1}|$.
\end{restatable}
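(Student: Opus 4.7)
The plan is to mimic the structure of the earlier Lemma~\ref{lemma:FLSnumBad}, since the statement is essentially its analogue for \threshold. The proof splits into three short parts: verifying the two boundary equalities $|S_0| = 0$ and $|S_{|V|}| = |V|$, and then establishing the monotonicity $|S_i| \le |S_{i+1}|$ via a containment argument $S_i \subseteq S_{i+1}$.

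For the boundary cases, I would first observe that after the filtering step on Line~\ref{line:threFilterV}, every $x \in V$ satisfies $\marge{x}{A} \ge \tau$. Since $T_0 = \emptyset$, this gives $S_0 = \{x \in V : \marge{x}{A} < \tau\} = \emptyset$. On the other end, $T_{|V|} = V$, so for every $x \in V$ we have $x \in T_{|V|} \cup A$, hence $\marge{x}{T_{|V|} \cup A} = 0 < \tau$ (using $\tau > 0$), giving $S_{|V|} = V$.

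For the monotonicity step, fix $x \in S_i$, so that $\marge{x}{T_i \cup A} < \tau$. Since $T_i \subseteq T_{i+1}$, submodularity of $f$ yields
\[
\marge{x}{T_{i+1} \cup A} \;\le\; \marge{x}{T_i \cup A} \;<\; \tau,
\]
so $x \in S_{i+1}$. Thus $S_i \subseteq S_{i+1}$ and in particular $|S_i| \le |S_{i+1}|$.

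There is no real obstacle: the argument is a direct consequence of submodularity together with the post-filter guarantee $\marge{x}{A} \ge \tau$ for $x \in V$. The only mild subtlety is the endpoint $|S_{|V|}| = |V|$, where one must note that an element of $T_{|V|} \cup A$ contributes zero marginal gain and therefore automatically lies in $S_{|V|}$; this is why the lemma holds regardless of how large $\tau$ is, provided $\tau > 0$.
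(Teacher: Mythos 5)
Your proof is correct and follows essentially the same route as the paper's: the post-filter guarantee $\marge{x}{A} \ge \tau$ handles $S_0$, the zero marginal gain of elements already in $T_{|V|} \cup A$ handles $S_{|V|}$, and submodularity gives the containment $S_i \subseteq S_{i+1}$. Your explicit remark that $\tau > 0$ is needed for the endpoint is a small clarification the paper leaves implicit, but otherwise the two arguments coincide.
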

\begin{proof}
After Line \ref{line:threFilterV}, for any $x \in V$, 
$\marge{x}{A} \geq \tau$, $\marge{x}{V \cup A} = 0$. 
Since $T_0=\emptyset$ and 
$T_{|V|}=V$,  
\begin{alignat*}{5}
&|S_0| &=&|\{x \in V: \marge{x}{A} <\tau\}| &=& 0,\\
&|S_{|V|}|& =& |\{x \in V: \marge{x}{V\cup A} <\tau\}| &=&|V|.
\end{alignat*}
Due to submodularity, 
$\marge{x}{T_i \cup A} > \marge{x}{T_{i+1} \cup A}$. 
So, $S_i$ is the subset of $S_{i+1}$, which means 
$|S_i| \leq |S_{i+1}|$.
\end{proof}
From Lemma \ref{lemma:numBad},
there exists a $t$ such that 
$t = \min\{i \in \mathbb{N}: |S_i| \ge  \epsi |V|/2\}$.
If $\lamb^* \ge t$, we will successfully discard at least
$\epsi/2$-fraction of $V$ at next iteration.
Suppose that the algorithm does not terminate
before the outer for loop ends.
In the case that $|A|=k$ or $|V|=0$, we continue the outer for loop
with $s=0$ and select the empty set.
Let $i' = \min \{s,t\}$.
To fail the algorithm, there should be no more than 
$m=\lceil\log_{1-\epsi/2}(1/n) \rceil$ iterations 
that $\lamb^* \ge i'$.
Otherwise, the algorithm terminates either with
$|A|=0$ or with more than $m$
iterations which successfully filter out at least
$\epsi/2$-fraction of $V$ resulting in that $|V|=0$.
The following lemma shows that, at each iteration,
there is a constant probability to successfully discard
at least $\epsi/2$-fraction of $V$ or have
that $|A|=k$.
Then, we will show that
the algorithm succeeds with a probability of at least $1-\delta/n$.

\begin{restatable}{lemma}{thresholdProb} 
	\label{lemma:threshold-prob}
	It holds that $\prob{\lamb^*<i'} \le 1/2$, where $i' = \min\{s,t\}$.
\end{restatable}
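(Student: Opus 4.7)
The plan is to mirror the proof of Inequality~\ref{ineq:linear-fail-1} for \linearseq, exploiting the fact that the analysis here is simpler because the blocks $T_\lambda$ in \threshold are cumulative prefixes of a single random permutation and the threshold $\tau$ is constant throughout the iteration (rather than varying with $f(A)/k$ as in \linearseq). First I would reduce $\{\lambda^* < i'\}$ to a union of ``bad block'' events: by the definition $\lambda^* = \min\{\lambda_i \in \Lambda : \lambda_i > b,\ \forall b \in B\}$, the index $\lambda^*$ itself cannot belong to $B$, so whenever $\lambda^* < i'$, the prefix $T_{\lambda^*}$ fails the test on Line~\ref{line:threIf2}. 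Combining this with a Wald-type argument (Lemma~\ref{lemma:wald}) to handle the random stopping index $i'$, exactly as in Appendix~\ref{apx:ls-fail}, gives
\[
\prob{\lambda^* < i'} \le \sum_{\lambda \in \Lambda} \prob{\text{block }\lambda\text{ is bad}}.
\]

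Next I would bound each block-bad probability by a tail of a sum of i.i.d.\ Bernoulli trials. Say $v_u$ is a \emph{bad element} if $\marge{v_u}{A \cup T_{u-1}} < \tau$, equivalently $v_u \in S_{u-1}$. By monotonicity and the telescoping identity $\marge{T_\lambda}{A} = \sum_{u=1}^\lambda \marge{v_u}{A \cup T_{u-1}}$, whenever fewer than $\epsi\lambda$ of $v_1,\ldots,v_\lambda$ are bad elements the block $\lambda$ is automatically good; hence $\prob{\text{block }\lambda \text{ bad}}$ is at most the probability that at least $\epsi\lambda$ of the first $\lambda$ permutation positions fall in the corresponding bad sets. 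Lemma~\ref{lemma:numBad} gives $S_{u-1} \subseteq S_{t-1}$ for $u \le t$, and $|S_{t-1}| < \epsi|V|/2$ by the definition of $t$; conditionally on $u \le t$ the per-step probability of $v_u$ being bad is therefore at most $\epsi/2$, and the symmetric (upper-tail) form of Lemma~\ref{lemma:indep2} then dominates the dependent bad-element count by $\sum_{u=1}^\lambda Y_u$, where $Y_u$ are i.i.d.\ Bernoulli$(\epsi/2)$.

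Finally, I would sum the resulting tail bounds over $\lambda \in \Lambda$ to obtain the $1/2$ estimate. For small $\lambda$ I would apply Markov's inequality (giving a crude constant bound per block), and for larger $\lambda$ Chernoff (giving $\exp(-\Omega(\epsi\lambda))$), as in Lemma~\ref{lemma:blockProb}; because $\Lambda$ consists of the geometrically spaced values $\lfloor(1+\epsi)^u\rfloor$ together with the single tail entry $s$, the Chernoff tails sum to a convergent geometric series. Following the split in Appendix~\ref{apx:ls-fail}, I would partition the relevant indices into a ``Markov regime'' of $O(\epsi^{-1}\log(1/\epsi))$ small indices and a ``Chernoff regime'' for the rest, calibrated so that each regime contributes at most $1/4$. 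The main obstacle, just as in the \linearseq analysis, is choosing the cutoff between the two regimes carefully enough that both contribute cleanly to $1/4$ under the tighter Bernoulli parameter $\epsi/2$ (rather than $\beta\epsi$ as used for \linearseq); the non-geometric tail index $s$ can be absorbed into whichever regime it falls in without disturbing the totals.
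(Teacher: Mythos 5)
There is a genuine gap in your third step: the union bound $\prob{\lamb^* < i'} \le \sum_{\lamb \in \Lambda} \prob{\text{block } \lamb \text{ bad}}$ cannot be pushed below $1/2$ with the parameters of \threshold. In the \linearseq analysis you are imitating, the per-element bad probability is $\beta\epsi$ with $\beta = \epsi/(16\log(8/(1-e^{-\epsi/2})))$ (Line \ref{line:beta}) deliberately chosen tiny, so that Markov's inequality gives $\prob{\sum_{i=1}^{\lamb} Y_i > \epsi\lamb} \le \beta$ per block and the $O(1/\beta)$ blocks in the ``Markov regime'' sum to at most $1/8$. In \threshold the per-element bad probability is only bounded by $\epsi/2$, so Markov yields $1/2$ \emph{per block}; already two blocks blow the budget, and the Chernoff tail $e^{-\Omega(\epsi\lamb)}$ is useless for the roughly $\epsi^{-1}\log(1/\epsi)$ indices $\lamb \lesssim 1/\epsi$ in $\Lambda$. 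You flag this calibration as ``the main obstacle'' and assert it can be absorbed, but there is no free parameter here analogous to $\beta$: the obstacle is fatal for a union-bound argument.

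The fix --- and the route the paper takes --- is to notice that no union bound is needed, because the test on Line \ref{line:threIf2} in \threshold is applied to the \emph{cumulative prefix} $T_{\lamb_i}$ (average gain over the whole prefix against $A$), not to the incremental block $T'_{\lamb_i}$ as in \linearseq. Consequently the event $\{\lamb^* < i'\}$ is witnessed by a single index: letting $\lamb_{i'} = \max\{\lamb \in \Lamb : \lamb < i'\}$, if $\lamb^* < i'$ then $\lamb_{i'} \notin B$, i.e.\ $\marge{T_{\lamb_{i'}}}{A}/\lamb_{i'} < (1-\epsi)\tau$, which forces more than $\epsi\lamb_{i'}$ bad elements among $v_1,\dots,v_{\lamb_{i'}}$ (your telescoping observation in the second paragraph is exactly this and is correct). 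One then dominates the dependent bad-element count by i.i.d.\ Bernoulli$(\epsi/2)$ variables via Lemma \ref{lemma:indep2} and applies Markov's inequality \emph{once}, conditioning on the value of $\lamb_{i'}$, to get exactly $\frac{(\epsi/2)\lamb_{i'}}{\epsi\lamb_{i'}} = 1/2$. Your per-block ingredients (bad elements, Lemma \ref{lemma:numBad}, Lemma \ref{lemma:indep2}) are all right; the decomposition of the failure event is what needs to change.
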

\begin{proof}
	Call an element $v_i \in V$ \textit{bad} iff
	$\marge{v_i}{A\cup T_{i-1}} < \tau$.
	Let 
	\color{\rev}
	\begin{equation*}
	\lamb_{i'} = \left\{
	\begin{aligned}
	&i', && \text{if } i' \le \left\lceil \frac{1}{\epsi}\right\rceil\\
	&\max\{\lamb \in \Lamb:\lamb < i'\}, && \text{o.w.}
	\end{aligned}\right. .
	\end{equation*}
	\color{black}
	The random permutation of $V$ can be regarded
	as $|V|$ dependent Bernoulli trials, with success
	iff the element is bad and failure otherwise.
	Observe that the probability that an element in $T_{\lambda_{i'}}$
	is bad is less than $\epsi / 2$, condition on the
	outcomes of the preceding trials.
	Further, if $\lambda_{i'} \not \in B$,
	there are at least $\epsi \lamb_{i'}$ bad elements in 
	$T_{\lamb_{i'}}$, for otherwise,
	the condition on Line \ref{line:threIf2} would hold
	and $\lambda_{i'}$ would be in $B$.
	Let $(Y_i)$ be a sequence of independent and identically distributed Bernoulli trails,
	each with success probability $\epsi/2$.
	The probability can be bounded as follows:
	\begin{align*}
		\prob{\lamb^* < i'}& \le \prob{\marge{T_{\lamb_{i'}}}{A}/\lamb_{i'} < (1-\epsi)\tau}\\
		& \le \prob{\#\text{ of bad elements in }T_{\lamb_{i'}} > \epsi \lamb_{i'}}\\
		& = \prob{\sum_{i=1}^{\lamb_{i'}} 1_{\{v_{i} \text{ is bad}\}} > \epsi \lamb_{i'}}\\
		& \le \prob{\sum_{i=1}^{\lamb_{i'}} Y_i > \epsi \lamb_{i'}}\numberthis \label{ineq:thre-prob1}\\
		&\le1/2,\numberthis \label{ineq:thre-prob2}
	\end{align*}
	where Inequality \ref{ineq:thre-prob1} follows from Lemma \ref{lemma:indep2},
	 Inequality \ref{ineq:thre-prob2} follows from
	Markov's inequality and Law of Total Probability.
\end{proof}

From the above discussion, to fail the algorithm, 
there should be no more than $m$ iterations that $\lamb^* \ge i'$.
Define a \textit{successful iteration} as an iteration that 
$\lamb^* \ge i'$.
Let $X$ be the number of successes during $\ell$ iterations.
From the definition and Lemma \ref{lemma:threshold-prob}, 
$X$ is the sum of $\ell$ dependent Bernoulli random variables,
where each variable has a success probability of more than $1/2$.
Then, let $Y$ be the sum of $\ell$ independent Bernoulli random variables
with success probability $1/2$.
Therefore, the failure probability of Algorithm \ref{alg:threshold} can be bounded as follows:
\begin{align*}
	\prob{\text{Algorithm \ref{alg:threshold} fails}}&\le \prob{X \le m} \\
	&\le \prob{Y \le m} \numberthis \label{ineq:dep2}\\ 
	&\le \prob{Y \le 2\log(n/\delta)/\epsi}\\ 
	&\le e^{-\left(\frac{\epsi+1}{\epsi+2}\right)^2\cdot\left(1+\frac{2}{\epsi}\right)
	\log\left(\frac{n}{\delta}\right)} \numberthis \label{ineq:chernoff}\\ 
	&= e^{-\frac{(\epsi+1)^2}{\epsi(\epsi+2)}\log(\frac{n}{\delta})} \le \delta/n,
\end{align*}
where Inequalities \ref{ineq:dep2} and \ref{ineq:chernoff} follow from 
Lemmas \ref{lemma:indep} and \ref{lemma:chernoff}, respectively.

\subsubsection{Adaptivity and Qeury Complexity}
For Algorithm \ref{alg:threshold}, oracle queries incurred by computing
the marginal gain on Line \ref{line:threFilterV} and \ref{line:threIf2}.
The filtering on Line \ref{line:threFilterV} can be done in parallel.
And the inner \textbf{for} loop is also in parallel. 
Therefore, the adaptivity for Algorithm \ref{alg:threshold} is 
$\oh{\log(n/\delta)/\epsi}$.

With the same notations of $V_j$ and $Y_i$ in Section \ref{sec:linearseq-proof},
there are no more than $|V_{j-1}| +1$ and $2\log_{1+\epsi}(|V_j|)+4$
oracle queries on Line \ref{line:threFilterV} and in inner \textbf{for}
loop, respectively; by Lemma \ref{lemma:indep} in Appendix \ref{apx:prob},
$\ex{Y_i} \le 2$. 
Then, the expected number of queries is bounded as follows:
\begin{align*}
&\ex{Queries}
\le \sum_{j=1}^\ell \ex{|V_{j-1}| + 2\log_{1+\epsi}(|V_j|) + 5} \\
& \le n+\sum_{n(1-\epsi/2)^i\ge1}\ex{Y_i}\left(n(1-\epsi/2)^i+2\log(n(1-\epsi/2)^i)\right)+ 5\ell\\
& \le n+2n \sum_{i\ge 1}(1-\epsi/2)^i + 
4 \sum_{n(1-\epsi/2)^i\ge1}\log(n(1-\epsi/2)^i)+ 5\ell\\
& \le n\left(\frac{4}{\epsi}+1\right)+4\log\left(n\left(1-\frac{\epsi}{2}\right)\right)\frac{\log_{1-\epsi/2}\left(\frac{1}{n}\right)}{2}+5\lceil 4(1+2/\epsi)\log (n/\delta) \rceil.
\end{align*}
Thus, the total queries are $\oh{n/\epsi}$ in expectation.

\subsubsection{The Marginal Gain (Properties 3 and 4 of Theorem \ref{theorem:threshold})}
Let $A_j$ be $A$ after iteration $j$, and $T_{j, \lamb_j^*}$ 
be the subset added to $A$ at iteration $j$. 
\color{\rev}
If $\lamb_j^* \le \left\lceil \frac{1}{\epsi}\right\rceil$,
the if condition on Line~\ref{line:threIf2} holds for $T_{j,\lamb_j^*}$.
So, we have
\begin{equation*}
\marge{T_{j,\lamb_j^*}}{A_{j-1}}\ge (1-\epsi)\tau |T_{j, \lamb_j^*}|.
\end{equation*}
If $\lamb_j^* > \left\lceil \frac{1}{\epsi}\right\rceil$,
the if condition on Line~\ref{line:threIf2} holds for $T_{j,\lamb_j}$,
where $\lamb_j =\max B_j$ and $B_j$ is $B$ after inner for loop ends at iteration $j$ of the outer for loop.
Similarly, we have
\begin{equation*}
\marge{T_{j,\lamb_j}}{A_{j-1}}\ge (1-\epsi)\tau |T_{j, \lamb_j}|.
\end{equation*}
Let $\lamb_j = \left\lfloor (1+\epsi)^u \right \rfloor$.
Then, $\lamb_j^* = \left\lfloor (1+\epsi)^{u+1} \right \rfloor$.
\begin{equation*}
\frac{|T_{j, \lamb_j}|}{|T_{j, \lamb_j^*}|} = \frac{\lamb_j}{\lamb_j^*}
= \frac{\left\lfloor (1+\epsi)^u \right \rfloor}{\left\lfloor (1+\epsi)^{u+1} \right \rfloor}\ge \frac{(1+\epsi)^u -1}{ (1+\epsi)^{u+1}}\ge \frac{1}{1+\epsi} -\epsi,
\end{equation*}
where the last inequality follows from $(1+\epsi)^{u+1} \ge \lamb_j^*>\left\lceil \frac{1}{\epsi}\right\rceil \ge \frac{1}{\epsi}$.
Therefore, by above two inequalities and monotonicity of $f$,
\begin{equation*}
\marge{T_{j,\lamb_j^*}}{A_{j-1}}\ge\marge{T_{j,\lamb_j}}{A_{j-1}}
\ge (1-\epsi)\left(\frac{1}{1+\epsi}-\epsi\right)\tau |T_{j, \lamb_j^*}|
\ge (1-2\epsi)\tau |T_{j, \lamb_j^*}|/(1+\epsi).
\end{equation*}
By summing up the marginal gain obtained by every iteration,
the value of the solution can be bounded as follows,
\begin{align*}
f(A) = \sum_{j=1}^\ell \marge{T_{j,\lamb_j^*}}{A_{j-1}} 
	\ge \sum_{j=1}^\ell (1-2\epsi)\tau |T_{j, \lamb_j^*}|/(1+\epsi) 
	= (1-\epsi)\tau|A| /(1+\epsi),
\end{align*}
Therefore, the average marginal $f(A)/|A| \geq (1-2\epsi)\tau / (1+\epsi)$.
\color{black}

If $|A| < k$ and the algorithm is successful,
the algorithm terminates with $|V|=0$. 
For any $x \in \mathcal{N}$, 
$x$ should be filtered out at an iteration $j$ with $A_{j-1}$, which
means $\marge{x}{A_{j-1}} < \tau$. Due to submodularity, 
$\marge{x}{A} \leq \marge{x}{A_{j-1}} < \tau$.


\section{Pseudocode and Omitted Proofs for Section \ref{section:boost}}
\label{apdix:boost}
In Algorithm \ref{alg:boost}, detailed pseudocode of
\boost is provided. For any set $A$,
the notation $f_A(\cdot)$ represents
the function $f(A \cup \cdot)$; $f_A$ is submodular if $f$
is submodular.

\subsection{Proof of Theorem \ref{theorem:boost}}
\subsubsection{Success Probability} \label{apdx:boost-suc}
\begin{proof}
	The probability of succeess can be bounded as follows:
\begin{align*}
\prob{\text{Algorithm \ref{alg:boost} succeed}} 
&\geq 1-\prob{\text{\threshold fails at an iteratin}}\\
&\quad -\prob{\alpha \text{-approximation algorithm for \sm fails}}\\
&\geq 1- \sum_{i=1}^{\lceil \log_{1-\epsi}(\alpha/3)\rceil} 
\prob{\text{\threshold fails}} - p_\alpha\\
&\geq 1- \frac{\delta}{n}\cdot \lceil \log_{1-\epsi}(\alpha/3)\rceil- p_\alpha\\
& \ge 1-1/n- p_\alpha.
\end{align*}
\end{proof}




\subsubsection{Approximation Ratio}\label{apdx:boost-ratio}


\color{\rev}
\begin{proof}[Proof of Inequality \ref{ineq:boost1}]
	\begin{align*}
		f(A_{j+1}) - f(A_j) &\geq  (1-2\epsi/3)\tau/(1+\epsi/3)\cdot |A_{j+1}\backslash A_j| \numberthis \label{ineq:boost1-1}\\
		& \geq \frac{(1-2\epsi/3)(1-\epsi)}{(1+\epsi/3)k} |A_{j+1}\backslash A_j| \cdot \sum_{o \in O} \marge{o}{A_j} \numberthis \label{ineq:boost1-2}\\
		& \geq \frac{(1-2\epsi/3)(1-\epsi)}{(1+\epsi/3)k}|A_{j+1}\backslash A_j| \cdot(f(O\cup A_j) - f(A_j))\\
		& \geq \frac{(1-2\epsi/3)(1-\epsi)}{(1+\epsi/3)k} |A_{j+1}\backslash A_j| \cdot (f(O)-f(A_j)),
		\end{align*}
		where Inequalities \ref{ineq:boost1-1}-\ref{ineq:boost1-2} follow from Theorem \ref{theorem:threshold}
		and $0 \leq |S_j|<k-|A_{j-1}|$.
		
\end{proof}
\begin{proof}[Proof of Inequality \ref{ineq:boost2}]
\begin{align*}
f(O)-f(A) &\leq \prod_{j=0}^{\ell-1} \left(1-\frac{(1-2\epsi/3)(1-\epsi)}{(1+\epsi/3)k}|A_{j+1}\backslash A_j|\right) f(O)\\
&\leq  e^{-\frac{(1-2\epsi/3)(1-\epsi)}{(1+\epsi/3)k}\sum_{j=0}^{\ell-1}|A_{j+1}\backslash A_j|} f(O)\\
&= e^{-\frac{(1-2\epsi/3)(1-\epsi)}{1+\epsi/3}} f(O)\\
&\le (1/e+\epsi)f(O),\numberthis \label{ineq:boost-appro}
\end{align*}
where Inequality \ref{ineq:boost-appro} follows from Lemma \ref{lemma:boostExp}.
\end{proof}

\begin{lemma} \label{lemma:boostExp}
	$e^{-\frac{(1-2\epsi/3)(1-\epsi)}{1+\epsi/3}} \leq 1/e+\epsi$,
	when $0\le \epsi \le 1$
	\end{lemma}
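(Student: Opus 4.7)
The plan is to take logarithms and reduce the claim to a standard one-variable inequality for $\ln(1+x)$. Writing $h(\epsi) = \frac{(1-\epsi/3)(1-\epsi)}{1+\epsi/3}$ and using $1/e + \epsi = (1+e\epsi)/e$, the claim $e^{-h(\epsi)} \le 1/e + \epsi$ is equivalent to $1 - h(\epsi) \le \ln(1+e\epsi)$. A direct algebraic simplification (combining the two fractions over the common denominator $1+\epsi/3$) gives
\[
1 - h(\epsi) \;=\; \frac{\epsi(5-\epsi)}{3+\epsi},
\]
so the task reduces to proving $\ln(1+e\epsi) \ge \frac{\epsi(5-\epsi)}{3+\epsi}$ for $\epsi \in [0,1]$.

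To prove this I would interpolate through the classical bound $\ln(1+x) \ge \frac{2x}{2+x}$ valid for all $x \ge 0$; this follows because $f(x) := \ln(1+x) - \frac{2x}{2+x}$ satisfies $f(0) = 0$ and $f'(x) = \frac{x^2}{(1+x)(2+x)^2} \ge 0$. Applying this with $x = e\epsi$ yields $\ln(1+e\epsi) \ge \frac{2e\epsi}{2+e\epsi}$, so it suffices to show $\frac{2e\epsi}{2+e\epsi} \ge \frac{\epsi(5-\epsi)}{3+\epsi}$ on $[0,1]$. Clearing positive denominators and cancelling a factor of $\epsi$ (the case $\epsi = 0$ is trivial), this is equivalent to
\[
P(\epsi) \;:=\; (6e-10) + (2-3e)\epsi + e\epsi^2 \;\ge\; 0 \quad \text{on } [0,1].
\]

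Finally, I would verify $P \ge 0$ on $[0,1]$ by a routine one-variable analysis: $P(0) = 6e-10 > 0$ and $P(1) = 4e - 8 > 0$, and since $P'(\epsi) = (2-3e) + 2e\epsi$ is linear with $P'(1) = 2-e < 0$, one has $P' < 0$ throughout $[0,1]$; hence $P$ is decreasing on $[0,1]$ and its minimum $P(1) = 4(e-2) > 0$ is strictly positive. The main obstacle is selecting a middle bound on $\ln(1+x)$ that is simultaneously tight near $\epsi = 0$ (where the original inequality is saturated) and at $\epsi = 1$; weaker Taylor bounds such as $\ln(1+x) \ge x - x^2/2$ are adequate near $0$ but fail for the full range because $e\epsi$ reaches values near $e > 1$. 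The Padé-type bound $\frac{2x}{2+x}$ matches the first two Taylor coefficients of $\ln(1+x)$ and is tight enough on the whole interval, reducing everything to an easily checked quadratic.
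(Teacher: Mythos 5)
Your proof is correct, and it takes a genuinely different route from the paper's. The paper works directly with $h(x)=1/e+x-e^{-\frac{(1-x/3)(1-x)}{1+x/3}}$ and shows $h$ is increasing on $[0,1]$ by a second-derivative analysis: it locates the minimum $x_0$ of $h'$ by bracketing the zero of an auxiliary quartic between $0.361$ and $0.362$, bounds $|h''|\le 128/81$, and concludes $h'(x)\ge h'(0.361)-\max|h''|\cdot(x_0-0.361)>0.3175>0$. You instead take logarithms, observe that $1-\frac{(1-\epsi/3)(1-\epsi)}{1+\epsi/3}=\frac{\epsi(5-\epsi)}{3+\epsi}$ exactly, and reduce the lemma to $\ln(1+e\epsi)\ge\frac{\epsi(5-\epsi)}{3+\epsi}$, which you settle via the Pad\'e bound $\ln(1+x)\ge\frac{2x}{2+x}$ and the positivity of the quadratic $P(\epsi)=(6e-10)+(2-3e)\epsi+e\epsi^2$ on $[0,1]$ (decreasing there, with minimum $P(1)=4(e-2)>0$). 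All of your intermediate steps check out: the algebraic simplification of $1-h$, the derivative computation $f'(x)=\frac{x^2}{(1+x)(2+x)^2}$ for the Pad\'e bound, the cross-multiplication yielding $P$, and the sign analysis of $P'$. What your approach buys is the elimination of all numerical root-bracketing and floating-point estimates -- every step is exact symbolic algebra plus the evaluations $6e-10>0$, $4e-8>0$, $2-e<0$ -- which makes the argument easier to verify and arguably more robust; the paper's approach is more brute-force calculus but requires no insight about which lower bound on $\ln(1+x)$ is tight enough at both endpoints, which, as you note, is the one genuinely delicate choice in your route.
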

	\begin{proof}
	Let $h(x)=1/e+x-e^{-\frac{(1-2x/3)(1-x)}{1+x/3}}$.
	The lemma holds if $h(x)$ is monotonically 
	increasing on $[0,1]$, since $h(0)=0$.
	The remainder of the proof shows that the first derivative of 
	$h(x)$ satisfies that $h'(x)>0$ on $[0,1]$.
	
	The first and second derivatives of $h(x)$ is as follows:
	\begin{align*}
	h'(x)&=1+\frac{2x^2+12x-18}{(x+3)^2}e^{-\frac{(3-2x)(1-x)}{3+x}},\\
	h''(x)&= 4\frac{18(x+3)-(x^2+6x-9)^2}{(x+3)^4} e^{-\frac{(3-2x)(1-x)}{3+x}}.
	\end{align*}
	Let $g(x)=18(x+3)-(x^2+6x-9)^2$. And, it holds that
	$g'(x)=18-4(x+3)(x^2+6x-9)>0$ when $0 \le x \le 1$.
	Therefore, $g(x)$ is monotonically increasing on $[0,1]$.
	Since $g(0)=-27<0$ and $g(1)=68>0$, $g(x)$ only has one zero $x_0$.
	And when $0 \le x \le x_0$, $g(x) \le 0$; 
	when $x_0 \le x \le 1$, $g(x) \ge 0$.
	Since $(x+3)^4 > 0$ and $e^{-\frac{(1-2x/3)(1-x)}{1+x/3}}>0$,
	when $0 \le x \le x_0$, it holds that $h''(x) \le 0$; 
	when $x_0 \le x \le 1$, it holds that $h''(x) \ge 0$.
	Thus, $x_0$ is the minimum point of $h'(x)$.
	Next, we try to bound the minimum of $h'(x)$.
	
	Since $g(0.22)<0$ and $g(0.23)>0$, the zero of $g(x)$ follows that
	$0.22\le x_0 \le 0.23$.
	From the analysis above,
	we know that $|g(x)| \le \max\{|g(0)|,|g(1)|\}=68$.
	Since $(x+3)^4 \ge 81$ and $e^{-\frac{(1-2x/3)(1-x)}{1+x/3}}\le 1$,
	it holds that $|h''(x)| \le 272/81$.
	Therefore, $$h'(x)\ge h'(x_0)\ge h'(0.22)-\max|h''(x)| \cdot (x_0-0.22) > 0.17>0.$$
	Thus, $h(x)$ is monotonically increasing on $[0,1]$. 
	It holds that $h(x)\ge h(0)=0$, when $0\le x \le 1$.
	\end{proof}
\color{black}


\section{Lower Adaptivity Modification of \linearseq} \label{apx:ls-low-ada}
\begin{algorithm}[t]
  \caption{A lower adaptivity version of \linearseq with $(5 + \oh{ \epsi )}^{-1}$ approximation ratio in $\oh{\log(n/k) / \epsi^3}$ adaptive rounds and expected $\oh{ n / \epsi^3 }$ queries.}
  \label{alg:ls-low-ada}
  \begin{algorithmic}[1]
  \Procedure{\low}{$f, \mathcal N, k, \epsi$}

  \State \textbf{Input:} evaluation oracle $f:2^{\mathcal N} 
  \to \reals$, constraint $k$, error $\epsilon$

  \State $a = \arg \max_{u \in \mathcal{N}} f(\{u\})$

  \State Initialize $A \gets \{a\}$ , $V \gets \mathcal N$, 
  $\ell = \lceil 4(1+1/(\beta \epsi))\log(n/k) \rceil$, 
  $\beta = \epsi/(16\log(8/(1-e^{-\epsi/2})))$

  \For{ $j \gets 1$ to $\ell$ } 

        \State Update $V \gets \{ x \in V : \marge{x}{A} \ge 
        f(A) / k \}$ and filter out the rest \label{line:lowUpdateV}

        \State \textbf{if} $|V| \le k$ \textbf{then break} 

      \State $V = \{v_1, v_2, \ldots, v_{|V|}\} \gets $\textbf{random-permutation}$(V)$ 
      \color{\rev}
      \State $\Lamb \gets\left[\left\lceil \frac{1}{\epsi} \right\rceil\right] \cup$
      \color{black}
      $\{\lfloor(1+\epsi)^u \rfloor: 
      1 \leq \lfloor(1+\epsi)^u \rfloor \leq k, 
      u \in \mathbb{N}\} $

      \Statex $\qquad \qquad \qquad \cup 
      \{\lfloor k+u\epsi k \rfloor: \lfloor k+u\epsi k \rfloor \leq 
      |V|, u \in \mathbb{N}\} \cup \{|V|\}$

      \State $B[\lamb_i] = \textbf{false}$, for $\lamb_i \in \Lambda$

      \For{$\lamb_i \in \Lamb$ in parallel }
            %
            %
            \State $T_{\lamb_{i-1}} \gets \{v_1, v_2, \ldots, v_{\lamb_{i-1}}\}$
            ; $T_{\lamb_i} \gets \{v_1, v_2, \ldots, v_{\lamb_i}\}$ ;
            $T_{\lamb_i}' \gets T_{\lamb_i} \backslash T_{\lamb_{i-1}}$

            \State \textbf{if} $\marge{T_{\lamb_i}'}{A \cup T_{\lamb_{i-1}}}/|T_{\lamb_i}'| \geq 
          (1-\epsi) f(A \cup T_{\lamb_{i-1}})/k$ \textbf{then} $B[\lamb_i] \gets \textbf{true}$ \label{line:lowIf2}       

      \EndFor 

      \color{\rev}
      \State $\lamb^* \gets \max \left\{\lamb_i \in \Lamb: 
      \left(\lamb_i \le \left\lceil \frac{1}{\epsi} \right \rceil \text{ and } B[1]\text{ to }B[\lamb_i]\text{ are all true }\right)\right.$
      \Statex \hspace*{8em} or $\left(\left\lceil \frac{1}{\epsi} \right \rceil<\lamb_i \leq k \text{ and } B[\lamb_i] = \textbf{false} \text{ and } B[1] \text{ to } B[\lamb_{i-1}] 
                  \text{ are all } \textbf{true}\right)$
      \Statex \hspace*{8em}or $\left(\lamb_i > k \text{ and } B[\lamb_i] = \textbf{false} \text{ and } \exists m \geq 1 \text{ s.t. } 
            |\bigcup_{u=m}^{i-1} T_{\lamb_u}'| \geq k \right.$
            and 
            $\left.\left.B[\lamb_m] \text{ to } B[\lamb_{i-1}] \text{ are all } \textbf{true}\right) \right\}$
      \color{black}

      \State $A \gets A \cup T_{\lamb^*}$ 

    \EndFor

    \State \textbf{if} { $|V| > k$ } \textbf{then return} \textit{failure}
    \State $A' \gets$ last $k$ elements added to $A$
    \State \textbf{return} $C \gets \arg \max\{f(A'), f(V)\}$

    \EndProcedure
\end{algorithmic}
\end{algorithm}
In this section, we describe and analyze a variant of \linearseq
with lower adaptivity. Pseudocode is given in Alg. \ref{alg:ls-low-ada}.
\begin{theorem} \label{theorem:ls-low-ada}
Let $(f,k)$ be an instance of \sm.
For any constant $0<\epsi< 1/2$,
the algorithm $\low$
has adaptive complexity $\oh{\log(n/k)/\epsi^3}$ and
outputs $C \subseteq \mathcal N$ with $|C| \le k$ such that the following properties hold:
1) The algorithm succeeds with probability at least $1 - k/n$.
2) There are $\oh{(1/(\epsi k)+1)n /\epsi^3}$ 
oracle queries in expectation.
3) If the algorithm succeeds, \textcolor{\rev}{$\left[ 5+\frac{2(5-4\epsi)}{(1-2\epsi)^2}\cdot\epsi \right]f(C) \ge f(O)$}, where $O$ is an optimal
solution to the instance $(f,k)$.
\end{theorem}

\begin{proof}
The main difference between \low and \linearseq is that
the algorithm terminate when $|V| \le k$ and returns the maximum value
between $f(A')$ and $f(V)$. 
Since the two algorithms have the same procedures of selection and 
filtering and the value of $\beta$ does not change, 
the probability of successful filtering of $(\beta \epsi)$-fraction
of $V$ at any iteration of the outer \textbf{for} loop is the same as 
\linearseq which is at least 1/2.

\textbf{Success Probability.}
The algorithm \low will succeed if there are at least 
$m=\lceil \log_{1-\beta \epsi}(k/n)\rceil$ successful iterations.
By modeling the success of the iterations as a sequence of dependent 
Bernoulli random variables, and denoting that $X_\ell$ is the number
of successes up to and including the $\ell$-th iteration
and $Y_\ell$ is the number of successes in $\ell$ independent 
Bernoulli trails with success probability 1/2,
\begin{align*}
\prob{\text{Algorithm \ref{alg:ls-low-ada} succeeds}}
&\geq \prob{X_\ell \geq m}\\
& \ge \prob{Y_\ell \geq m} \numberthis \label{ineq:low-prob1}\\
&\ge 1- \prob{Y_\ell \leq \log(n/k) /(\beta \epsi)} \\
&\ge 1-e^{-\frac{1}{2}\left(\frac{2\beta \epsi +1}{2(\beta \epsi +1)}\right)^2
\cdot 2\left(1+\frac{1}{\beta \epsi}\right) \log(n/k )} \numberthis \label{ineq:low-prob2}\\
& = 1-(\frac{k}{n})^{\frac{(2\beta \epsi+1)^2}{4\beta \epsi(\beta \epsi+1)}} \\
& \geq 1-\frac{k}{n},
\end{align*}
where Inequalities \ref{ineq:low-prob1} and \ref{ineq:low-prob2} follow from Lemma \ref{lemma:indep} and 
Lemma \ref{lemma:chernoff}, respectively.

\textbf{Adaptivity and Query Complexity.}
Oracle queries are made on Line \ref{line:lowUpdateV} 
and \ref{line:lowIf2}.
There is one adaptive round on Line \ref{line:lowUpdateV} and also
one adaptive round of the inner \textbf{for} loop. 
Therefore, the adaptivity is proportional to the number of iterations
of the outer \textbf{for} loop, $\oh{\ell}=\oh{\log(n/k)/\epsi^3}$.

For the query complexity, let $V_j$ be the value of $V$ after
filtering on Line \ref{line:lowUpdateV}, 
and let $Y_i$ be the number of iterations between the $(i-1)$-th
and $i$-th successful iterations of the outer \textbf{for} loop.
By Lemma \ref{lemma:indep} in Appendix \ref{apx:prob}, $\ex{Y_i} \le 2$.
Observe that $|V_0| = |\mathcal{N}| = n$.
Then, the expected number of queries is bounded as follows:
\begin{align*}
&\ex{Queries} \leq \sum_{j=1}^\ell \ex{|V_{j-1}|+2|V_j|/(\epsi k)+
        2\log_{1+\epsi} (k)-2/\epsi + 7} \\
       &\le \left(\frac{2}{\epsi k}+1\right)\sum_{i=1}^{\infty} 
       \ex{Y_i(1-\beta \epsi)^i n}+n
       + \left(2\log_{1+\epsi}(k) -\frac{2}{\epsi}+7\right)\ell \\
       & \leq \left(1+\left(\frac{2}{\beta \epsi}-2\right)\left(\frac{2}{\epsi k}+1\right)\right)n + 4\left(1-\frac{1}{\beta \epsi}\right)
       \left(\frac{2(1+\epsi)}{\epsi}\log( k)-\frac{2}{\epsi}+7\right)\log(n/k).
\end{align*}
The total queries are $\oh{ (1/(\epsi k)+1)n /\epsi^3 } = \oh{ n / \epsi^3 }$
in expectation.

\textbf{Approximation Ratio.}
Lemma \ref{lemma:fastSubTwo} still holds for Algorithm \ref{alg:ls-low-ada}, since the selection and filtering procedures
do not change.
Thus, it also holds that:
\begin{equation}
f(A')\ge \textcolor{\rev}{\frac{(1-2\epsi)^2}{1+\epsi+(1-2\epsi)^2}} f(A).
\label{ineq:relationOfA}
\end{equation}
\begin{lemma}\label{lemma:lowSub}
Suppose \low terminates successfully. Then $f(O\backslash V)\le 2f(A)$,
where $O$ is the optimal solution of size $k$.
\end{lemma}
\begin{proof}
For each $o \in O\backslash V$, let $j(o)+1$ be the iteration where
$o$ is filtered out, and $A_{j(o)}$ be the value of $A$ after iteration $j(o)$.
It holds that $\marge{o}{A_{j(o)}} < f(A_{j(o)})/k$. Then,
\begin{align*}
f(O\backslash V)-f(A) 
& \le f\left((O\backslash V)\cup A\right)-f(A) \numberthis \label{ineq:low-appro1}\\
& \le \sum_{o \in (O\backslash V)\cup A}\marge{o}{A_{j(o)}}\numberthis \label{ineq:low-appro2}\\
& \le \sum_{o \in (O\backslash V)\cup A} f(A_{j(o)})/k\\
& \le f(A)\numberthis \label{ineq:low-appro3},
\end{align*}
where Inequality \ref{ineq:low-appro1} follows from monotonicity, and Inequalities \ref{ineq:low-appro2}
and \ref{ineq:low-appro3} follow from submodularity.
\end{proof}
The approximation ratio can be calculated as follows,
\color{\rev}
\begin{align*}
f(O)& \le f(O \cap V) + f(O\backslash V) \numberthis \label{ineq:low-appro4}\\
& \le f(V) + \frac{2(1+\epsi+(1-2\epsi)^2)}{(1-2\epsi)^2}f(A')\numberthis \label{ineq:low-appro5}\\
& \le \left(1+\frac{2(1+\epsi+(1-2\epsi)^2)}{(1-2\epsi)^2}\right)f(C)\\
&= \left(5+ \frac{2(5-4\epsi)}{(1-2\epsi)^2}\cdot\epsi\right) f(C),
\end{align*}
\color{black}
where Inequality \ref{ineq:low-appro4} follows from submodularity, and Inequality \ref{ineq:low-appro5}
follows from monotonicity, Inequality \ref{ineq:relationOfA} and Lemma \ref{lemma:lowSub}.
\end{proof}


\section{Practical Optimizations to \linearseq} \label{apx:practical-opt}
In this section, we describe two practical optimizations to \linearseq
that do not compromise its theoretical guarantees. The implementation
evaluated in Section \ref{sec:exp} uses these optimizations for \linearseq.
Full details of the implementation are provided in the source code of the Supplementary Material.

\subsection{Avoidance of Large Candidate Sets $A$}
Most of the applications described in Appendix \ref{subsec:obj}
have runtime that depends at least linearly on $|S|$, the size of the set
$S$ to be evaluated. If \linearseq is implemented as specified
in Alg. \ref{alg:fastlinear}, the initial value of $A$ may be
arbitrarily low. Therefore, the size of $A$ may grow very large
as many low-value elements satisfy the threshold
of $f(A) / k$ for acceptance into the set $A$.
If $A$ becomes very large, the algorithm may slow
down due to the evaluation of many large sets, potentially much
larger than $k$. 

To avoid this issue, we adopt the following strategy. The universe
$\mathcal N$ is partitioned into two sets: $\mathcal N_1$, consisting
of the $5k$ highest value singletons, and $\mathcal N_2 = \mathcal N \setminus \mathcal N_1$. Then the main \textbf{for} loop of
\linearseq is executed twice; the first time,
$V$ is initialized to $\mathcal N_1$.
The second execution sets $V = \mathcal N_2$ and $A$ has the value
obtained after the first execution of the \textbf{for} loop.
After the second execution of the \textbf{for} loop, the algorithm
concludes as specified in Alg. \ref{alg:fastlinear}.

The idea behind the first execution of the \textbf{for} loop
is to obtain an initial set $A$ with a relatively high value;
then, when the rest of the elements $\mathcal N_2$ are considered,
elements with low value with respect to $f(A)$ will be filtered immediately
and the size of $A$ will be limited. In fact, one can show that
it is sufficient to take $|A| < O(k \log(k))$ using this strategy
to ensure that $f(A) \ge \opt$; we omit this proof, but it follows
from an initial value of $f(A)$ of at least the maximum singleton
value, the fact that $\opt$ is at most $k$ times the maximum
singleton value, and the condition on the growth of the value of $f(A)$
when elements are added.

\subsection{Early Termination}
A simple upper bound on \opt may be obtained by the sum of the top $k$
singleton values. Hence, the algorithm may check if its ratio is satisfied
early, \ie, before the \textbf{outer} for loop finishes. If so, the algorithm
can terminate early and still ensure its approximation ratio holds.  

\section{Experiments } \label{apx:exp}

\subsection{Implementation, Environment, and Parameter Settings} \label{subsec:params} 
 The experiments are conducted on a server running Ubuntu 20.04.2 with kernel verison 5.8.0. To efficiently utilize all the available threads, we install the Open-MPI and mpi4py library in the system and implement all the algorithms using Message Passing Interface (MPI)  Using MPI we make all the implementations CPU bound as it allows us to minimize communication between the processors by providing explicit control over the parallel communication architecture and the information exchanged between the processors. The experiments are run with the $mpirun$ command and for all the applications, each algorithm is repeated five times and the average of all the repetitions is used for the evaluation of the objective, value, number of query calls and parallel runtime. Similar to \citet{Breuer2019}, the parallel runtime of the algorithms are obtained by computing the difference in time between two calls to $MPI.Wtime()$, once after all the processors are provided with a copy of the input data and the objective function and the other call at the completion of the algorithm. The objective values used for the evaluation are normalized by the objective value obtained from \plg \cite{minoux1978accelerated}, an accelerated parallel greedy algorithm that avoids re-evaluating samples that are known to not provide the highest marginal gain.
The parameters $\epsi$ for \flsabr is set to $0.1$ and to obtain the $\alpha$ and $\Gamma$ from the preprocessing algorithm \ls, the $\epsi$ was set at 0.21. For \fast, the $\epsi$ and $\delta$ are set to $0.05$ and $0.025$ respectively, same as the parameter settings in the \citet{Breuer2019} evaluation\footnote{Our code is available at \textit{https://gitlab.com/deytonmoy000/submodular-bestofbothworlds.}}. 

\subsection{Application Objectives and Datsets}\label{subsec:obj}
\subsubsection{Max Cover} \label{exp:maxcov}
The objective of this application for a given graph $G$ and a constraint $k$  is to find a set $S$ of size $k$, such that the number of nodes having atleast one neighbour in the set $S$ is maximized. The application run on synthetic random graphs, each consisting of 100,000 nodes generated using the Erdős–Rényi (ER), Watts-Strogatz (WS) and Barabási–Albert (BA) models. The ER graphs were generated with $p=0.0001$, while the WS graphs
were created with $p=0.1$
and 10 edges per node. For the BA model, graphs were generated by adding
$m=5$ edges each iteration.   

\subsubsection{Image Summarization on CIFAR-10 data} \label{exp: imgsumm}
In this application, given a constraint $k$, the objective is to find a subset of size $k$ from a large collection of images which is representative of the entire collection. The objective of the application used for the experiments is a monotone variant of the image summarization from \citet{fahrbach2019non}. For a groundset with $N$ images, it is defined as follows:

\begin{align*}
f(S) = \sum_{i \in N} \max_{j \in S} s_{i,j}
\end{align*}
where $s_{i,j}$ is the cosine similarity of the $32 \times 32$
pixel values between image $i$ and image $j$. The data set used for the image summarization experiments is the CIFAR-10 test set containing 10,000
$32 \times 32$ color images.

\subsubsection{Twitter Feed Summarization} \label{exp:twtsumm}
The objective of this application for a given constraint $k$ is to select $k$ tweets from an entire twitter feed consisting of large number of tweets, that would represent a brief overview of the entire twitter feed and provide all the important information. The objective and the data set used for this experiments is adopted from  twitter stream summarization of \citet{Kazemi2019}. For a given twitter feed of $N$ tweets, where each tweet $s \in N$ consists of a set of keywords $W_s$ and $W$ is the set of all keywords in the feed such that $W$ = $\bigcup_{s = 1}^{N} W_s$, the objective function for the twitter feed summarization is given by:
\begin{align*}
f(S) = \sum_{w \in W} \sqrt{\sum_{s \in S} \mathrm{score}(w, s)}  
\end{align*}
where score($w,s$) is the number of retweets of the tweet $s$ such that $w \in W_s$, otherwise score($w,s$) = 0 if $w \notin W_s$. The experiments for twitter feed summarization uses a data set consisting of 42,104 unique tweets from 30 popular news accounts.

\subsubsection{Influence Maximization on a Social Network.} 
In influence maximization, the objective is to select a set of social network influencers to promote a topic in order to maximise its aggregate influence. The probability that a random user $i$ will be influenced by the set of influencers in $S$ is given by:
\begin{align*}
    f_i(S) & = 1 \quad \textrm{for $i \in S$} \\
    f_i(S) & = 1 - (1 - p)^{|N_S(i)|} \quad \textrm{for  $i \notin S$ }
\end{align*}

where |$N_S(i)$| is the number of neighbors of node $i$ in $S$. We use the  Epinions data set with 27,000 users from \citet{rossi2015network} for the influence maximization experiments.

\subsubsection{Revenue Maximization on YouTube.} 
Based on the objective function and data set from \citet{mirzasoleiman2016fast}, the objective for this application objective is to maximise product revenue by choosing set of Youtubers $S$, who will each advertise a different product to their network neighbors. For a given set of users $X$ and $w_{i,j}$ as the influence between user $i$ and $j$, the objective function can be defined by:
  \begin{align*}
      f(S) & = \sum_{i \in X} V \left(\sum_{j \in S} w_{i,j}\right) \\
      V(y) & = y^{\alpha}
  \end{align*}
  where $V(S)$, the expected revenue from an user is a function of the sum of influences from neighbors who are in $S$ and $\alpha : 0 < \alpha < 1 $ is a rate of diminishing returns parameter for increased cover. We use the Youtube data set consisting of 18,000 users and value of $\alpha$ set to 0.9 for this application.

\subsubsection{Traffic Speeding Sensor Placement.} 
The objective of this application is to install speeding sensors on a select set of locations in a highway network to maximize the observed traffic by the sensors. This application uses the data from CalTrans PeMS system \cite{caltrans} consisting of 1,885 locations from Los Angeles region.


\subsection{Additional Evaluation Results (continued from Section \ref{sec:exp})}

Figure \ref{fig:evalApndxObj}, \ref{fig:evalApndxQry}, \ref{fig:evalApndxAda} and \ref{fig:evalApndxTim} illustrates the performance comparison with \fast across the MaxCover(WS), MaxCover(ER), MaxCover(BA), TweetSumm, InfluenceMax and TrafficMonitor applications. In Fig. \ref{fig:evalApndxObj}, we show the mean objective value of \fast and \flsabr 
across all the applications and datasets. The objective value is normalized by that of \plg. Both algorithms perform very similarly with objective value higher than $80$\% on most instances, however, as demonstrated in Fig. \ref{fig:Apndx_objA}, \ref{fig:Apndx_objB}, \ref{fig:Apndx_objC} the objective value obtained by \fast is not very stable. Overall as shown in the table \ref{table:cmp-exp}, \flsabr either maintains or outperforms the objective obtained by \fast across these applications with the TrafficMonitor and MaxCover (BA) being the instances where it exceeds the average objective value of FAST by 6\% and 5\% respectively.

\begin{figure}[t]
  \subfigure[]{
    \includegraphics[width=0.31\textwidth, height=0.14\textheight]{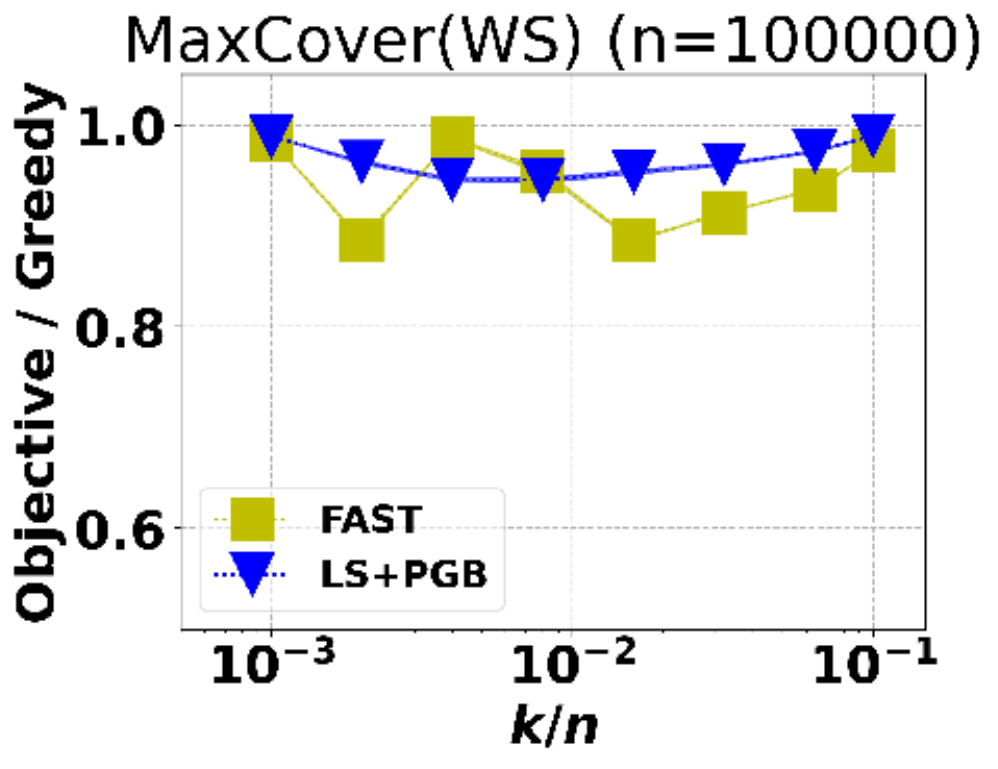}\label{fig:Apndx_objA}
  }
  \subfigure[]{
    \includegraphics[width=0.31\textwidth, height=0.14\textheight]{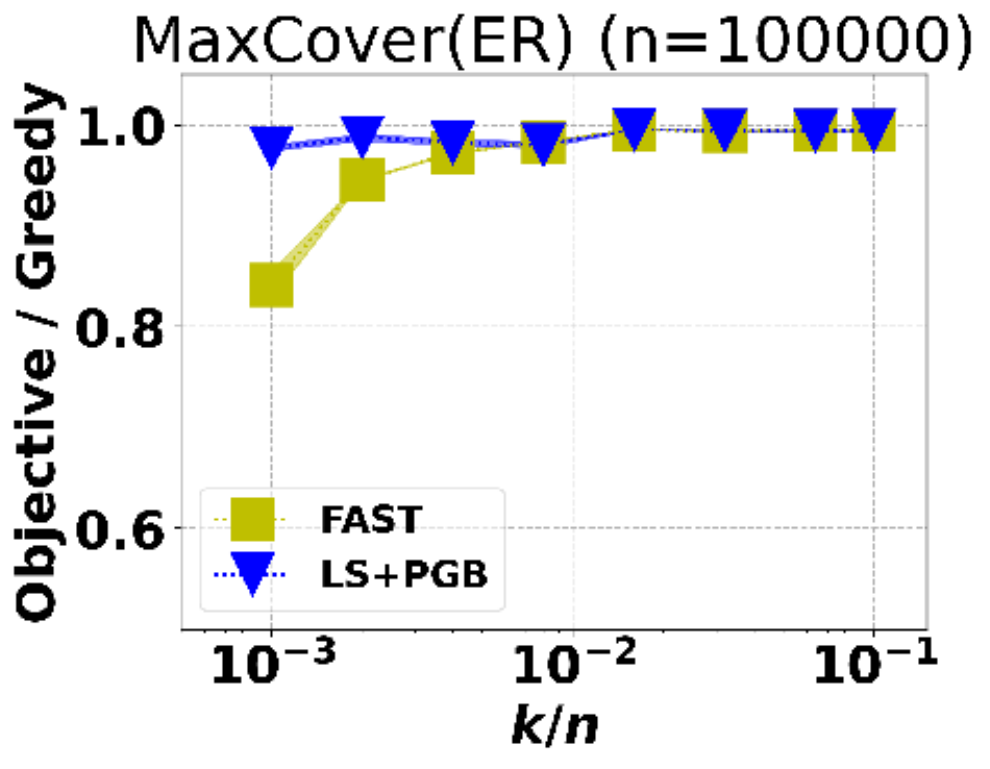} \label{fig:Apndx_objB}
  }
  \subfigure[]{
    \includegraphics[width=0.31\textwidth, height=0.14\textheight]{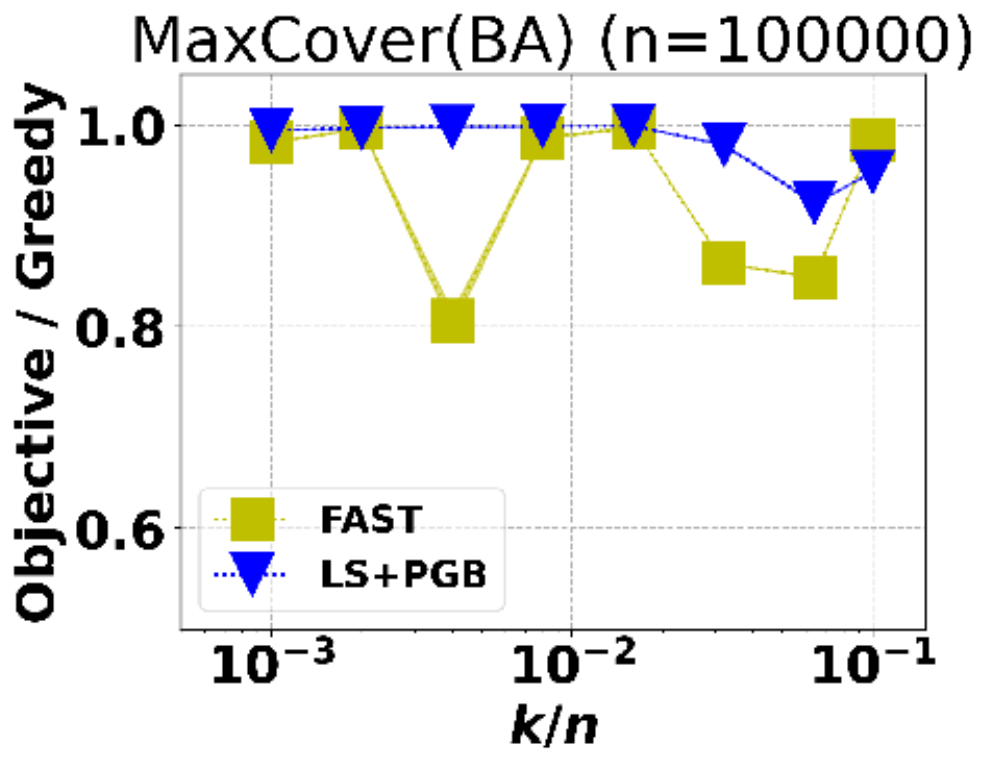} \label{fig:objA}
  }
  \subfigure[]{
    \includegraphics[width=0.31\textwidth, height=0.14\textheight]{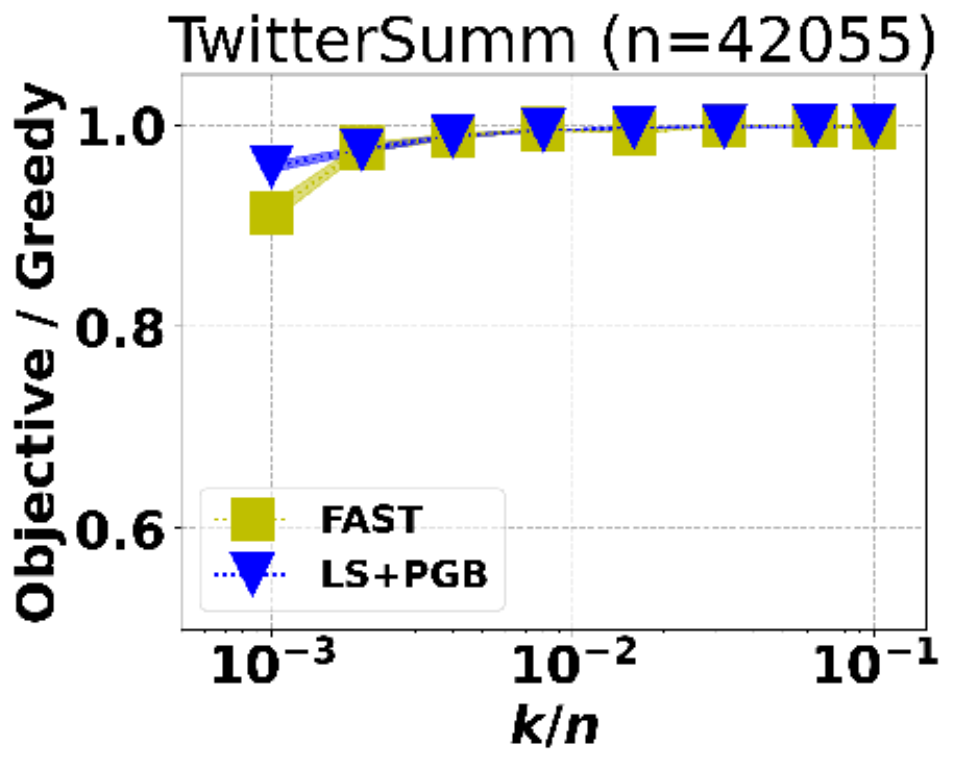}
  }
  \subfigure[]{
    \includegraphics[width=0.31\textwidth, height=0.14\textheight]{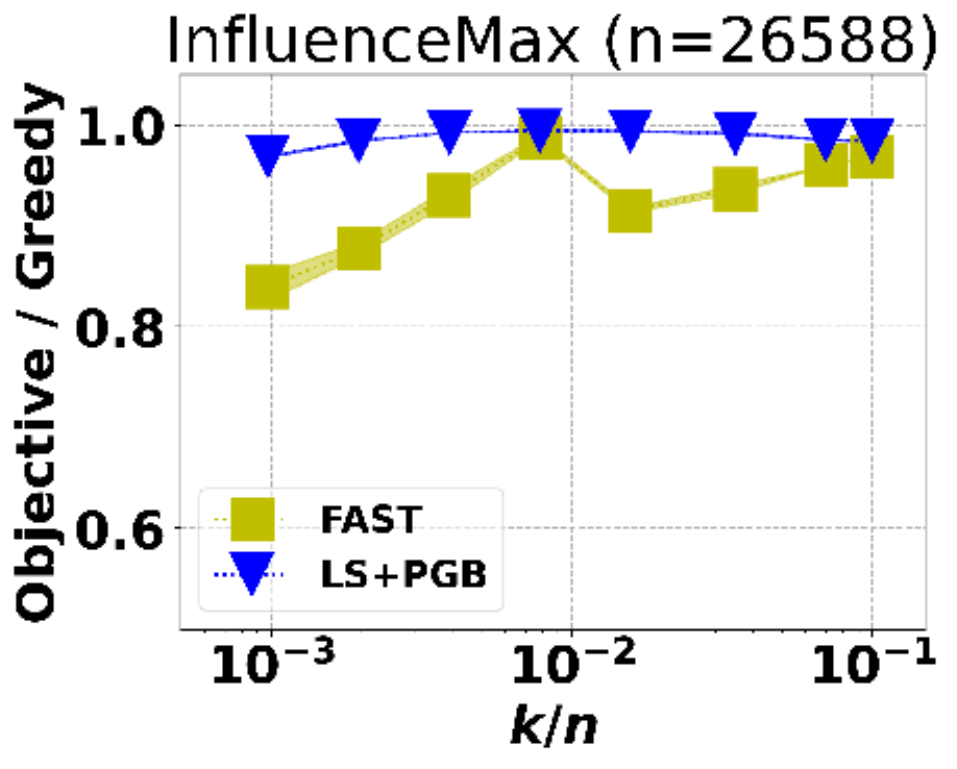} \label{fig:Apndx_objC}
  }
  \subfigure[]{
    \includegraphics[width=0.31\textwidth, height=0.14\textheight]{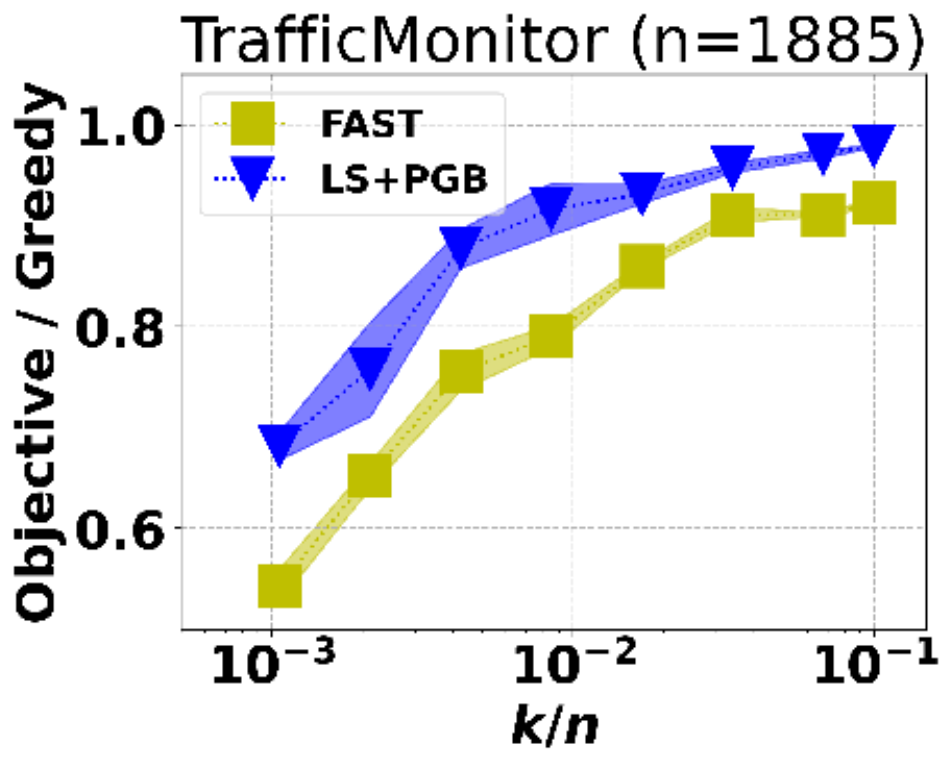}\label{fig:Apndx_objD}
  } 

\caption{Objective value vs. $k/n$. The objective value is normalized by the standard greedy value. The ($k/n$)-axis is log-scaled.} \label{fig:evalApndxObj}
\end{figure}

Fig.\ref{fig:evalApndxQry} demonstrates the mean total queries needed by \flsabr and \fast for all  applications with both \fast and \flsabr exhibiting a linear scaling behavior with the increasing $k$ values with the magnitude of rise in total queries with $k$ is less than 5 folds even with 100 folds increase in $k$.  Overall as shown in table \ref{table:cmp-exp}, \flsabr achieves the objective in less than half the total queries required by \fast for the MaxCover and the TwitterSumm objective. Whereas for TrafficMonitor and InfluenceMax, \fast requires 1.5 and 1.9 times the queries needed by \flsabr for the same objective. Very similar in nature to the number of query calls, as shown in fig. \ref{fig:Apndx_timeA} -  \ref{fig:Apndx_timeD}, \flsabr either maintains or outperforms \fast across all the applications.
\begin{figure}[t]
  \subfigure[]{
    \includegraphics[width=0.31\textwidth, height=0.14\textheight]{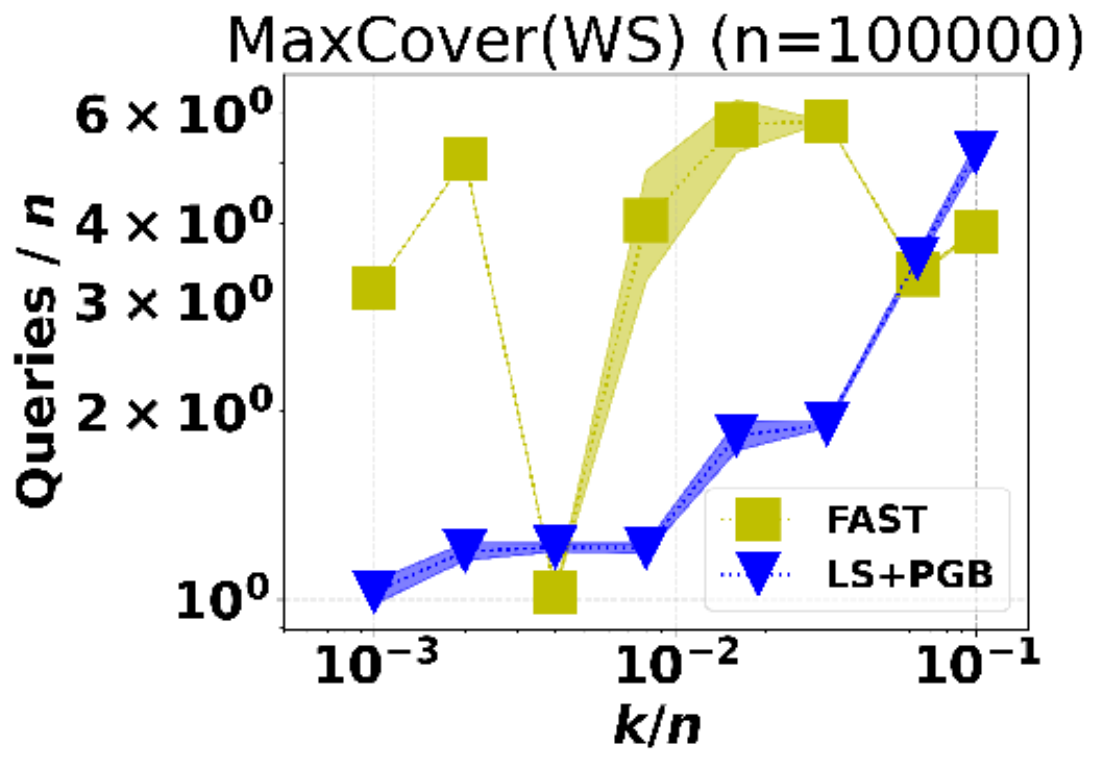} \label{fig:Apndx_qryA}
  }
  \subfigure[]{
    \includegraphics[width=0.31\textwidth, height=0.14\textheight]{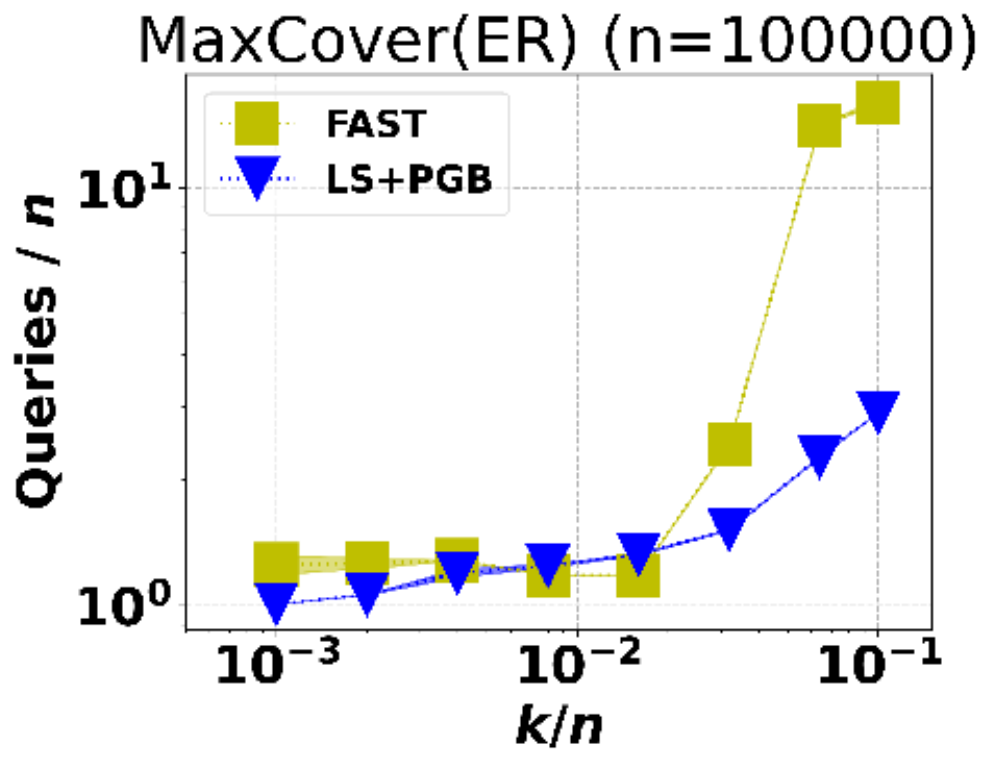} \label{fig:Apndx_qryB}
  }
  \subfigure[]{
    \includegraphics[width=0.31\textwidth, height=0.14\textheight]{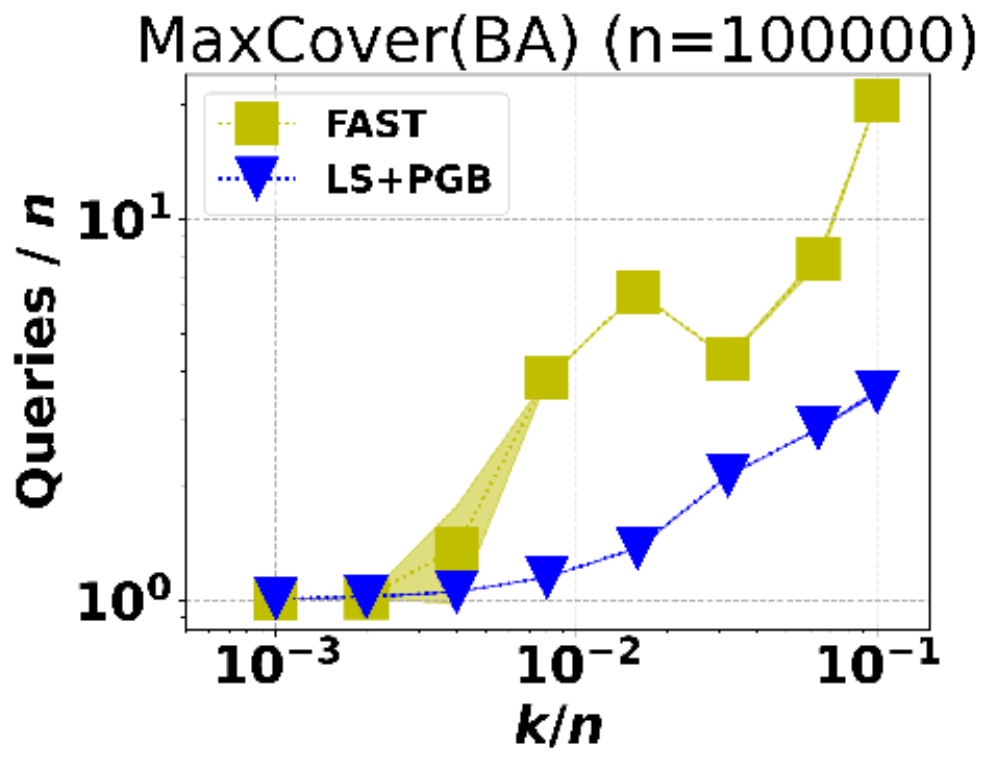}\label{fig:qryA}
  }
  \subfigure[]{
    \includegraphics[width=0.31\textwidth, height=0.14\textheight]{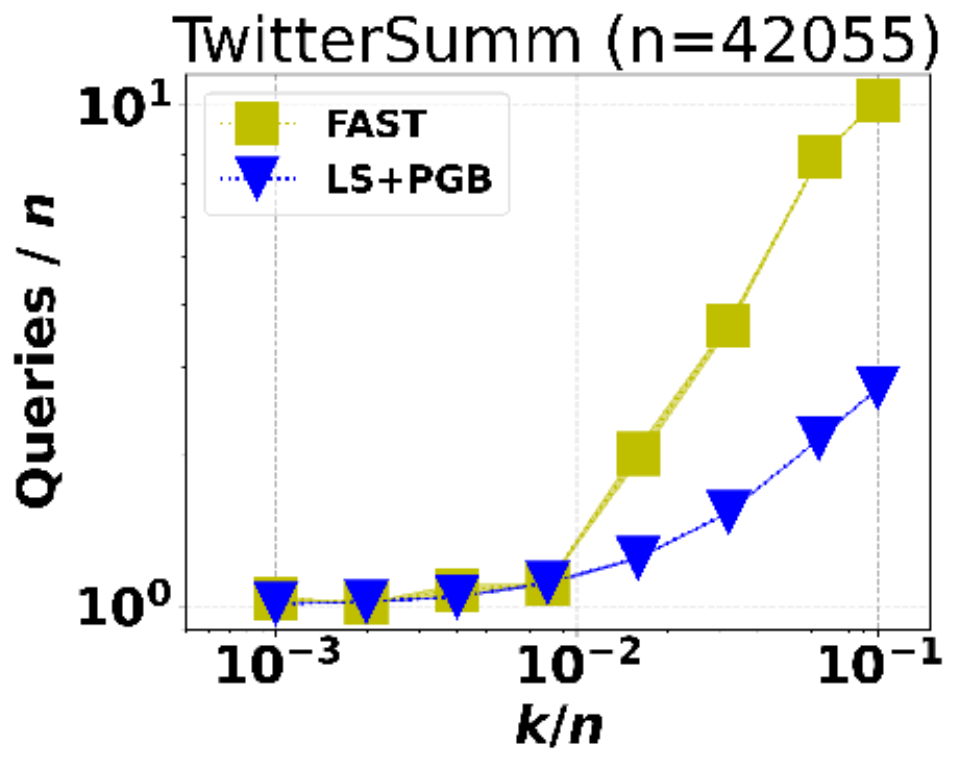}
  }
  \subfigure[]{
    \includegraphics[width=0.31\textwidth, height=0.14\textheight]{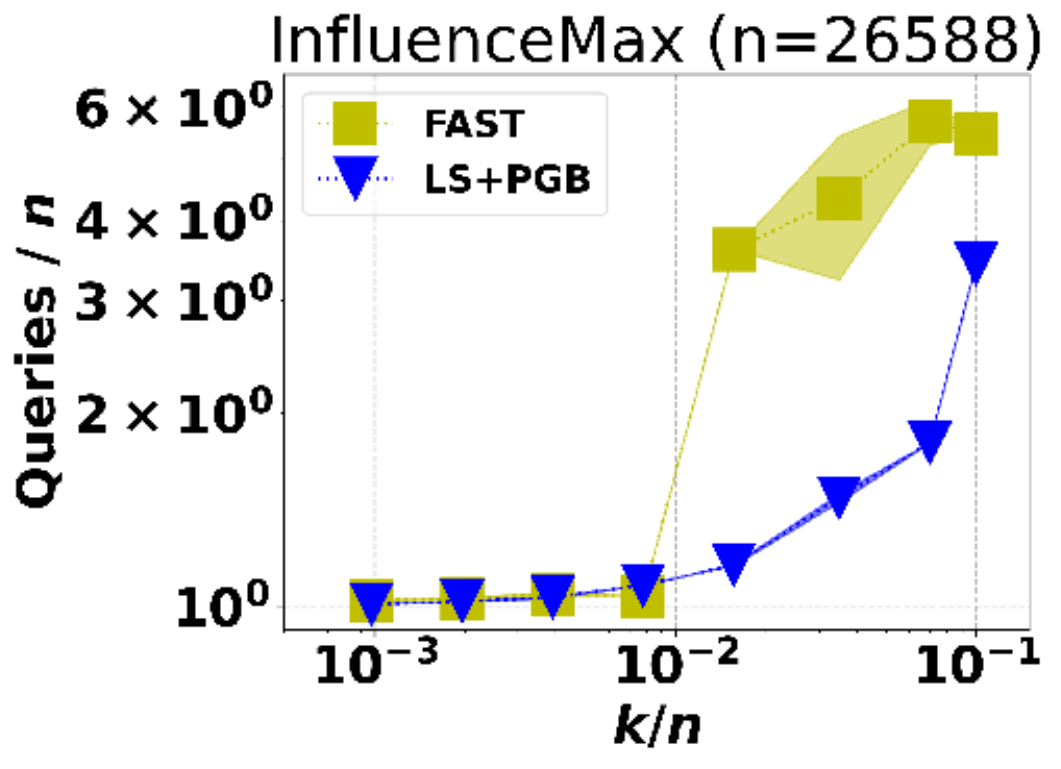}\label{fig:Apndx_qryC}
  }
  \subfigure[]{
    \includegraphics[width=0.31\textwidth, height=0.14\textheight]{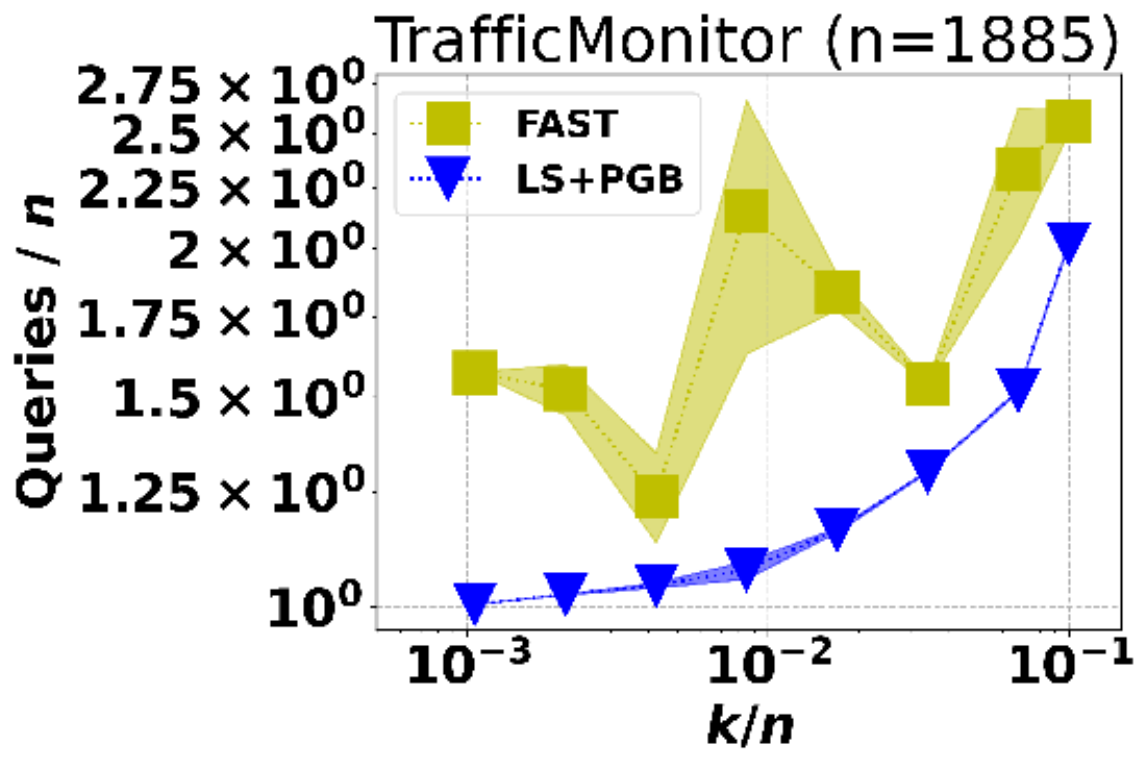} \label{fig:Apndx_qryD}
  }
\caption{Total Queries/ $n$ vs. $k/n$. Both axes are log-scaled.} \label{fig:evalApndxQry}
\end{figure}

Fig. \ref{fig:evalApndxAda} and \ref{fig:evalApndxTim} illustrates the adaptivity and the parallel runtime of \flsabr and \fast across the six datasets.  As shown in fig. \ref{fig:evalApndxTim}, both algorithms exhibit linear scaling of runtime with $k$. Overall on an average over the six datasets, FAST requires more than 3 times the time needed by LS+PGB to achieve the objective with TwitterSumm being the objective where overall \flsabr is over 4 times quicker than \fast. In terms of adaptivity, as demonstarted in fig. \ref{fig:evalApndxAda}, especially for larger values $k$, \fast requires more than double the adaptive rounds needed by \flsabr for the MaxCover and TwitterSumm application. For InfluenceMax and TrafficMonitor, \flsabr either maintans or outperforms \fast for larger $k$ values.
 
\begin{figure}[t]
  \subfigure[]{
    \includegraphics[width=0.31\textwidth, height=0.14\textheight]{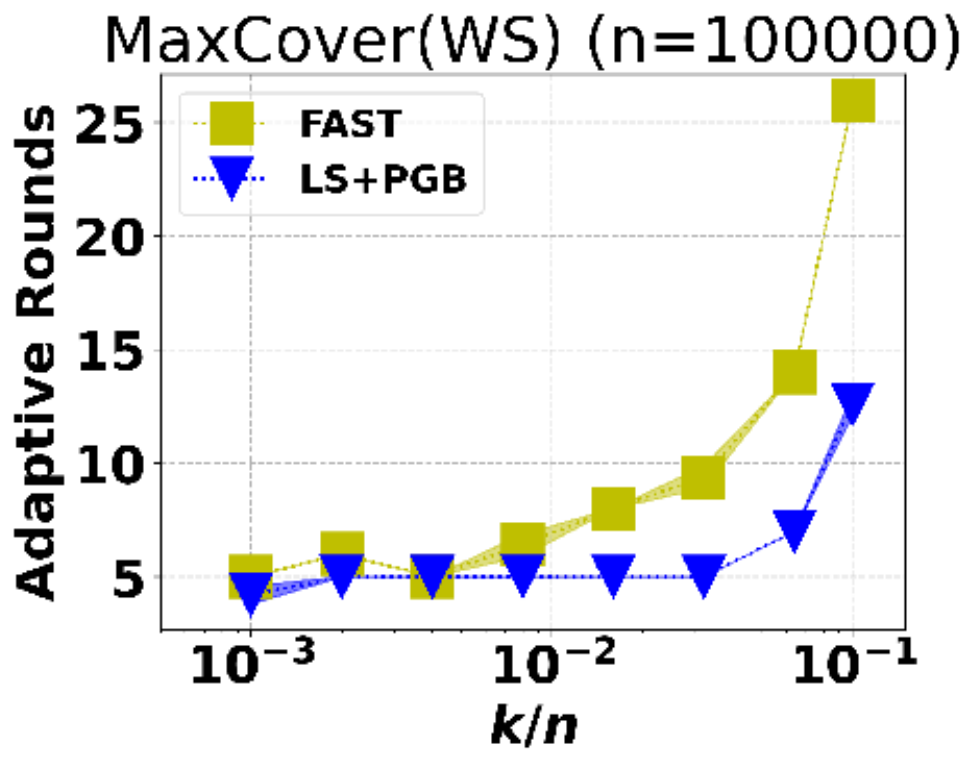} \label{fig:Apndx_adaA}
  }
  \subfigure[]{
    \includegraphics[width=0.31\textwidth, height=0.14\textheight]{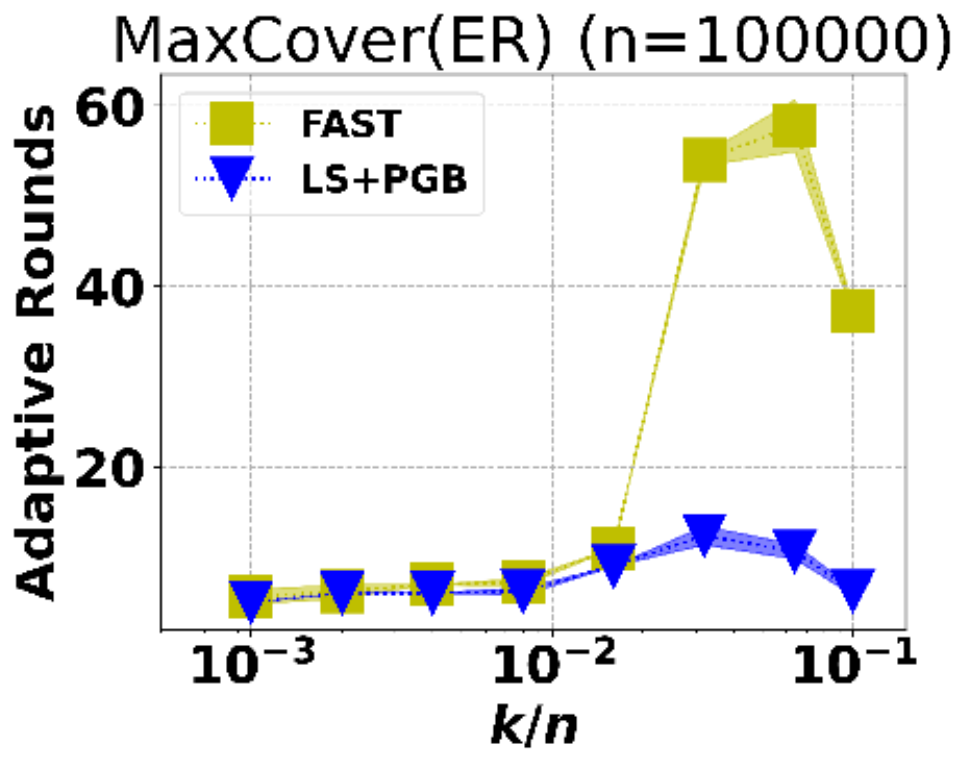}
  }
  \subfigure[]{
    \includegraphics[width=0.31\textwidth, height=0.14\textheight]{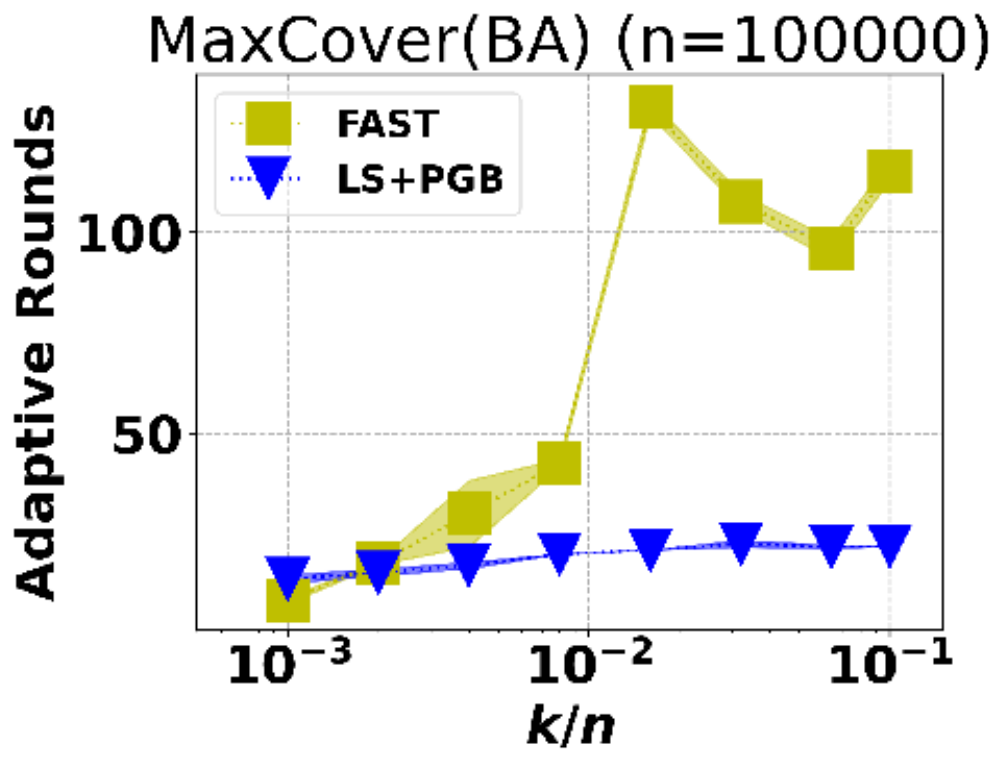}\label{fig:adaA}
  }
  \subfigure[]{
    \includegraphics[width=0.31\textwidth, height=0.14\textheight]{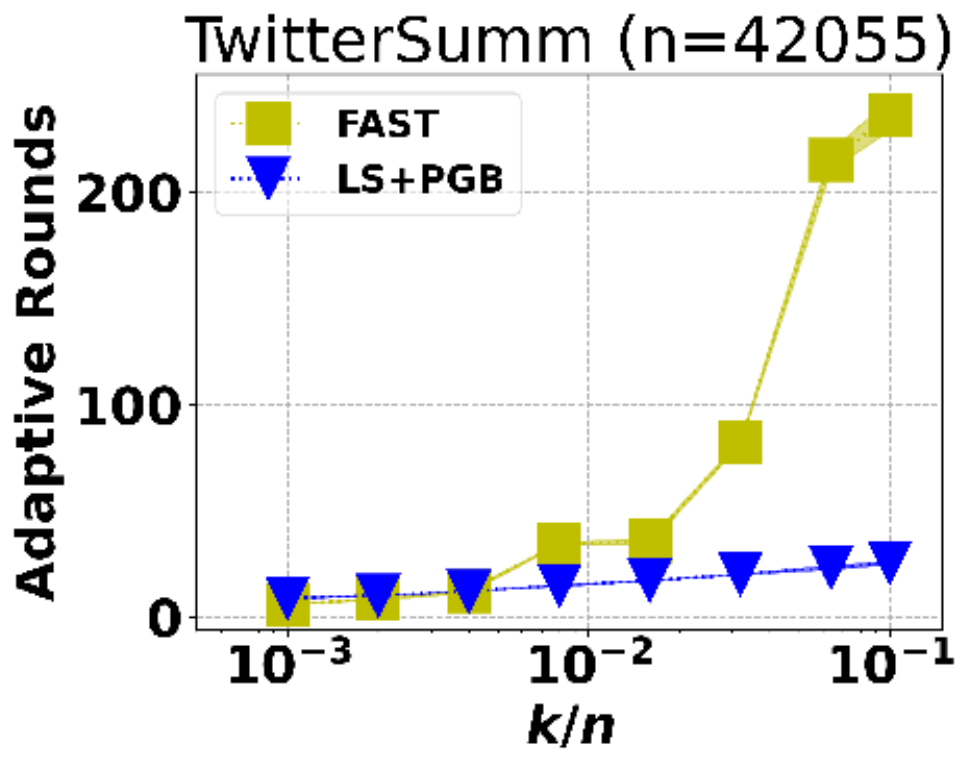}
  }
  \subfigure[]{
    \includegraphics[width=0.31\textwidth, height=0.14\textheight]{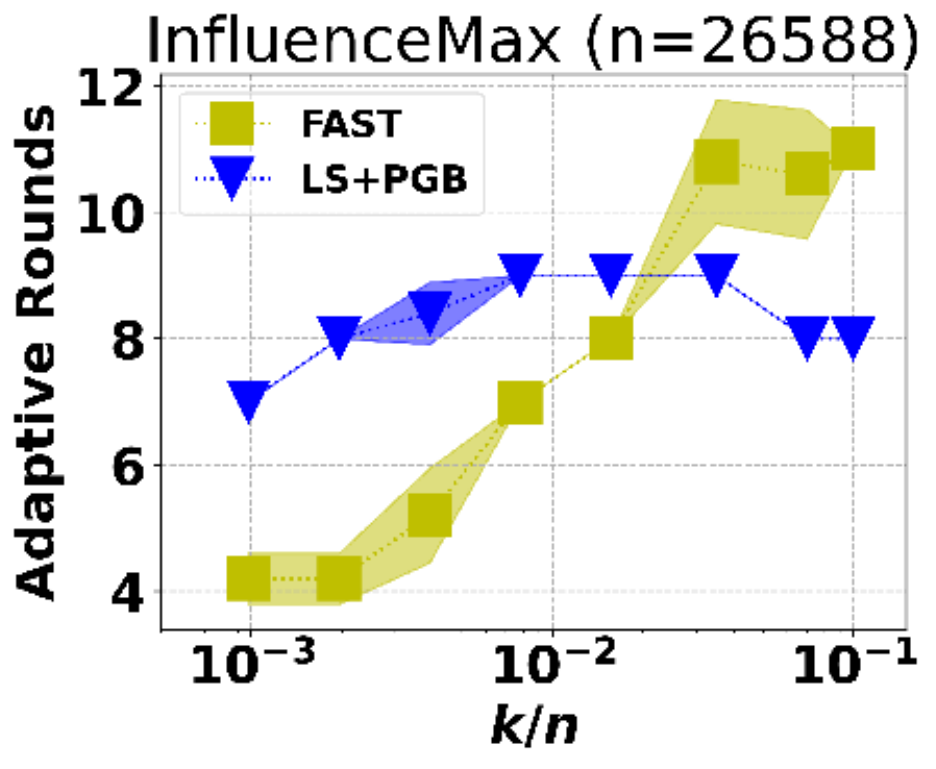}
  }
  \subfigure[]{
    \includegraphics[width=0.31\textwidth, height=0.14\textheight]{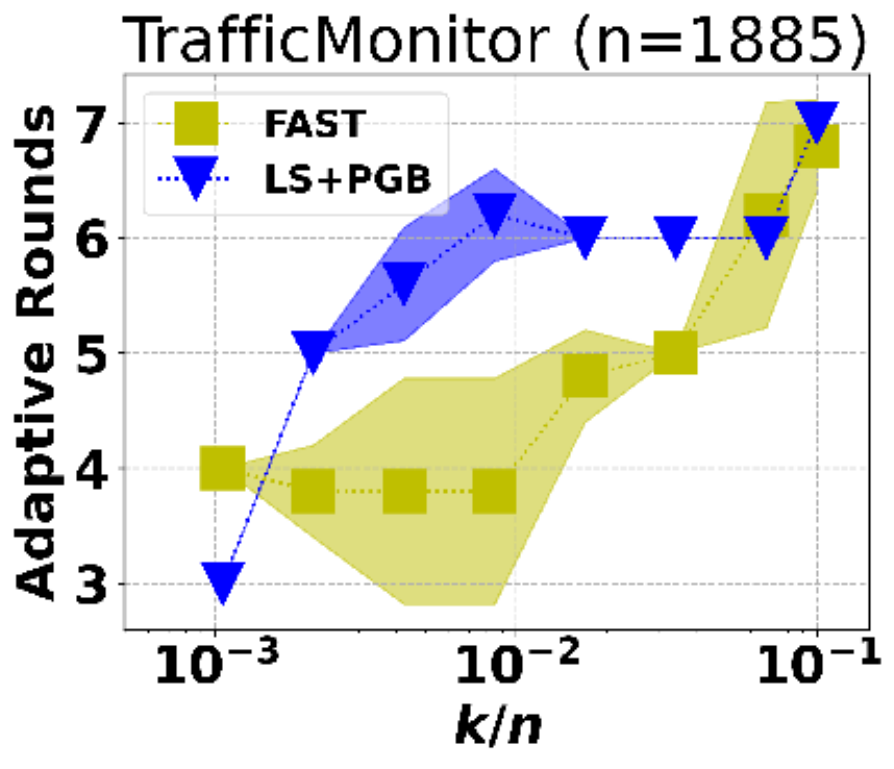} \label{fig:Apndx_adaD}
  }
\caption{Adaptive Rounds vs. $k/n$. The ($k/n$)-axis is log-scaled.} \label{fig:evalApndxAda}
\end{figure}
\begin{figure}[t]
  \subfigure[]{
    \includegraphics[width=0.31\textwidth, height=0.14\textheight]{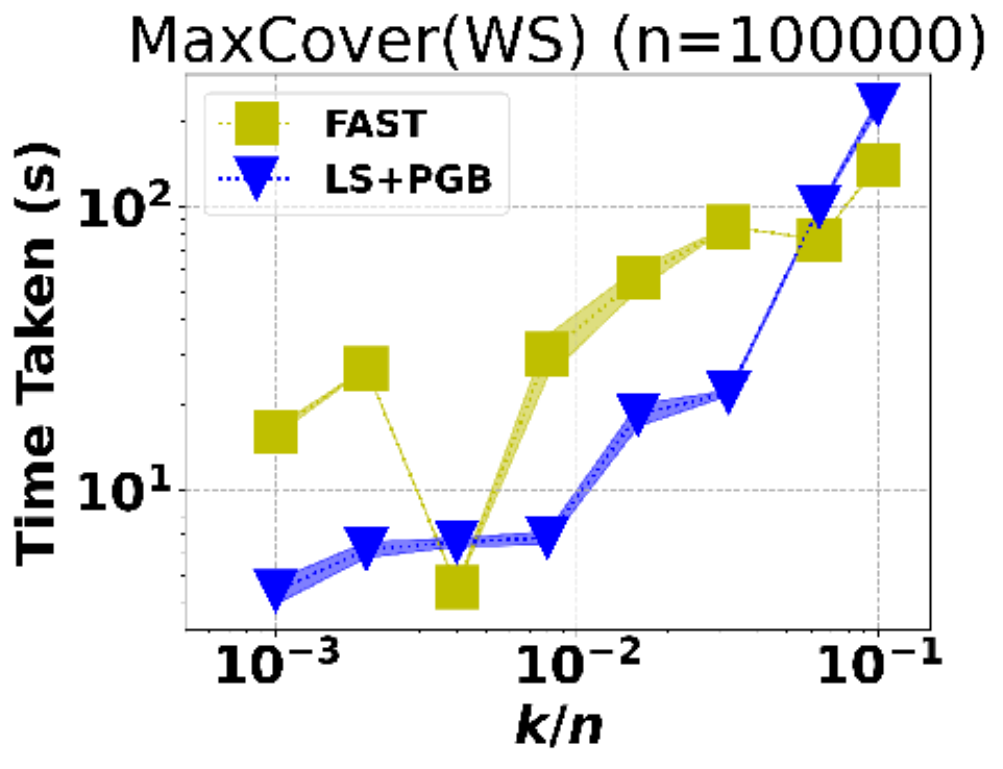}\label{fig:Apndx_timeA}
  }
  \subfigure[]{
    \includegraphics[width=0.31\textwidth, height=0.14\textheight]{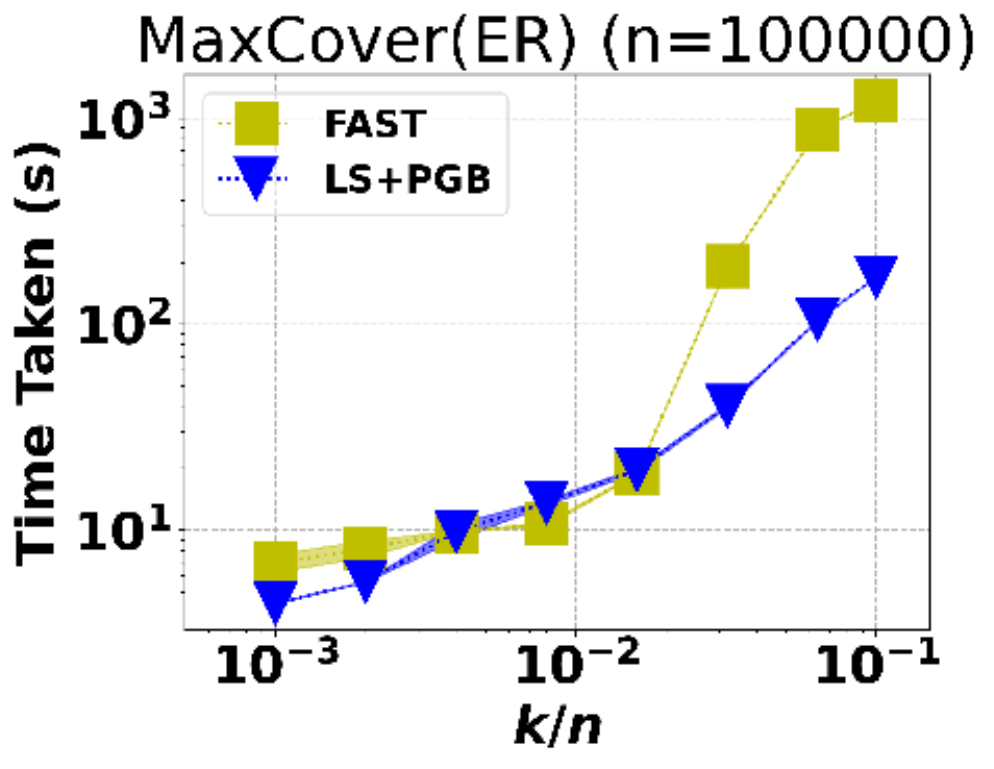}
  }
  \subfigure[]{
    \includegraphics[width=0.31\textwidth, height=0.14\textheight]{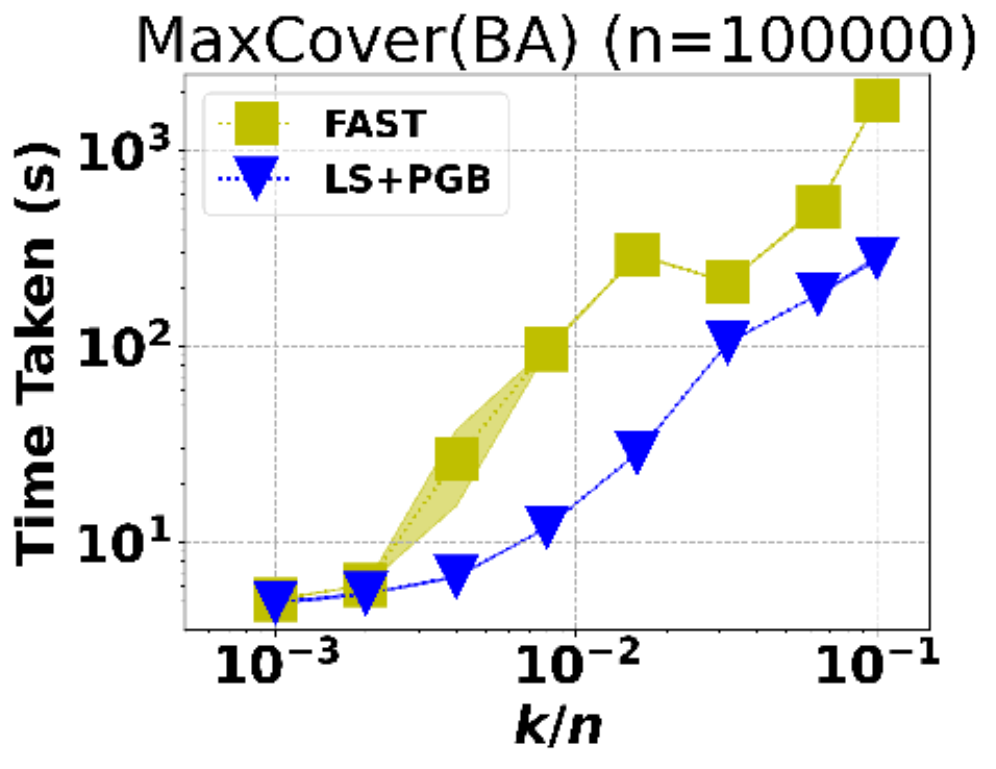}\label{fig:timeAapx}
  }
  \subfigure[]{
    \includegraphics[width=0.31\textwidth, height=0.14\textheight]{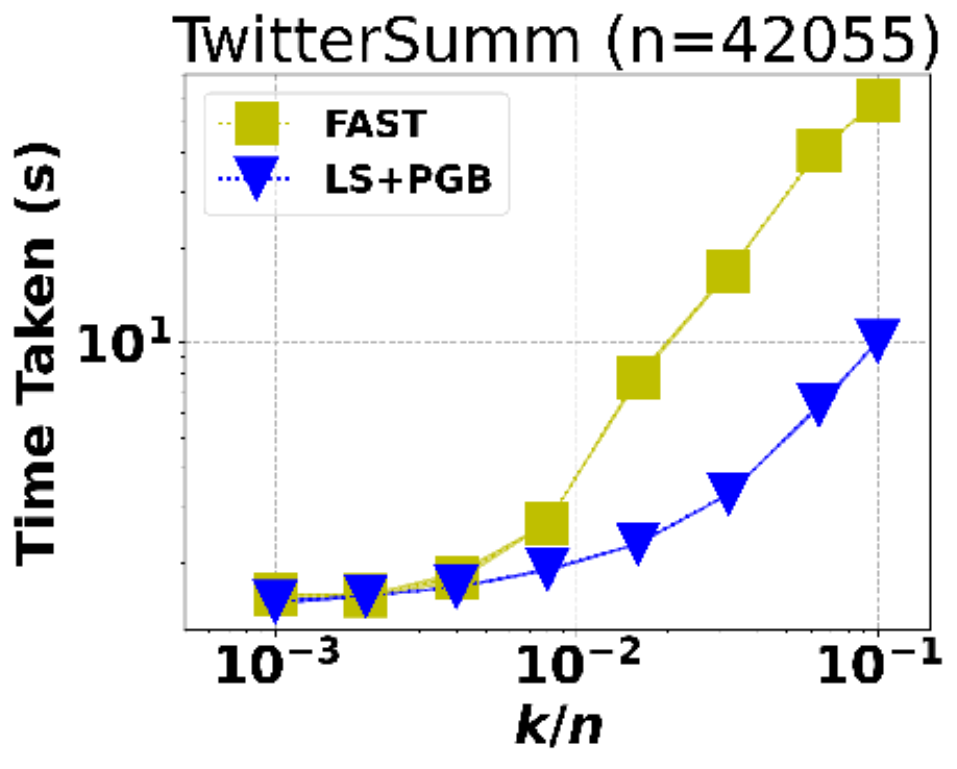}
  }
  \subfigure[]{
    \includegraphics[width=0.31\textwidth, height=0.14\textheight]{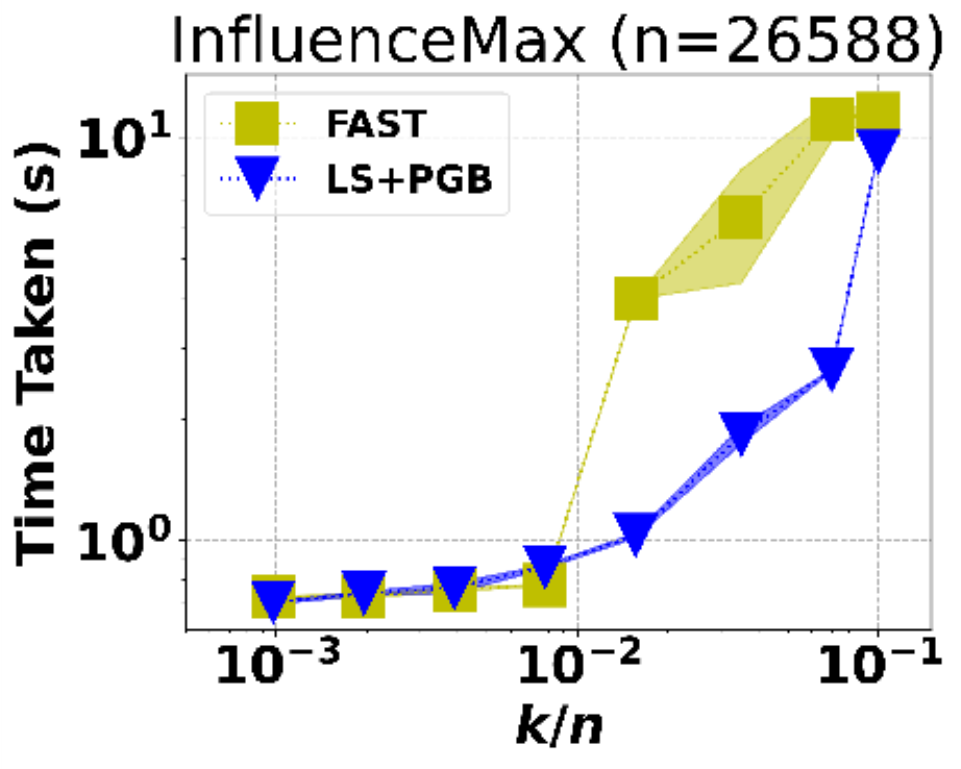}
  }
  \subfigure[]{
    \includegraphics[width=0.31\textwidth, height=0.14\textheight]{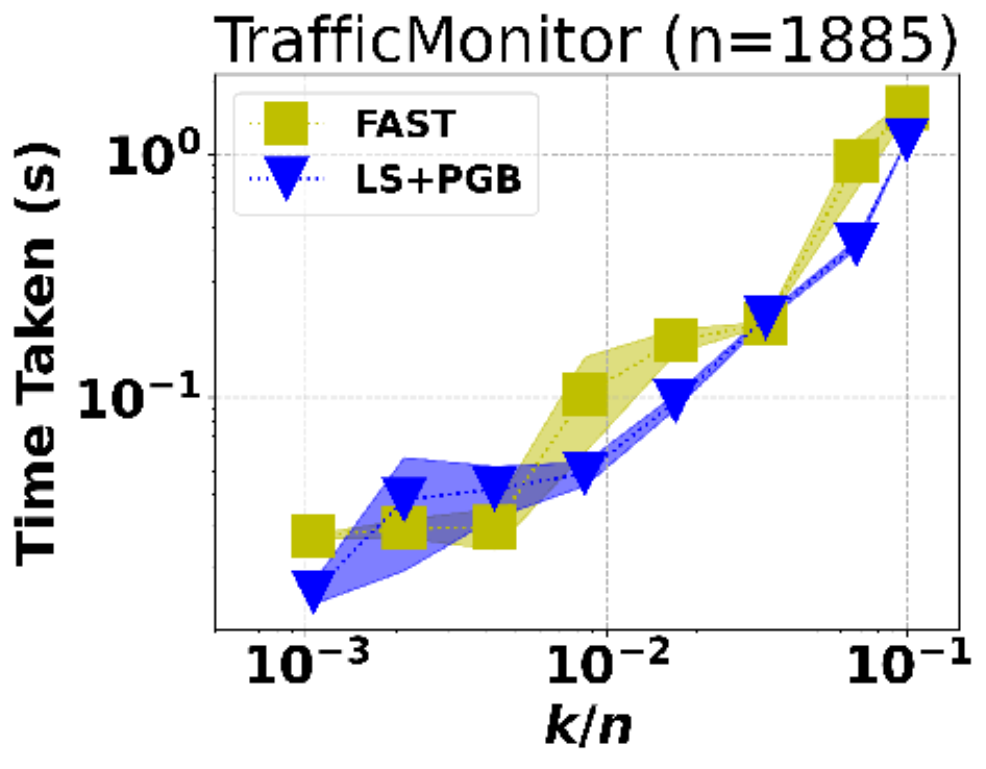}\label{fig:Apndx_timeD}
  }
  \caption{Parallel Runtime vs. $k/n$. Both axes are log-scaled.}\label{fig:evalApndxTim}
\end{figure}


\subsection{Time vs. Number of Threads}

\begin{figure}[t]
  \subfigure[]{
    \includegraphics[width=0.23\textwidth, height=0.11\textheight]{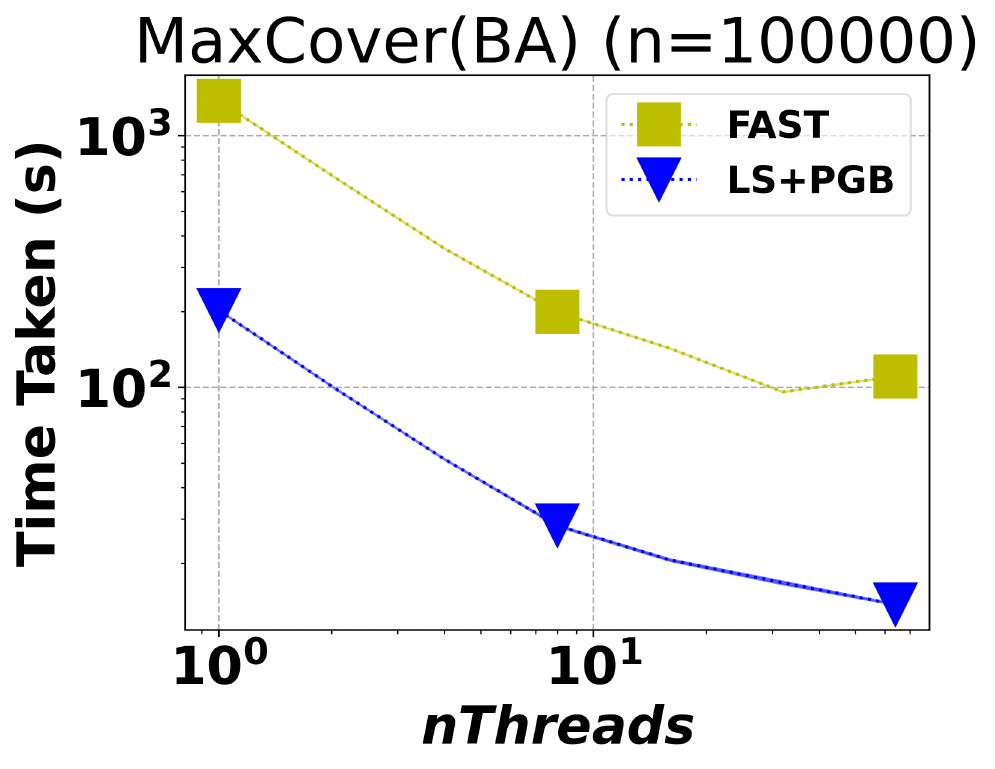}
  }
  \subfigure[]{
    \includegraphics[width=0.23\textwidth, height=0.11\textheight]{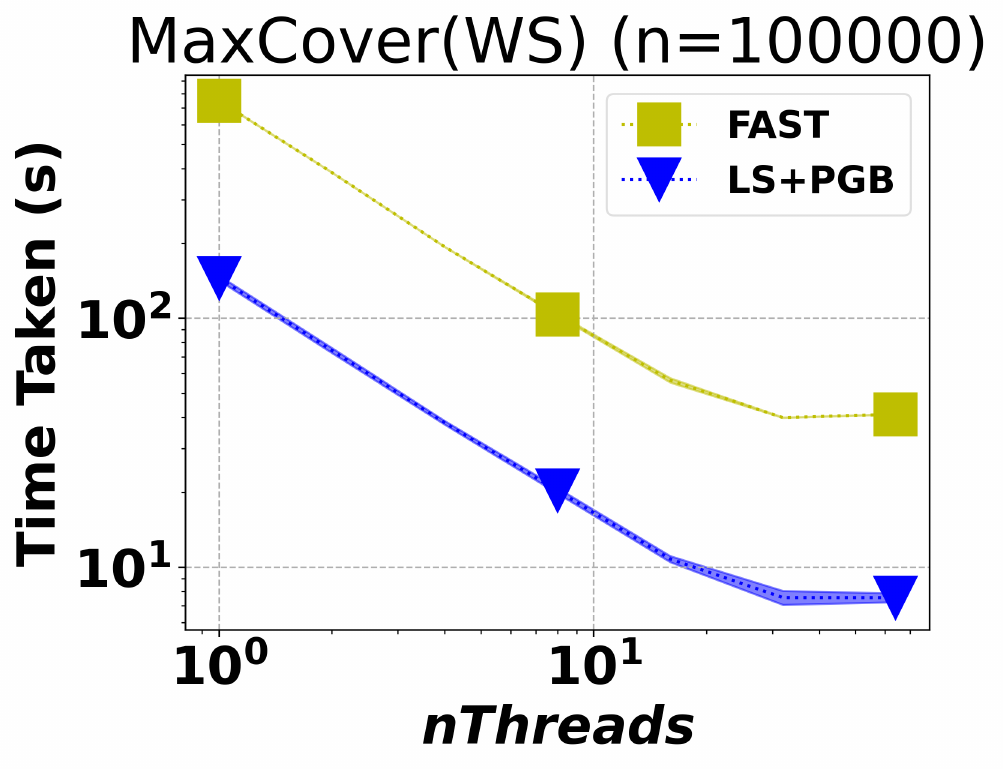}
  }
  \subfigure[]{
    \includegraphics[width=0.23\textwidth, height=0.11\textheight]{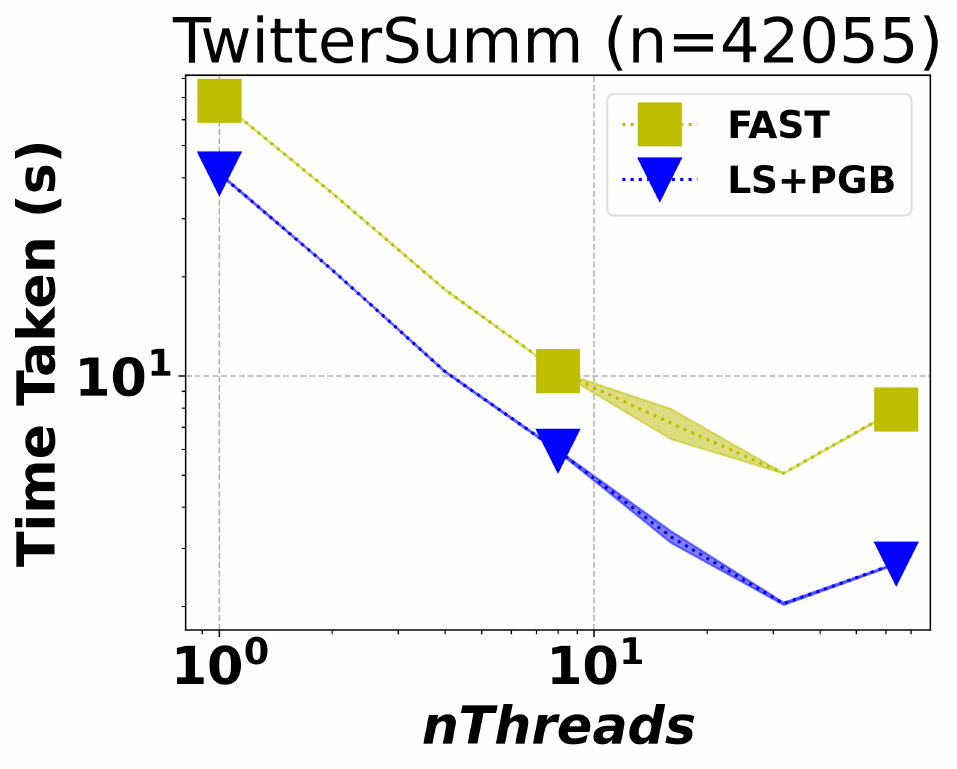}
  }
  \subfigure[]{
    \includegraphics[width=0.23\textwidth, height=0.11\textheight]{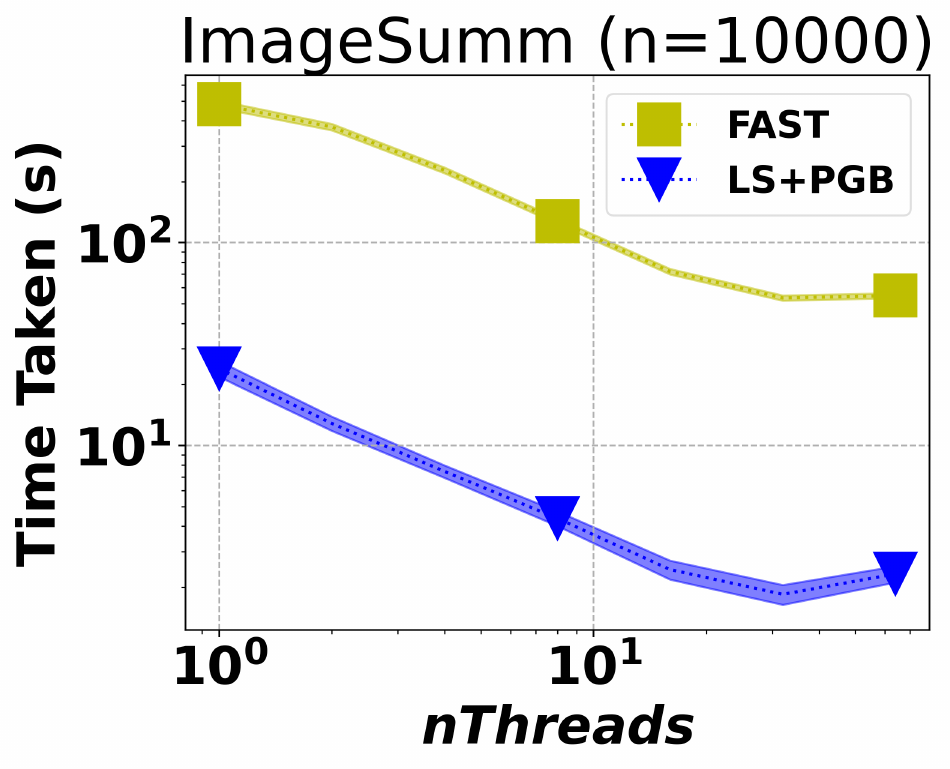}
  }

  \caption{Runtime (s) vs. number of processors. Both axes are log-scaled.} \label{fig:parallel}
\end{figure}  

This experiment set aims to demonstrate the improvement in parallel run time taken by the algorithms to solve the application with increasing number of available threads. We use the \sm: maximum cover on random graphs (MaxCover),
twitter feed summarization (TweetSumm), image summarization (ImageSumm). See
Appendix \ref{subsec:obj} for the definition of the objectives. All experiments were conducted with a constant $k$ value of 1,000 and the number of available threads provided to the algorithms ranged from 1 to 64 threads, doubling the number of threads for each interval in between. The parallel run time of the experiments is measured in seconds. As shown in fig. \ref{fig:parallel}, the scaling behavior of both  \fast and \flsabr are very similar with the number of processors employed, each algorithm exhibiting a linear speedup initially with the number of processors, which
plateaus past a certain number of processors. \flsabr outperforms \fast across all instances with an average speedup of 10 over \fast.

\subsection{Comparison vs adaptive algorithms from \citet{Fahrbach2018}}

\begin{figure}[t]
  \subfigure[]{
    \includegraphics[width=0.23\textwidth, height=0.11\textheight]{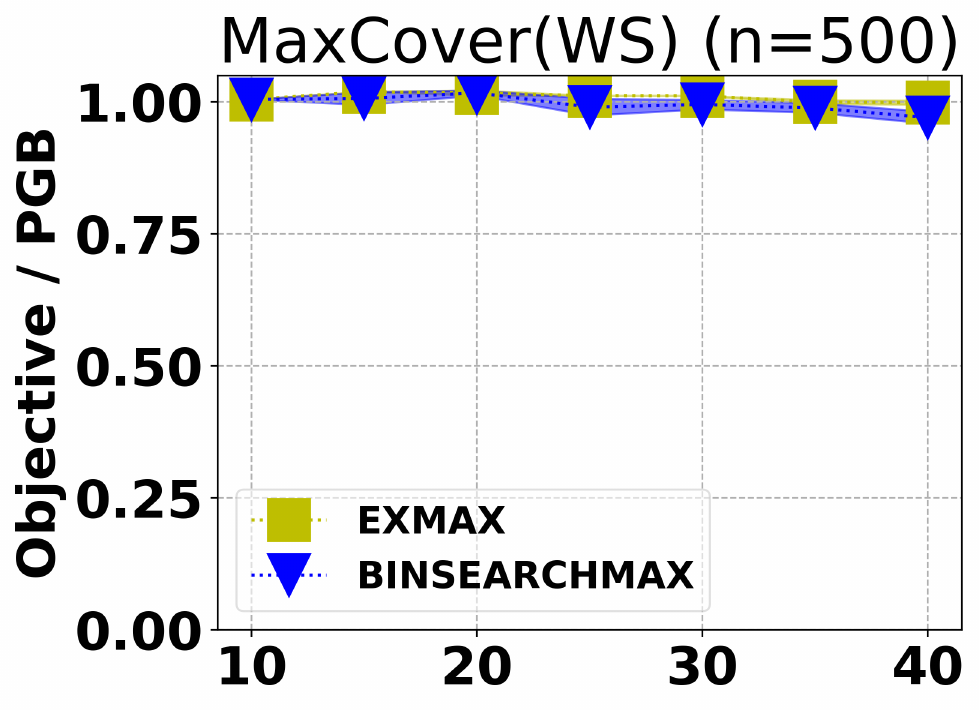}\label{fig:SODA_objA}
  }
  \subfigure[]{
    \includegraphics[width=0.23\textwidth, height=0.11\textheight]{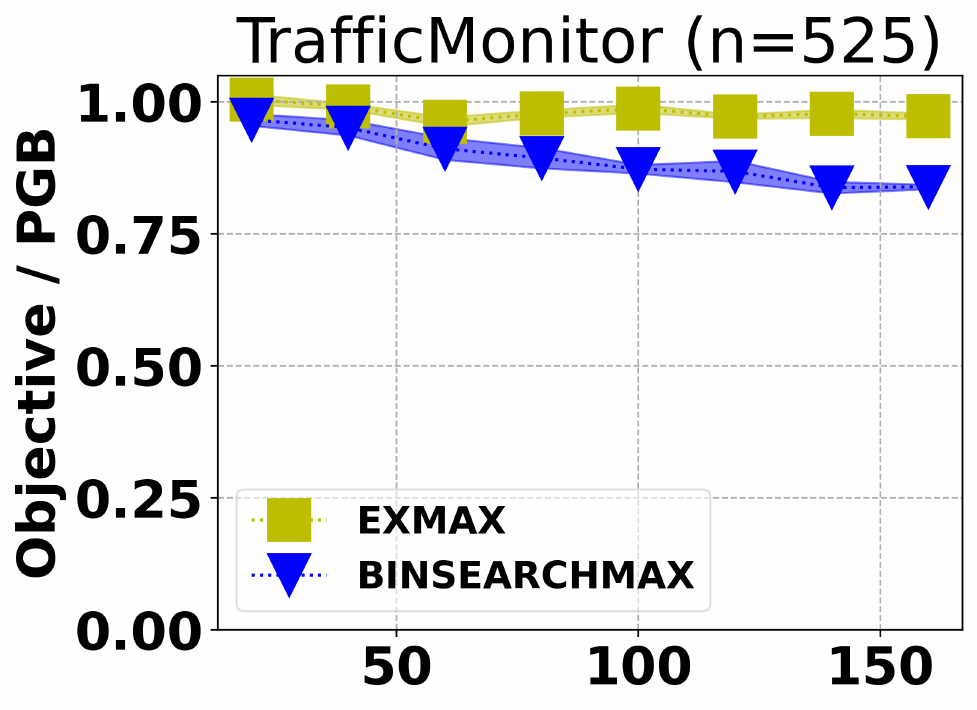}
  }
  \subfigure[]{
    \includegraphics[width=0.23\textwidth, height=0.11\textheight]{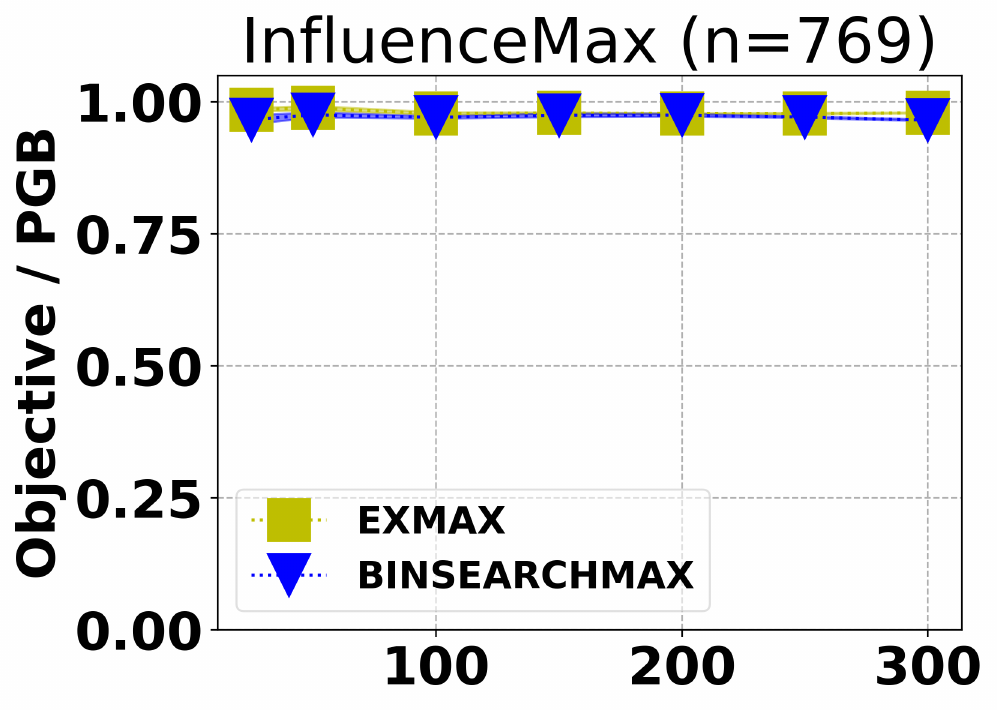} \label{fig:objF}
  }
  \subfigure[]{
    \includegraphics[width=0.23\textwidth, height=0.11\textheight]{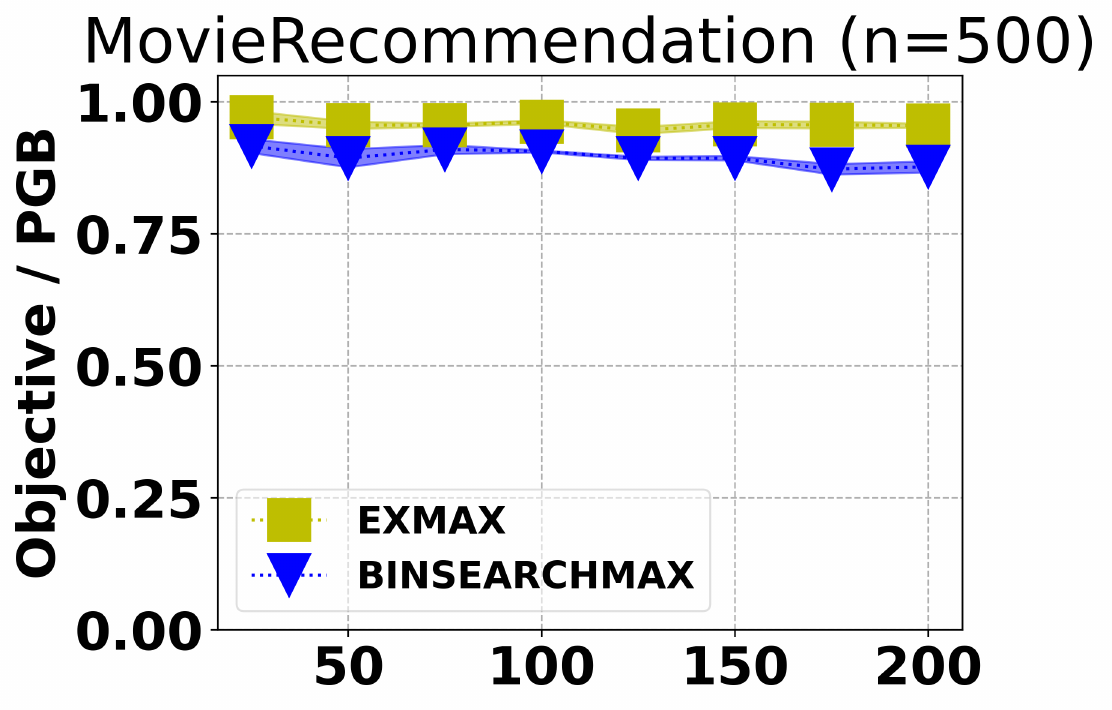}\label{fig:SODA_objD}
  }
  \subfigure[]{
    \includegraphics[width=0.23\textwidth, height=0.11\textheight]{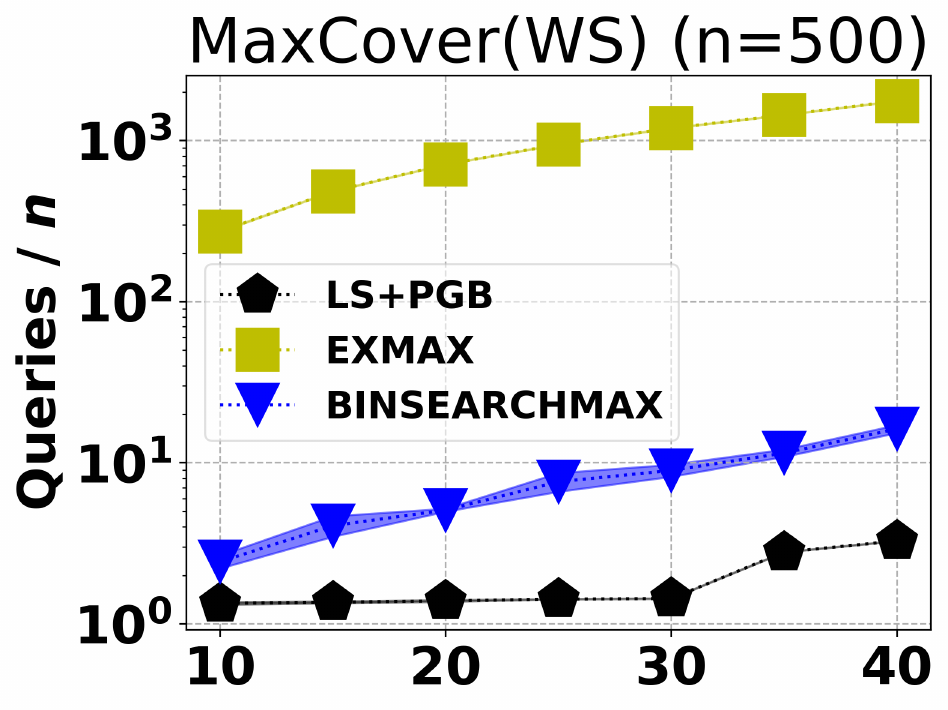}\label{fig:SODA_qryA}
  }
  \subfigure[]{
    \includegraphics[width=0.23\textwidth, height=0.11\textheight]{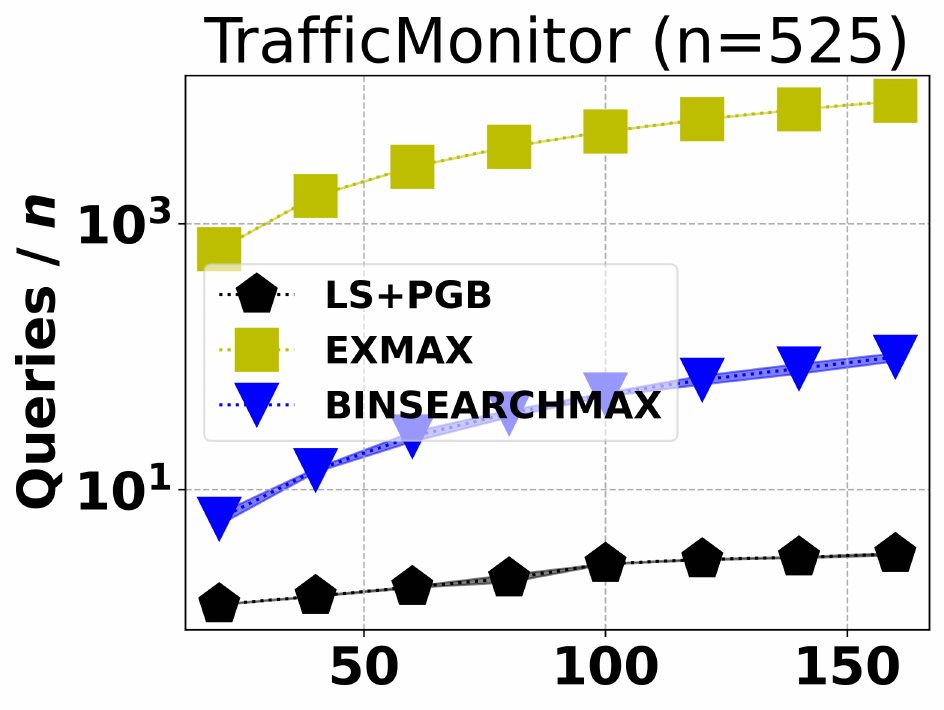}
  }
  \subfigure[]{
    \includegraphics[width=0.23\textwidth, height=0.11\textheight]{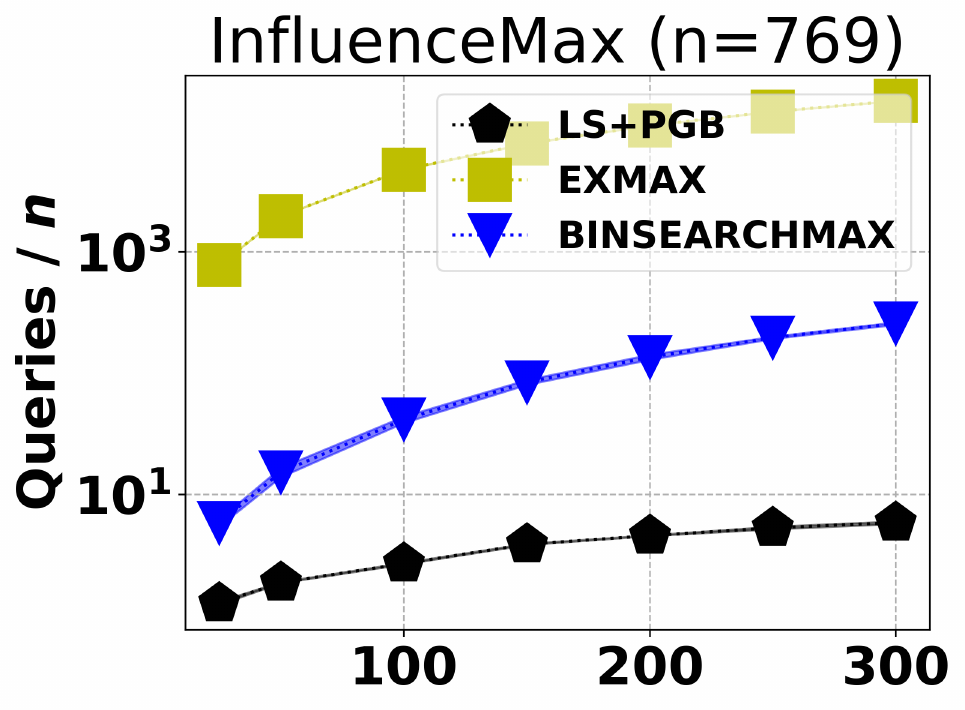} \label{fig:objF}
  }
  \subfigure[]{
    \includegraphics[width=0.23\textwidth, height=0.11\textheight]{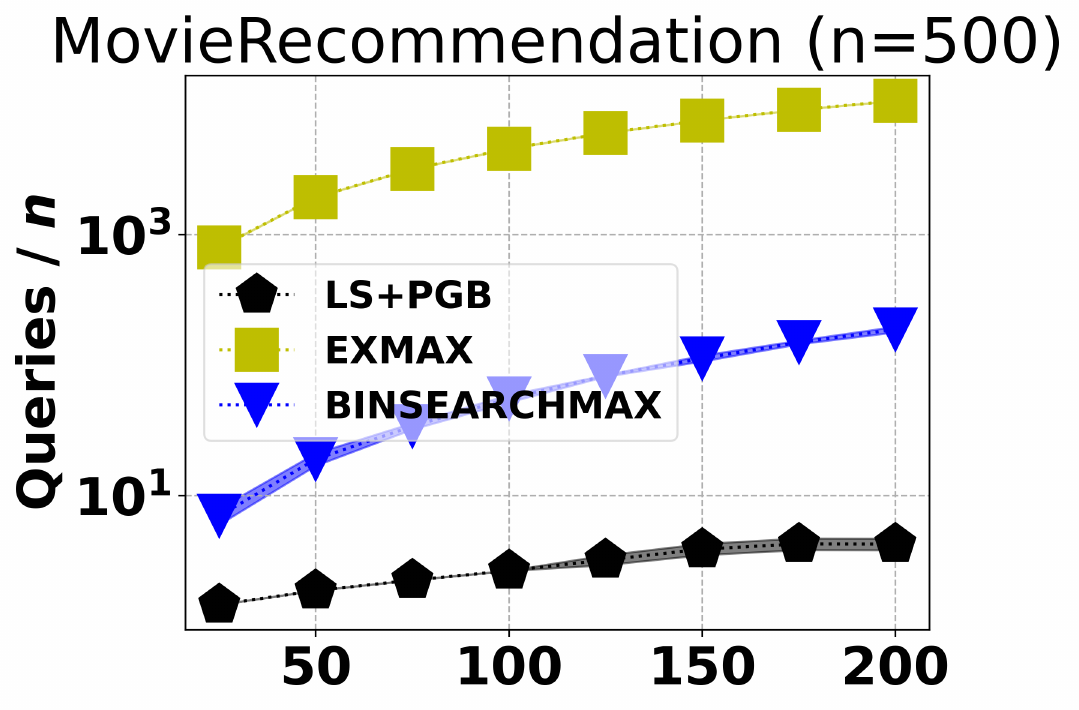}\label{fig:SODA_qryD}
  }
  \subfigure[]{
    \includegraphics[width=0.23\textwidth, height=0.11\textheight]{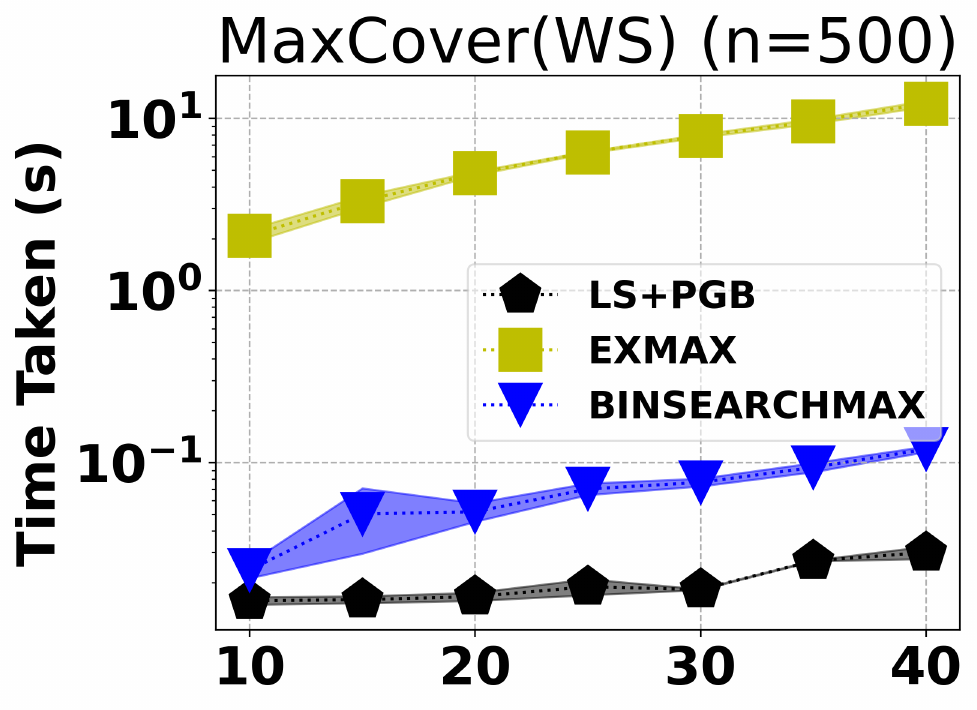}\label{fig:SODA_timeA}
  }
  \subfigure[]{
    \includegraphics[width=0.23\textwidth, height=0.11\textheight]{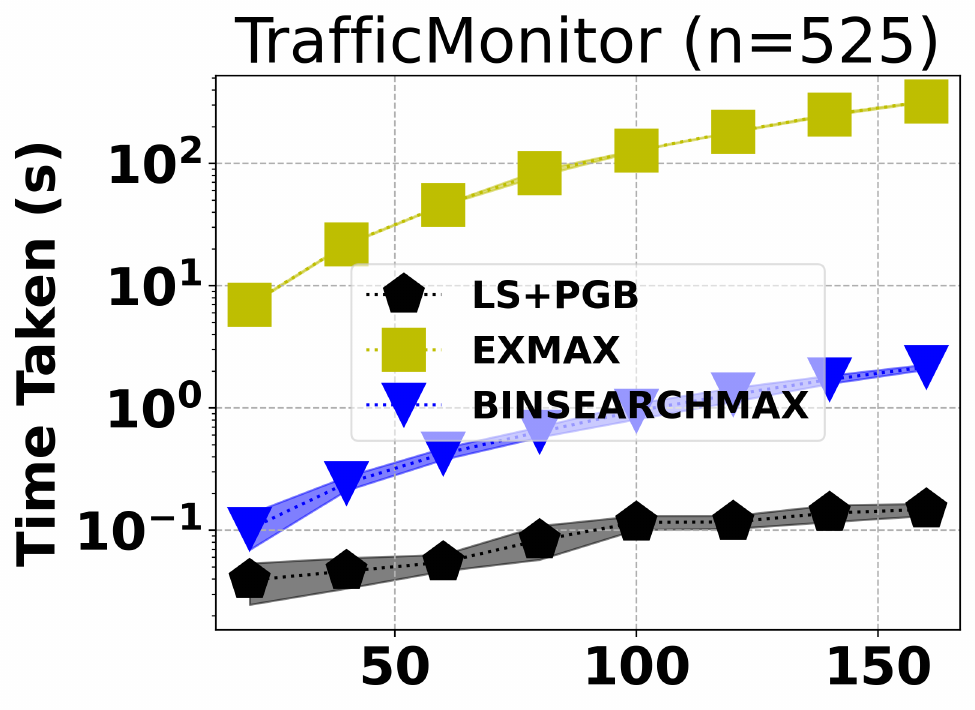}
  }
  \subfigure[]{
    \includegraphics[width=0.23\textwidth, height=0.11\textheight]{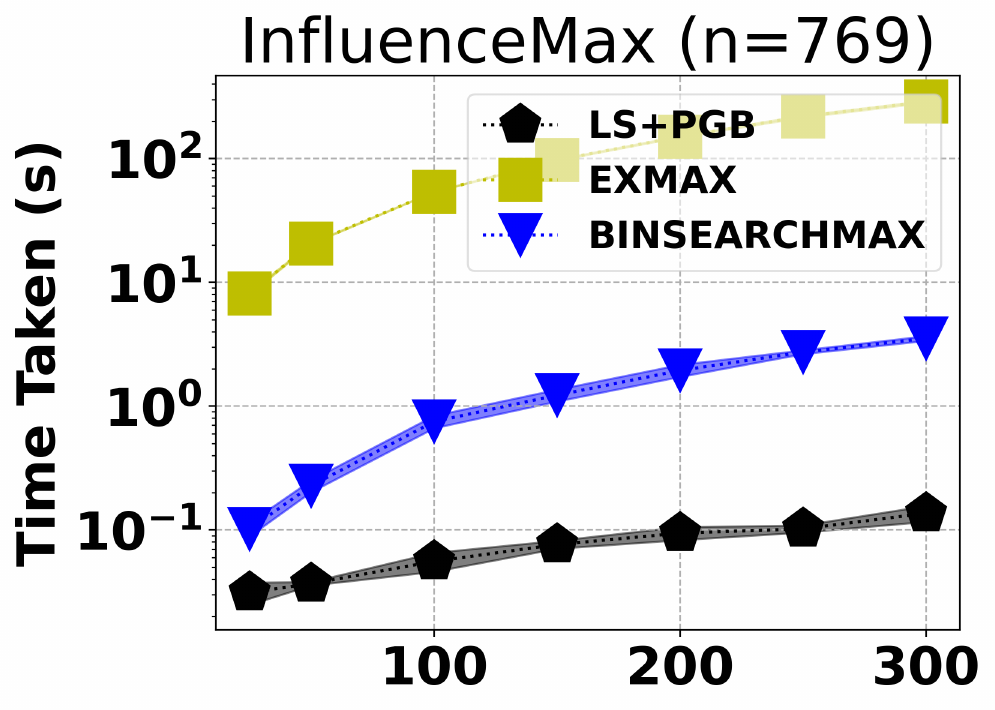} \label{fig:objF}
  }
  \subfigure[]{
    \includegraphics[width=0.23\textwidth, height=0.11\textheight]{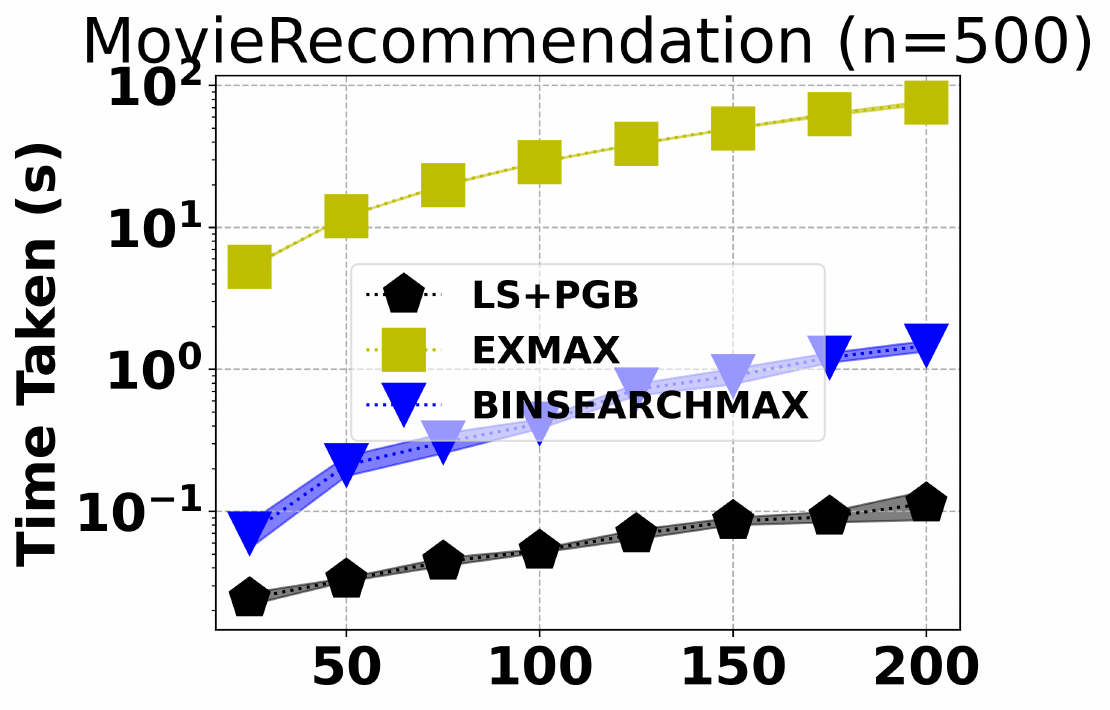}\label{fig:SODA_timeD}
  }
  \caption{Evaluation of adaptive algorithms from \citet{Fahrbach2018} on MaxCover(WS), TrafficMonitor, InfluenceMax and MovieRecommendation in terms of objective value normalized by the standard greedy value (Figure \ref{fig:SODA_objA} -  \ref{fig:SODA_objD}), total number of queries (Figure \ref{fig:SODA_qryA} -  \ref{fig:SODA_qryD}) and the time required by each algorithm (Figure \ref{fig:SODA_timeA} -  \ref{fig:SODA_timeD})} \label{fig:query}
\end{figure} 

 Since \exm and the \bsm algorithm of \citet{Fahrbach2018} has better theoretical guarantee than \fast, we ran additional experiments comparing against both the \exm and the \bsm algorithm, as implemented by the authors of \fast. In this highly optimized implementation, a single query per processor is used in each adaptive round instead of the number of queries required theoretically as description of the implementation in \citet{Breuer2019}. The evaluation was performed on MaxCover(WS), TrafficMonitor, InfluenceMax and MovieRecommendation in terms of objective value normalized by the standard greedy value (Figure \ref{fig:SODA_objA} -  \ref{fig:SODA_objD}), total number of queries (Figure \ref{fig:SODA_qryA} -  \ref{fig:SODA_qryD}) and the time required by each algorithm. In summary, our algorithm \flsabr is faster by roughly an order of magnitude and uses roughly an order of magnitude fewer queries over \bsm.


\end{document}